\DeclareMathOperator*{\argmin}{arg\,min}
\newtheorem{hypothesis}{Hypothesis}[section]
\newenvironment{support}{\par\textsc{Support.} \normalfont\ignorespaces}{\par}
\newtheorem{fact}{Fact}[section]
\newcommand{\eldp}{$\varepsilon$-LDP\xspace}
\newcommand{\lossmod}{\mathcal{L}_{\text{mod}}}
\newcommand{\mechanism}{\mathcal{M}}
\newcommand{\indomain}{\mathcal{D}}
\newcommand{\outdomain}{\tilde{\mathcal{D}}}
\newcommand{\loss}{\mathcal{L}}
\newcommand{\trans}{\mathcal{T}}
\begin{document}

\title[Optimal Piecewise-based Mechanism under LDP]{Optimal Piecewise-based Mechanism for Collecting Bounded Numerical Data under Local Differential Privacy}

\author{Ye Zheng}
\orcid{0000-0003-0623-9613}
\affiliation{%
  \institution{Rochester Institute of Technology}
  \city{} 
  \state{}
  \country{}}
\email{ye.zheng@mail.rit.edu}

\author{Sumita Mishra}
\orcid{0000-0002-7093-631X}
\affiliation{
  \institution{Rochester Institute of Technology}
  \city{}
  \state{}
  \country{}}
\email{sumita.mishra@rit.edu}

\author{Yidan Hu}
\orcid{0000-0002-9443-8411}
\affiliation{
  \institution{Rochester Institute of Technology}
  \city{}
  \state{}
  \country{}}
\email{yidan.hu@rit.edu}


\begin{abstract}
    Numerical data with bounded domains is a common data type in personal devices, such as wearable sensors.
    While the collection of such data is essential for third-party platforms, it raises significant privacy concerns.
    Local differential privacy (LDP) has been shown as a framework providing provable individual privacy,
    even when the third-party platform is untrusted. 
    For numerical data with bounded domains, existing state-of-the-art LDP mechanisms are piecewise-based mechanisms, which are not optimal, leading to reduced data utility.
    
    This paper investigates the optimal design of piecewise-based mechanisms to maximize data utility under LDP.
    We demonstrate that existing piecewise-based mechanisms are heuristic instances of the $3$-piecewise mechanism,
    which is far from enough to study optimality. 
    We generalize the $3$-piecewise mechanism to its most general form, i.e. $m$-piecewise mechanism with no pre-defined form of each piece. 
    Under this form, we derive the closed-form optimal mechanism by combining analytical proofs and off-the-shelf optimization solvers.
    Next, we extend the generalized piecewise-based mechanism to the circular domain (along with the classical domain), defined on a cyclic range
    where the distance between the two endpoints is zero. By incorporating this property, we design the optimal mechanism for the
    circular domain, achieving significantly improved data utility compared with existing mechanisms.
    
    Our proposed mechanisms guarantee optimal data utility under LDP among all generalized piecewise-based mechanisms.
    We show that they also achieve optimal data utility in two common applications of LDP: distribution estimation and mean estimation.
    Theoretical analyses and experimental evaluations prove and validate the data utility advantages of our proposed mechanisms.
\end{abstract}

\keywords{local differential privacy, numerical data privacy, bounded domain, circular data}


\maketitle

\section{Introduction}

Numerical data with bounded domains is a fundamental data type in personal devices.
These bounded domains can be categorized into two types: 
linear ranges, such as sensor readings in $[0,1)$, referred to as \emph{classical domain}; 
and cyclic ranges, such as angular measurements in $[0,2\pi)$, referred to as \emph{circular domain}.
These types of data are crucial for third-party platforms to provide personalized services.
However, collecting such data involves particular privacy concerns,
as third-party collectors are potentially untrusted and the data often contain sensitive information. 
Simple anonymization techniques~\cite{DBLP:conf/infocom/YangFX13,DBLP:conf/infocom/NiuLZCL14} 
have been proven insufficient to prevent privacy leakage~\cite{enwiki:1187739761,DBLP:conf/sp/NarayananS08,DBLP:conf/sp/YuFGXZ07}.
Therefore, a provable privacy guarantee is necessary when collecting these sensitive data.

\emph{Local differential privacy} (LDP) serves as a de facto standard,
providing an input-independent formal guarantee regarding the difficulty of inferring sensitive data.
Through the LDP mechanism, sensitive data are randomly perturbed before being sent to an untrusted collector.
The randomization ensures a sufficient level of indistinguishability (indicated by the privacy parameter $\varepsilon$).
Consequently, any observation over the randomized data is essentially powerless to infer the sensitive data.
Laplace Mechanism~\cite{DBLP:conf/tcc/DworkMNS06} is a classical LDP mechanism for numerical data privacy.
It adds random noise, drawn from a Laplace distribution determined by $\varepsilon$, to the sensitive data.
However, the unbounded noise of the Laplace mechanism makes it unsuitable for bounded domains.

State-of-the-art LDP mechanisms for numerical data with bound\-ed domains are \emph{piecewise-based mechanisms}~\cite{DBLP:conf/icde/WangXYZHSS019,DBLP:conf/sigmod/Li0LLS20,ma10430379}.
They are widely used as building blocks
to provide provable privacy guarantees in various scenarios, such as in sensors networks and federated learning.
Piecewise-based mechanisms randomize the sensitive data to a value sampled from a carefully designed piecewise probability distribution.
Existing instantiations use different pieces and probabilities,
but all are designed for classical domains.
Their applicability to other types of bounded domains, e.g. circular domain of angular sensors that commonly appear in personal devices,
is unexplored.

\emph{Data utility} is the most crucial metric for LDP mechanisms, typically measured by the distance between the randomized data and the sensitive data.
While the privacy level is theoretically guaranteed by the privacy parameter $\varepsilon$,
a mechanism with better data utility allows for more accurate analysis. 
The data utility of a piecewise-based mechanism is determined by the distance metric, pieces, and their respective probabilities.
Unfortunately, none of the existing instantiations of piecewise-based mechanisms
are shown to be optimal in terms of data utility, 
indicating potential for improving analysis accuracy without compromising privacy. 
These situations highlight the need for optimal piecewise-based mechanisms.

The optimality of piecewise-based mechanisms remains a challenging problem.
We will see that the existing instantiations are heuristic forms of the $3$-piecewise mechanism (TPM).
As a special case with $3$ pieces and pre-defined forms of pieces,
TPM is far from enough to study the optimality of piecewise-based mechanisms.
\ From the evidence of the staircase Laplace mechanism~\cite{DBLP:conf/isit/GengV14} for unbounded numerical data,
the asymptotically optimal mechanism has a staircase (multiple pieces) form.
For categorical data, a staircase Randomized Response mechanism (SRR)~\cite{DBLP:conf/ccs/WangH0QH22} improves data utility in location collection compared to the general RR mechanism.
Numerous indicators suggest that increasing the variety of probabilities in the data domain, i.e. using more pieces, can improve data utility.
\ In light of these examples, a fundamental question for piecewise-based mechanisms is:
\emph{what is the optimal instantiation of piecewise-based mechanism?}
In the design of piecewise-based mechanisms, the number of pieces, their probabilities and sizes can be arbitrary.
Finding the optimal instantiation within such a large design space is challenging,
as it requires optimizing the number of pieces, their probabilities and sizes simultaneously.

This paper studies the optimality of piecewise-based mechan\-ism in its most general form.\footnote{
    This means that we consider all possible forms of piecewise distributions on a bounded domain, 
    ensuring the most comprehensive generalization.
}
We extend TPM into a generalized piecewise-based mechanism (GPM) that is $m$-piece, 
with each piece having no pre-defined form.
Under GPM, we formulate an optimization problem to minimize the distance between the sensitive and 
randomized data.
By combining the solutions of the optimization problem with analytical proofs, 
we derive the closed-form optimal GPM for classical domains.
\ For circular domains, where distance metrics have a unique property (e.g. the distance between $0$ and $2\pi$ is zero),
we incorporate this property into mechanism design and link the solving of the optimal mechanism to problems in the classical domain.
Table~\ref{tab:existing} summarizes the key features of this paper in comparison with existing works. 
Particularly, our contributions are as follows:
\begin{itemize}
    \item \emph{(Solving framework)} To the best of our knowledge, this is the first work to study the closed-form optimal piecewise-based mechanism under its most general form.
    We propose a framework that integrates analytical proofs with off-the-shelf optimization solvers to derive the closed-form optimal mechanism.
    This approach establishes a feasible foundation for achieving optimal data utility under LDP for numerical data with bounded domains.
    \item \emph{(Closed-form instantiations)} We provide closed-form optimal mechanisms for the classical domain and the circular domain.
    As alternatives to existing mechanisms, they can be directly used as
    building blocks in sensor networks and federated learning, etc,
    while guaranteeing optimal data utility among all piecewise-based mechanisms under LDP.
    \item \emph{(Theoretical and experimental evaluations)} We provide theoretical analyses of data utility and experimental evaluations on 
    two common applications of LDP: distribution and mean estimation.
    The results prove and validate our mechanisms' advantages over existing mechanisms.
    The codes are available at \url{https://github.com/ZhengYeah/Optimal-GPM}.
\end{itemize}

\textbf{Structure.} The main part of this paper is organized as follows:
After the preliminaries, we present the optimal piecewise-based mechanism for the classical domain in Section~\ref{sec:optimal}.
Section~\ref{sec:cyclic_data} focuses on the circular domain and derives its optimal mechanism.
Following this, Section~\ref{sec:estimation} discusses the optimality when applying to two common tasks: distribution and mean estimation.

\section{Preliminaries} \label{sec:preliminaries}

This section formulates the problem and the concept of local differential privacy (LDP).
We present existing instantiations of TPM and their limitations, which motivate our proposed 
optimal generalized piecewise-based mechanism (OGPM).

\subsection{Problem Formulation} 

We consider a typical data collection schema that consists of a set of \emph{users} and one \emph{collector}.
Each user has a numerical sensitive data $x_i \in \indomain$, where $\indomain$ is a continuous and bounded domain.
The collector needs to collect data from users for statistical estimations, such as the mean value and distribution of the data.

However, the collector is untrusted and may attempt to infer users' sensitive data.
To protect privacy, each user locally randomizes their sensitive data using a privacy mechanism $\mechanism: \indomain \to \outdomain$,
then sends $y_i=\mechanism(x_i)$ to the collector.

We seek to design an optimal $\mathcal{M}$ that maximizes the data utility by minimizing the distance between 
the sensitive data $x_i$ and the reported data $y_i$, while ensuring
$\varepsilon$-LDP (Definition~\ref{def:edldp}).

\subsection{Local Differential Privacy}

\begin{definition}[$\varepsilon$-LDP~\cite{DBLP:journals/corr/DuchiWJ16}]  \label{def:edldp}
    A randomization mechanism $\mathcal{M}: \indomain \to \outdomain$ satisfies $\varepsilon$-LDP,
    if for two arbitrary inputs $x_1$ and $x_2$, the probability ratio of outputting the 
    same $y$ is bounded:
    $$
    \forall x_1, x_2 \in \indomain, \forall y \in \outdomain: \frac{\Pr[\mathcal{M}(x_1) = y]}{\Pr [\mathcal{M}(x_2) = y]} \leq \exp(\varepsilon).
    $$
\end{definition} 

If $\mechanism(x)$ is continuous, the probability $\Pr[\cdot]$ is replaced by probability density function ($pdf$).
Intuitively,
\eldp represents the difficulty of distinguishing between $x_1$ and $x_2$ given $y$.
A lower privacy parameter $\varepsilon \in [0, +\infty)$ means higher privacy. 
\ For example, $\varepsilon = 0$ requires that $\mathcal{M}$ maps two arbitrary 
inputs to any output $y$ with the same probability, 
thus $y$ contains no distribution information about $x$, 
making any hypothesis-testing method to infer the sensitive $x$ powerless even with known $\mathcal{M}$.

\subsection{Piecewise-based Mechanisms} \label{subsec:piecewise}

\begin{table}[t]
    \begin{center}
        \caption{OGPM vs existing instantiations of TPM.}\label{tab:existing}
        \resizebox*{\linewidth}{!}{
            \begin{threeparttable}
                \begin{tabular}{lrrrr}
                    \toprule
                    & \textbf{Domain} & \textbf{Optimality} & \textbf{Closed form} & \textbf{Estimation} \\
                    \midrule
                    PM~\cite{DBLP:conf/icde/WangXYZHSS019} & \multirow{3}{*}{Classical} & No & Yes & Mean \\
                    SW~\cite{DBLP:conf/sigmod/Li0LLS20} & & No & Yes & Distribution \\
                    PTT~\cite{ma10430379} & & Partly* & No & Mean \\
                    \midrule
                    \makecell[l]{This paper \\ (OGPM)} & \makecell[r]{Classical \\ \& circular} & Yes & Yes & \makecell[r]{Mean \& \\distribution}\\
                    \bottomrule
                \end{tabular}
                \begin{tablenotes}
                    \item{\footnotesize * Proved the existence of the optimal under TPM, but did not give closed-form instantiations.
                    Appendix~\ref{appendix:ptt} provides detailed discussion.}
                \end{tablenotes}
            \end{threeparttable}
        }
    \end{center}
\end{table}

When the input domain $\indomain$ is both continuous and bounded, 
the state-of-the-art mechanisms to achieve $\varepsilon$-LDP are piecewise-based mechanisms.
We summarize these mechanisms as heuristic instances of the following definition.

\begin{definition} \label{def:piecewise}
    $3$-piecewise mechanism (TPM) $\mathcal{M}: \indomain \to \outdomain_{\varepsilon}$ is a family of probability density functions that,
    given input $x\in \indomain$, outputs $y\in \outdomain$ according to
    \begin{equation*}
        pdf[\mathcal{M}(x) = y] = 
        \begin{cases}
            p_{\varepsilon} & \text{if} \ y \in [l_{x,\varepsilon}, r_{x,\varepsilon}], \\ 
            p_{\varepsilon} / \exp{(\varepsilon)} & \text{if } y \in \outdomain_{\varepsilon} \, \backslash \, [l_{x,\varepsilon}, r_{x,\varepsilon}],
        \end{cases}
    \end{equation*}
    where $p_{\varepsilon}$ is a variable determined solely by $\varepsilon$,\footnote{
        Otherwise, if $p_{\varepsilon}$ varies with $x$,
        it violates the \eldp constraint because the probability ratio outputting the same $y$
        from $x_1$ and $x_2$ is not bounded by $\exp(\varepsilon)$.
    }
    while $l_{x,\varepsilon}$ and $r_{x,\varepsilon}$ depend on $x$ and $\varepsilon$.
    The output domain $\outdomain_{\varepsilon} \supset \indomain$ is an enlarged domain depending on $\varepsilon$.
\end{definition}

TPM samples the output $y$ for each $x$ from a piecewise distribution. 
This sampling is with a higher probability $p_{\varepsilon}$ 
within $[l_{x,\varepsilon},r_{x,\varepsilon}]$
and a lower probability $p_{\varepsilon} / \exp(\varepsilon)$ within the remaining 
two pieces $\outdomain \, \backslash\, [l_{x,\varepsilon},r_{x,\varepsilon}]$, 
satisfying the $\varepsilon$-LDP constraint.

\textbf{Instantiations.}
In TPM, the parameters are the central interval $[l_{x,\varepsilon}, r_{x,\varepsilon}]$, its probability $p_{\varepsilon}$, 
and the output domain $\outdomain_{\varepsilon}$.
Different instantiations of those parameters yield different existing 
mechanisms~\cite{DBLP:conf/icde/WangXYZHSS019,DBLP:conf/sigmod/Li0LLS20,ma10430379}.
For example, PM~\cite{DBLP:conf/icde/WangXYZHSS019} is the first instantiation of TPM,
defined on $[-1, 1] \to [-C_{\varepsilon}, C_{\varepsilon}]$,
where $C_{\varepsilon}$ is a variable determined solely by $\varepsilon$ and 
the central interval has a fixed length $r_{x,\varepsilon} - l_{x,\varepsilon} = C_{\varepsilon} - 1$.

\textbf{Data utility metric.}
To quantify the data utility of different instantiations,
we consider the general $L_p$-similar error metric (i.e. $|y-x|^p$) as a loss function $\mathcal{L}:\mathbb{R} \to \mathbb{R}$.
Thus, the error is:
\begin{equation} \label{equ:utility_1}
    Err(x) = \int_{\outdomain} \mathcal{L}(y, x)\mathcal{P}_{\mathcal{M}(x)} \mathrm{d}y,
\end{equation}
where $\mathcal{P}_{\mathcal{M}(x)}$ is the pdf defined by $\mechanism(x)$.
$Err(x)$ illustrates the expected error when applying $\mechanism$ on $x$ under the loss function $\mathcal{L}$.
For example, $\loss(y, x) \coloneq |y - x|$ is the absolute error, and $\mathcal{L}(y, x) \coloneq (y - x)^2$ is the square error.
Then $Err(x)$ corresponds to the mean absolute error (MAE) and mean square error (MSE)~\cite{DBLP:conf/icde/WangXYZHSS019,ma10430379}, respectively.
Lower $Err(x)$ indicates better data utility.

\textbf{Limitations of TPM.} 
Existing instantiations of TPM have the following limitations.
\begin{itemize}
    \item \textit{Not optimal in data utility.}
    None of the existing instantiations provided closed forms for the optimal data utility. 
    Meanwhile, they also assume an invariable length $r_{x,\varepsilon} - l_{x,\varepsilon}$ of the central piece for all $x$,
    and symmetric probability $p_{\varepsilon} / \exp(\varepsilon)$ for the remaining two pieces.
    However, a general-form piecewise-based mechanism can have more pieces, 
    unfixed piece lengths, and asymmetric probabilities, potentially improving data utility.
    \item \textit{Limited applicability.} 
    Existing instantiations of TPM have enlarged and unfixed output domains $\outdomain_{\varepsilon} \supset \indomain$.
    Enlarged output domain incurs applicability issues in scenarios where the collector requires 
    the output domain to align with the input domain (i.e. $\outdomain = \indomain$),\footnote{
        While post-processing the output by truncating it to $\indomain$ is possible, this approach may still result in low data utility. 
        Sections~\ref{subsec:comparison_with_PM_SW_original} and \ref{subsec:staircase_truncated} provide comparisons with mechanisms that include truncation.
    } 
    such as in common sensor-based services.
    \end{itemize}

\section{Generalized Piecewise-based Mechanism} \label{sec:optimal}

This section generalizes TPM to its most general form (GPM). 
We introduce a framework for deriving the closed-form optimal GPM for the classical domain.

\begin{definition} \label{def:general_piecewise}
    Generalized $m$-piecewise mechanism ($m$-GPM)
    $\mathcal{M}: \indomain \to \outdomain$ is a family of probability density functions
    that, given input $x\in \indomain$, outputs $y\in\outdomain$ according to
    \begin{align*}
        pdf[\mathcal{M}(x) = y] &= 
        \begin{dcases}
            p_{1,\varepsilon} & \text{if} \ y \in [l_{1,x,\varepsilon}, r_{1,x,\varepsilon}), \\ 
            \ \vdots  & \vdots \quad \quad \ \ \vdots \\
            p_{m,\varepsilon} & \text{if} \ y \in [l_{m,x,\varepsilon}, r_{m,x,\varepsilon}), \\
        \end{dcases} \\[0.5em]
        \forall i, j \in [m], &\max \frac{p_{i,\varepsilon}}{p_{j,\varepsilon}} \leq \exp(\varepsilon),
    \end{align*}
    where $[m] \coloneq \{1, ..., m\}$.
    Each probability $p_{i,\varepsilon}$ depends solely on privacy parameter $\varepsilon$,
    while interval boundaries $l_{i,x,\varepsilon}$ and $r_{i,x,\varepsilon}$ depend on both $x$ and $\varepsilon$.
\end{definition}
An $m$-GPM partitions its output domain into $m$ pieces, assigning probability $p_{i,\varepsilon}$ to each piece 
$[l_{i,x,\varepsilon}, r_{i,x,\varepsilon})$. 
The probabilities $p_{i,\varepsilon}$ are independent of the input $x$,
and their ratios must be bounded by $\exp(\varepsilon)$ to satisfy $\varepsilon$-LDP.
For notational clarity, we omit subscripts $x$ and $\varepsilon$ when their context is clear. 
Additionally, $\mathcal{M}$ must satisfy standard probability requirements: non-negativity ($p_i \geq 0$), 
continuity ($r_i = l_{i + 1}$), and normalization. 
TPM is a special case of GPM with $m=3$.


Finding the optimal GPM requires determining both the optimal number of pieces $m$ and the corresponding $p_{i,\varepsilon}, l_{i,x,\varepsilon}, r_{i,x,\varepsilon}$.
Due to the infinite possibilities for $m \in \mathbb{N}^+$ and the resulting $3m$ variables, analytical solutions are computationally intractable.
We therefore propose a framework that combines analytical proofs with off-the-shelf optimization solvers.


\subsection{Framework for Deriving the Optimal GPM} \label{subsec:framework}

To derive the \emph{closed-form} optimal GPM, we
(i) formulate finding the optimal $m$-GPM as an optimization problem;
(ii) determine the optimal $m$ based on the solutions of the optimization problem;
(iii) derive the optimal closed-form expression (among all $m$-GPM).

\textbf{Optimal \bm{$m$}-GPM.}
To find the optimal GPM instantiation with $m$ pieces, we need to determine the variables $p_i$, $l_i$, and $r_i$.
Any feasible assignment of these variables yields a mechanism $\mechanism$ whose utility can be measured by $Err(x)$ from Formula~(\ref{equ:utility_1}).
Finding the optimal $m$-GPM requires solving a min-max optimization problem that minimizes the worst-case error over all possible inputs $x$:\footnote{
    Worst-case error is the most common utility metric in mechanism design~\cite{DBLP:conf/icde/WangXYZHSS019,ma10430379}.
    We can also optimize the error at other specific points, see Section~\ref{sec:optimize_x}.
}
\begin{equation} \label{equ:min-max}
    \begin{gathered}
        \min_{p_i, l_{i,x}, r_{i,x}} \max_{x} \int_{\outdomain} \loss(y, x)\mathcal{P}_{\mechanism(x)}\mathrm{d}y, \\
        \text{s.t.} \ \mechanism \ \text{satisfies Definition~\ref{def:general_piecewise}}. 
    \end{gathered}
\end{equation}
This formulation yields the optimal $x$-independent $p_i$ values and the corresponding $l_{i,x}, r_{i,x}$ for the worst-case input $x$.
However, since these $l_{i,x}, r_{i,x}$ may not be optimal for other inputs, we need a second optimization step using the obtained optimal $p_i$:
\begin{equation} \label{equ:lr_i}
    \begin{gathered}
        \min_{l_{i,x}, r_{i,x}} \int_{\outdomain} \loss(y, x)\mathcal{P}_{\mechanism(x)}\mathrm{d}y, \\
        \text{s.t.} \ \mechanism \ \text{satisfies Definition~\ref{def:general_piecewise} with } p_i. 
    \end{gathered}
\end{equation}
Together, these two steps determine the optimal instantiation of $m$-GPM for any given domain mapping $\indomain \to \outdomain$, distance metric $\loss$, piece number $m$, privacy parameter $\varepsilon$, and input $x$. Figure~\ref{fig:solving_flow} illustrates this solving process.

\begin{figure}[t]
    \centering
    \includegraphics[width=0.85\linewidth]{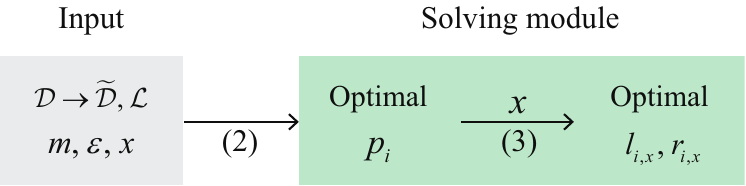}
    \caption{Solving flow for the optimal $m$-GPM.
    Two arrows indicate problems in~(\ref{equ:min-max}) and~(\ref{equ:lr_i}).}
    \label{fig:solving_flow}
\end{figure}

\textbf{Challenges.}
Nonetheless, solving Formulation~(\ref{equ:min-max}) has practical difficulties.
Even if it can be solved,
there is still a gap between the solved optimal $m$-GPM and the closed-form optimal GPM (among all $m$-GPM).
We detail them as follows.
\begin{itemize}
    \item \emph{(Solving difficulty)} Formulation~(\ref{equ:min-max}) is a min-max problem, and the integrand $\loss(y, x)\mathcal{P}_{\mechanism(x)}$ is non-linear.
    It is a non-convex problem whose global optimal hard to solve.
    \item \emph{(Optimal $m$)} The solved optimal is only for $m$-GPM given $m$. It is necessary to find the optimal piece number $m$. 
    \item \emph{(Closed form)} For practical usage, we need closed-form $p_{i}$, $l_{i,x}$ and $r_{i,x}$ (i.e. their relationships with $\varepsilon$ and $x$), 
    rather than their specific values for every $\varepsilon$ and $x$.
\end{itemize}

\textbf{Solutions.} 
We address the challenges and gaps with the following solutions.
\begin{itemize}
    \item Formulation~(\ref{equ:min-max}) can be simplified to two \emph{bilinear optimization} problems that
    can be solved by off-the-shelf solvers. (Section~\ref{subsec:reduced_form})
    \item If the optimal ($m+1$)-GPM is identical to the $m$-GPM,
    then $m$ is the optimal piece number. (Section~\ref{subsec:optimal_piece_number})
    \item Leveraging the above results, the optimal closed-form $p_{i}$, $l_{i,x}$
    and $r_{i,x}$ can be obtained by analytical deduction (for TPM)
    or numerical regression (for any $m$-GPM). (Section~\ref{subsec:closed_form})
\end{itemize}
Following these solutions, we can obtain the closed-form optimal GPM 
(among all $m$) for any $\varepsilon$ given $\indomain \to \outdomain$ and $\loss$.
Before presenting the detailed solutions, we first discuss the conditions under which the obtained GPM is optimal.

\label{anchor:optimality_conditions}
\textbf{Conditions for optimality.}
When discussing optimality, the following aspects should be specified:
(i) the error metric, (ii) the data domain and family of mechanisms, (iii) the strength of the optimality, and (iv) whether post-processing is allowed.
In this paper, the optimality of GPM is defined with respect to:
(i) the worst-case $L_p$-similar error metric, 
(ii) bounded numerical domains $\indomain \to \outdomain$ and mechanisms based on piecewise distributions, 
(iii) minimization of error value (not asymptotic or order-of-magnitude optimality), and 
(iv) without post-processing.
These conditions are widely applicable in practice and literature.
However, varying any of them may lead to different optimality results.
Appendix~\ref{appendix:optimality_discussion} provides a detailed discussion of these conditions and related optimalities.

\subsection{Detailed Solutions} \label{subsec:detailed_answers}

\subsubsection{Solution 1: Simplified Form} \label{subsec:reduced_form}
The min-max problem in Formulation~(\ref{equ:min-max}) can be simplified.
The key observation is that its inner maximization term, $\max_{x}$, has a closed form, i.e. the worst-case error is from the endpoints of $\indomain$.
Lemma~\ref{lemma:endpoint} states this observation.

\begin{lemma} \label{lemma:endpoint}
    Assume $\indomain = [a,b)$, the objective of Formulation~(\ref{equ:min-max}) can be simplified to
    \begin{equation*} 
    \min_{p_i, l_{i,x}, r_{i,x}}\max_{x\in\{a,b\}} \int_{\outdomain} \loss(y, x)\mathcal{P}_{\mechanism(x)}\mathrm{d}y.
    \end{equation*}
\end{lemma}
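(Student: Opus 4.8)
The plan is to fix the input-independent probabilities $p_i$, reduce the statement to a one-dimensional claim about the best error attainable on each input, and then prove that claim by a transport-type construction on the output density.

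\emph{Reduction.} For a fixed admissible tuple $(p_i)$, put $g(x):=\min_{l_{i,x},r_{i,x}}Err(x)$, the least error attainable on input $x$ when the pieces carry probabilities $p_i$; this minimum is attained because the admissible interval configurations (ordered pieces of non-negative length summing to $|\outdomain|$, with $\sum_i p_i(r_i-l_i)=1$) form a compact set on which $Err(x)$ is continuous. Shrinking $Err$ at interior inputs cannot raise the inner $\max_x$ in Formulation~(\ref{equ:min-max}), so an optimiser of~(\ref{equ:min-max}) may be taken to use per-input optimal intervals; the lemma then follows once we show, for every admissible $(p_i)$, that $\sup_{x\in(a,b)}g(x)\le\max\{g(a),g(b)\}$. (The opposite comparison between the two objectives is automatic, since restricting the inner maximum to $\{a,b\}$ can only lower it, using continuity of $Err$ at the right endpoint.)

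\emph{Key step: $g$ is unimodal.} Write $\outdomain=[\tilde a,\tilde b]$ and $\tilde c=(\tilde a+\tilde b)/2$. I claim $g$ is non-increasing on $[a,b]\cap(-\infty,\tilde c]$ and non-decreasing on $[a,b]\cap[\tilde c,\infty)$, so its maximum over $[a,b]$ sits at an endpoint. By the evident left--right mirror symmetry of the argument it suffices to take $\tilde c\le x_1<x_2$ in $[a,b]$ and show $g(x_1)\le g(x_2)$. Let $\mathcal P_2$ be an optimal output density for $x_2$ and $\delta:=x_2-x_1>0$, and build $\mathcal P_1$ on $\outdomain$ by translating $\mathcal P_2$ left by $\delta$ on $[\tilde a,\tilde b-\delta]$, i.e.\ $\mathcal P_1(y)=\mathcal P_2(y+\delta)$, and moving the overhang into the vacated slot, $\mathcal P_1(\tilde b-\delta+s):=\mathcal P_2(\tilde a+s)$ for $s\in[0,\delta]$; this is again piecewise constant with values among $\{p_i\}$ and integrates to $1$. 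Changing variables, the translated bulk contributes $\int_{\tilde a+\delta}^{\tilde b}|y-x_2|^p\mathcal P_2(y)\,dy$ to the error of $\mathcal P_1$ at $x_1$, while the relocated block contributes $\int_0^\delta(\tilde b-x_2+s)^p\mathcal P_2(\tilde a+s)\,ds$; the elementary inequality $\tilde b-x_2+s\le x_2-\tilde a-s$ on $0\le s\le\delta$ --- which is precisely the hypothesis $x_1\ge\tilde c$ --- bounds the latter by $\int_0^\delta(x_2-\tilde a-s)^p\mathcal P_2(\tilde a+s)\,ds=\int_{\tilde a}^{\tilde a+\delta}|y-x_2|^p\mathcal P_2(y)\,dy$. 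Summing the two pieces, the error of $\mathcal P_1$ at $x_1$ is at most $\int_{\tilde a}^{\tilde b}|y-x_2|^p\mathcal P_2(y)\,dy=g(x_2)$, hence $g(x_1)\le g(x_2)$. Informally: the less-extreme input $x_1$ has at least as much room on its tight side as $x_2$, so whatever mass $x_2$ was forced to park far from itself can only land closer when re-used for $x_1$.

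\emph{Expected main obstacle.} The one genuine subtlety is keeping $\mathcal P_1$ inside the $m$-piece family: the cut at $\tilde a+\delta$ can split a piece of $\mathcal P_2$, so the naive $\mathcal P_1$ may have $m+1$ pieces. I plan to remove the surplus piece by a local adjustment --- merging it with an adjacent piece of equal probability when the two values coincide, and otherwise by a limiting argument that perturbs the configuration so that $\tilde a+\delta$ falls on a piece boundary, changing the error by an arbitrarily small amount --- so that the monotonicity holds within $m$-GPM as required. The remaining points (treating $b$ as a limit point of $\indomain$ in the inner maximum, and the continuity/attainment of $g$) are routine, following from continuity of $Err$ in $x$ and compactness of the configuration set.
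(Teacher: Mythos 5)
Your proof is correct in its essentials, but it takes a genuinely different route from the paper's. The paper fixes $l_i,r_i$ as constants, computes the second derivative of $f_i(x)=\int_{l_i}^{r_i}|y-x|^p\,\mathrm{d}y$ case by case to show each $f_i$ is convex in $x$, and then invokes the Bauer maximum principle on the non-negative weighted sum $\sum_i p_if_i(x)$ to place the maximum at the endpoints. Your argument instead works directly with the per-input optimal error $g(x)=\min_{l_{i,x},r_{i,x}}Err(x)$ and proves it is unimodal about the output-domain midpoint via a cyclic-shift transport of the optimal density: this is a more hands-on construction and, importantly, it addresses a subtlety the paper's phrasing glosses over, namely that in the GPM the intervals $l_{i,x},r_{i,x}$ are allowed to vary with $x$, so the quantity whose maximum one really needs to locate is $g$, which is a pointwise minimum of convex functions and need not itself be convex. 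Your transport argument establishes endpoint attainment for $g$ directly, which the convexity argument does not immediately yield. The price you pay is the piece-count issue you flag: the cyclic shift can turn an $m$-piece density into an $(m+1)$-piece one with the two new boundary pieces carrying the same value but sitting at opposite ends, so they cannot be re-merged by adjacency, and your ``perturb a boundary to coincide with $\tilde a+\delta$'' limiting argument is not quite airtight as stated (the boundary may lie far inside a piece, so moving it there is not a small perturbation). A cleaner patch is to note that the shifted density has the same \emph{set} of probability levels and hence the same LDP ratio, so it is a valid $(m{+}1)$-GPM; this bounds $g_{m+1}(x_1)\le g_m(x_2)$, which then needs to be combined with a piece-count stability observation (of the kind Lemma~\ref{lemma:plus_piece} provides) to close the gap for fixed $m$. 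With that loose end tied off, your approach buys a more self-contained and arguably more careful endpoint argument; the paper's approach buys brevity at the cost of implicitly treating $l_{i,x},r_{i,x}$ as $x$-independent.
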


\begin{proof}
    (Sketch) The key of the proof is that each integral on $\outdomain$ is convex function w.r.t $x$.
    Thus, their non-negative weighted sum is also convex.
    According to the Bauer maximum principle~\cite{noauthor_bauer_2024}, 
    the maximum is achieved at the endpoints of $\indomain$,
    i.e. $x = a$ or $b$.\footnote{
        We use $[a,b)$ to denote the domain for consistency with the form of pieces in GPM.
        This is the same as $[a,b]$ in implementation.
    }
    Appendix~\ref{appendix:endpoint} provides the full proof.
\end{proof}


\textbf{Complexity.} 
(i) Lemma~\ref{lemma:endpoint} simplifies Formulation~(\ref{equ:min-max}) to two \emph{bilinear optimization} problems,
i.e. when $x=a$ and $x=b$ respectively.
The integrand $\loss(y, a)\mathcal{P}_{\mechanism(a)}$ includes terms such as $p_i l_i$ and $p_i r_i$,
which involve multiplications of two variables.
This problem can be solved by off-the-shelf solvers such as Gurobi~\cite{noauthor_gurobi_nodate},
which employs stochastic \emph{Branch and Bound}~\cite{DBLP:journals/jgo/GupteADC17} method to handle the bilinear terms.
\ (ii) The number of variables is at most $3m$, which can be efficiently solved for small $m$.
For example, we can obtain the exact optimal for $m \leq 7$ and $\loss = |y-x|$ within $2$ seconds.
Furthermore, solvers generally provide over- and under-approximation for bilinear problems~\cite{nekst-online_what_2016}.
We can obtain solutions with $\leq1\%$ gap from the optimal for $m \leq 19$ within $1$ minute. 
\ (iii) Formulation~(\ref{equ:lr_i}) for solving $l_i$ and $r_i$ has at most $2m$ variables and without $p_i$ terms,
it is a toy-size problem when $m$ is small.

\textbf{Encoding details.} 
We need to encode the problem in Lemma~\ref{lemma:endpoint} to simple mathematical expressions that can be handled by the solver.
If we instantiate $\loss = |y-x|$ and focus on the left endpoint $x = a$, this problem becomes:
\begin{equation*}
    \begin{split}
        &\min_{p_i, l_{i}, r_{i}} \sum_{i=1}^{m} p_i \int_{l_{i}}^{r_{i}} (y-a) \mathrm{d}y \\
        = &\min_{p_i, l_{i}, r_{i}} \sum_{i=1}^{m}\frac{p_i}{2} \left((r_{i} - a)^2 - (l_{i} - a)^2\right).
    \end{split}
\end{equation*}
This problem is a bilinear optimization problem.
The highest-degree term is $p_i \cdot r_{i} \cdot r_{i}$, which can be reformulated as $p_i \cdot t$ with $t = r_{i} \cdot r_{i}$, i.e. multiplication of bilinear terms.
Such problems can be solved by off-the-shelf bilinear solvers~\cite{noauthor_gurobi_nodate}.
The other problem in Formulation~(\ref{equ:lr_i}) can be encoded similarly but is much easier due to constant $p_i$.

\subsubsection{Solution 2: Optimal Piece Number}\label{subsec:optimal_piece_number}

Although we can obtain the optimal $m$-GPM for any $m$ given sufficient time, solving for each $m$ is unnecessary.
The following lemma provides a theoretical basis for capping the optimal number of pieces.

\begin{lemma} \label{lemma:plus_piece}
    For all possible $\varepsilon$ and $x$, if the optimal ($m+1$)-GPM is the same as the $m$-GPM,
    then the optimal piece number is $m$. 
\end{lemma}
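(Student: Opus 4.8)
The plan is to argue by a monotonicity-plus-fixed-point structure. First I would observe that the family of $m$-GPMs is nested inside the family of $(m+1)$-GPMs: any $m$-piecewise distribution is trivially an $(m+1)$-piecewise one, obtained by splitting one piece into two adjacent pieces with equal probability (or inserting a zero-width piece). Since Formulation~(\ref{equ:min-max}) minimizes the worst-case error over the larger feasible set when we go from $m$ to $m+1$, the optimal error value $\mathrm{Err}^*_{m}$ is non-increasing in $m$; that is, $\mathrm{Err}^*_{m+1} \le \mathrm{Err}^*_{m}$ for every $m$. This gives us a monotone sequence bounded below by $0$, so it converges, but convergence alone is not enough — we need a stopping certificate.

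Next I would formalize the hypothesis of the lemma: ``the optimal $(m+1)$-GPM is the same as the optimal $m$-GPM'' should be read as $\mathrm{Err}^*_{m+1} = \mathrm{Err}^*_{m}$, witnessed by an optimal $(m+1)$-GPM that degenerates to an $m$-piecewise mechanism (one piece has zero width, or two adjacent pieces share the same probability so they merge). The claim to prove is then that $\mathrm{Err}^*_{k} = \mathrm{Err}^*_{m}$ for all $k \ge m$, i.e. adding still more pieces cannot help. The key step is an induction: assuming $\mathrm{Err}^*_{m+1} = \mathrm{Err}^*_{m}$, I want to show $\mathrm{Err}^*_{m+2} = \mathrm{Err}^*_{m+1}$. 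The idea is that an optimal $(m+2)$-GPM, by the same endpoint/convexity analysis used throughout (Lemma~\ref{lemma:endpoint} and the first-order optimality conditions that pin down the $l_i, r_i$ given the $p_i$), must have a structure determined by the $p_i$ values, and the optimal $p_i$ profile for $(m+2)$ pieces, when restricted by the LDP ratio constraint and the normalization constraint, cannot do better than the best $(m+1)$-piece profile precisely because the latter is already at a fixed point — any beneficial refinement at level $m+2$ would, by merging back, yield a strictly better $(m+1)$-GPM, contradicting $\mathrm{Err}^*_{m+1} = \mathrm{Err}^*_m$ being optimal at level $m+1$ as well.

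The cleanest way to run this is a contradiction/exchange argument rather than a delicate induction: suppose for some $k > m$ we had $\mathrm{Err}^*_{k} < \mathrm{Err}^*_{m}$. Take the smallest such $k$; then $\mathrm{Err}^*_{k} < \mathrm{Err}^*_{k-1}$. But an optimal $k$-GPM has at least two adjacent pieces that could be merged (or a near-degenerate piece) unless all $k$ probability levels are genuinely distinct and non-adjacent-equal — and here one invokes that the LDP constraint $\max_{i,j} p_i/p_j \le \exp(\varepsilon)$ together with continuity means the extreme probability levels are $p_{\max}$ and $p_{\max}/\exp(\varepsilon)$, so once a mechanism uses a value strictly between them on an interval, a local perturbation toward one of the two extremes (whichever reduces $\int \loss$) is feasible and weakly improving, contradicting optimality unless that interior level occupies zero measure. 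Iterating collapses the $k$-GPM to at most a fixed small number of distinct-probability blocks, and in particular to an $(k-1)$-GPM with the same error, contradicting strict improvement.

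The main obstacle I anticipate is making the ``degenerate-piece / merge'' reduction fully rigorous: one must be careful that the hypothesis ``optimal $(m+1)$-GPM equals the optimal $m$-GPM'' is genuinely inherited at all higher levels and does not merely hold at level $m+1$ by coincidence. The safe route is to lean on the structural results already established — that given the $p_i$, the optimal $l_{i,x}, r_{i,x}$ are determined by Formulation~(\ref{equ:lr_i}), and that the worst case is at an endpoint (Lemma~\ref{lemma:endpoint}) — so that the only remaining degree of freedom across piece numbers is the vector of probability levels, and the LDP constraint forces that vector into a bang-bang (two-level) shape at the optimum; once that is known, ``more pieces'' buys nothing beyond the first time the two-level structure is realizable, which is exactly the fixed point in the hypothesis. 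I would state this two-level/bang-bang claim as the crux lemma and prove the present lemma as a short corollary of it.
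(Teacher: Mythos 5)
Your route is genuinely different from the paper's. The paper's proof introduces a piece-merging operator $\uplus$ and argues via a fixed-point/greedy structure: once the optimal $(m{+}1)$-GPM merges back to the optimal $m$-GPM, any additional piece at levels $m{+}2, m{+}3, \dots$ is treated as ``the same'' as adding a piece to the optimal $m$-GPM and is therefore absorbed by $S_m^{\uplus}$ as well. You instead lean on (i) nestedness of the GPM families, giving a non-increasing $\mathrm{Err}^*_m$, and (ii) a bang-bang/extreme-point claim: for fixed boundaries, the inner problem in the $p_i$ is a linear program over the polytope $\{\sum_i p_i(r_i-l_i)=1,\ p_i \le e^{\varepsilon} p_j \ \forall i,j,\ p_i \ge 0\}$, whose vertices use only the two extreme levels $p_{\max}$ and $p_{\max}/e^{\varepsilon}$. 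That is a more structural argument, and if made rigorous it actually proves the stronger unconditional statement that the merged optimum has at most $3$ blocks (i.e.\ it would settle Hypothesis~\ref{theo:optimal_form} outright), with the lemma's hypothesis playing essentially no role --- a sign you are proving something strictly stronger. By contrast, the paper's argument genuinely uses the hypothesis but quietly relies on an incremental premise --- that the optimal $(m{+}2)$-GPM can be viewed as the optimal $(m{+}1)$-GPM plus one extra piece --- which is not itself justified; your route sidesteps that assumption.

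Two steps your sketch still owes. First, the local perturbation in the bang-bang step, as you phrase it (``nudge $p_i$ toward an extreme''), breaks normalization $\sum_i p_i(r_i-l_i)=1$; the safe formulation is the LP-vertex argument above, not a single-coordinate perturbation. Second, bang-bang alone does not bound the number of pieces: a two-level pdf can alternate high-low-high-low indefinitely and each alternation is a genuine piece. You additionally need that the high-probability support is a single contiguous interval, which follows from the unimodality of $y \mapsto \loss(y,x)$ (any disconnected high-probability region can be improved by sliding mass toward $x$, at fixed $p$). Only with both ingredients does ``collapses to at most $3$ merged blocks'' hold, and only then does your minimality-of-$k$ contradiction close. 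You flag the first gap yourself; the second (contiguity) is unstated and equally necessary.
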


\begin{proof}
    (Sketch) 
    The key insight is that, if the optimal piece number is not $m$, 
    i.e. an additional piece can lower the error, then this additional piece will be captured by the optimal ($m+1$)-GPM.
    \ Therefore, (i) if $m$ is not the optimal piece number, then the optimal ($m+1$)-GPM is different from the optimal $m$-GPM.
    (ii) if $m$ is the optimal piece number, then there is no additional piece can lower the error,
    making the optimal ($m+1$)-GPM is the same as the optimal $m$-GPM.
    \ Now, no additional piece can be captured, 
    hence $m$ is the optimal piece number.
    Appendix~\ref{appendix:plus_piece} provides the full proof. 
\end{proof}

Lemma~\ref{lemma:plus_piece} requires checking the results for every $\varepsilon$ and $x$,
which is challenging as $\varepsilon \in [0, \infty)$ and $x \in \indomain$ are infinite sets.
In practice, we restrict $\varepsilon$ within its generally meaningful domain, e.g. $\varepsilon \in [0, 10)$, and
employ \emph{Monte Carlo random sampling} to generate
a set of random pairs $(\mathcal{E,X}) = \{(\varepsilon_1, x_1), (\varepsilon_2, x_2), \dots, (\varepsilon_n,x_n)\}$.
If Lemma~\ref{lemma:plus_piece} holds for the $n$-size set $(\mathcal{E,X})$, 
the optimality of $m$-GPM is guaranteed with probability $1$ as $n \to \infty$~\cite{DBLP:books/lib/Rubinstein81}.

\begin{figure}[t]
    \centering
    \begin{subfigure}[b]{0.48\linewidth}
        \centering
        \includegraphics[width=0.98\linewidth]{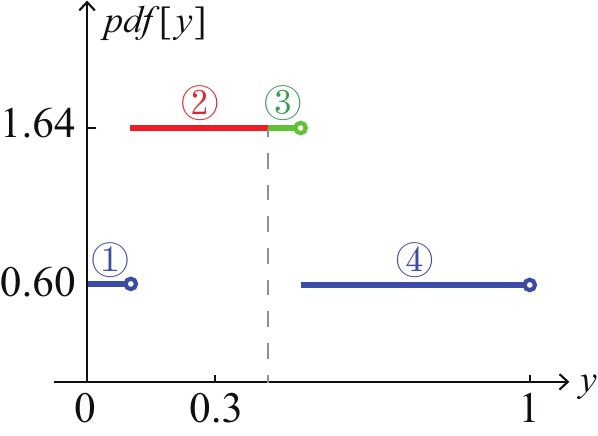}
        \caption{$m = 4$.}
    \end{subfigure}
    \hfill
    \begin{subfigure}[b]{0.48\linewidth}
        \centering
        \includegraphics[width=0.98\linewidth]{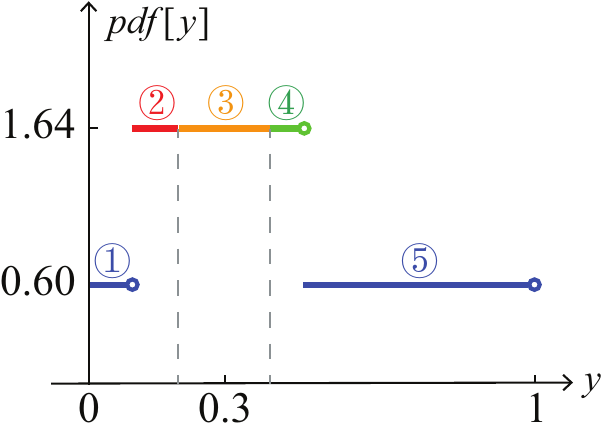}
        \caption{$m = 5$.}
    \end{subfigure}
    \caption{Optimal $4$-GPM and $5$-GPM when $\varepsilon = 1$ and $x = 0.3$.
    They are identical as $m = 3$ after merging redundant pieces.}
    \label{fig:345_piece_example}
\end{figure}

\textbf{Results for Monte Carlo sampling.}
Fixing $\indomain = \outdomain = [0,1)$, for $\loss = |y-x|$ and $\loss = (y-x)^2$,
the optimal $4$-GPM are identical as the optimal $3$-GPM
for random $(\mathcal{E,X})$ with at least $n=10^4$.
These optimal results align with TPM,
but the optimal $p,l$ and $r$ values are different from existing instantiations.

Given the continuity of the objective functions w.r.t $\varepsilon$ and $x$,
we posit that it is, in fact, the exact optimal.
Therefore, when we say the optimal piece number is $m=3$,
it implies the \emph{statistical optimality} guaranteed by the Monte Carlo asymptotic
technique with a strength of at least $n=10^4$ random samples.


\begin{example}
    For $[0,1)\to[0,1)$ and $\loss = |y-x|$,
    Figure~\ref{fig:345_piece_example} shows two examples of the optimal $4$-GPM and optimal $5$-GPM
    when $\varepsilon = 1$ and $x = 0.3$.
    After merging redundant pieces, i.e. connected pieces with the same probability,
    they are the same as the $3$-GPM and fall into the TPM category.
\end{example}

\subsubsection{Solution 3: Closed-form Instantiation} \label{subsec:closed_form}
After determining the optimal $m$, we can derive the closed-form instantiation. 
If the optimal results exhibit $m=3$ and coincide with TPM,
it facilitates analytical deduction for the closed-form optimal $p$, $l$ and $r$.  
Otherwise, numerical regression can be employed to obtain the closed-form instantiation.
Figure~\ref{fig:solving_flow_2} illustrates the workflow.

\textbf{Analytical deduction.}
For TPM, the optimization procedure for solving $p_i$, $l_{i}$ and $r_{i}$
can be conducted analytically.
The key observation is that there are only three variables: $p, l$, and $r$ in TPM.
Due to the normalization constraint of probability, 
the central interval length $r - l$ can be replaced by its probability $p$.
This reduces the solving for the optimal $p$ to a univariate optimization problem w.r.t. $p$,
which can be solved by analyzing the first-order derivative.
With the solved $p$, solving $l$ and $r$ also becomes a univariate optimization problem.
Appendix~\ref{appendix:deduction} provides the formalized process.

\textbf{Numerical regression.}
For any $m$-GPM, we can obtain the closed-form
$p_i$, $l_{i}$, and $r_{i}$ through numerical regression on their solved optimal values.

Assume that we want to find the closed-form optimal $p_i$.
Given $(\mathcal{E},\mathcal{P}_i) = \{(\varepsilon_1, p_{i,1}), \dots, (\varepsilon_n, p_{i,n})\}$, 
which contains the solved optimal $p_i$ for random $\varepsilon$, we aim to find the relationship between $\varepsilon$ and $p_i$.
This relationship can be approximated by $\hat{p}_i = f(\varepsilon,\beta)$, where $f$ is a designed feature with $\beta$ as regression parameters. 

Ideally, the designed feature $f$ matches the truth form of $p_i$.
In this case, the regression result of $f$ on $(\mathcal{E},\mathcal{P}_i)$ converges to the optimal closed-form $p_i$.
If not, the regression result may not converge to the optimal.
In practice, we can use heuristic forms of $f$ for tighter approximation.
Due to the structure of the LDP constraint, we suggest choosing $p_i$ with a form of $\exp(\beta_1\varepsilon)$, 
allowing us to design $f = \exp(\beta_1\varepsilon) + \beta_2$. 

\begin{example}
    For $\loss = |y-x|$, $\indomain = \outdomain = [0, 1)$ and $m = 3$, 
    assume the probability $p$ has the form $\hat{p} = \exp(\beta_1\varepsilon) + \beta_2$.
    We use the \verb|scipy.curve_fit| package to regress $p$ on $50$ random $\varepsilon$;
    then its regression result is $\hat{p} = \exp(\varepsilon/2) - 0.07$
    with a maximal error $\leq 10^{-2}$.
    Note that this result almost coincides with the ground-truth $p$ in Theorem~\ref{theo:optimal_concretization}.
\end{example}

\begin{figure}[t]
    \centering
    \includegraphics[width=\linewidth]{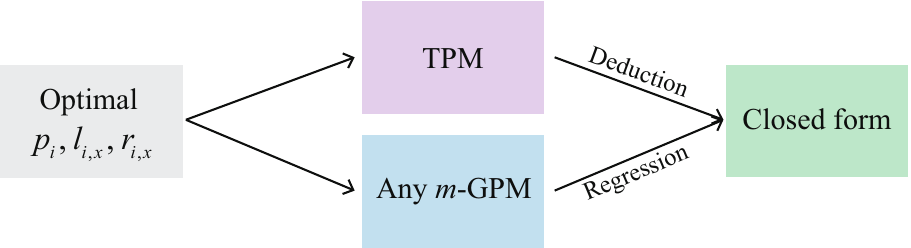}
    \caption{Solving flow for the optimal closed form.}
    \label{fig:solving_flow_2}
\end{figure}

\subsection{Closed Form for Classical Domain} \label{subsec:results_classical}

The above framework for deriving the closed-form optimal GPM 
is applicable to any $\indomain \to \outdomain$.
Using this framework, this subsection provides instantiations for a common case: $\indomain = \outdomain$.

\textbf{Restricting domain.}
In real-world applications, the output domain is often required to match the input domain, i.e. $\indomain=\outdomain$.\footnote{
    $\outdomain$ generally should be larger than or equal to $\indomain$ to ensure the mechanism's meaningfulness;
    otherwise, it means some ranges in $\indomain$ will disappear after applying the mechanism.
}
Furthermore, a concrete domain (e.g. $\indomain =[0,1)$) does not limit the generality,
as $\indomain$ can be transformed to other domains through scaling and shifting operations. 
The theorem below characterizes the privacy and utility invariants under such transformations.

\begin{theorem}[Transformation invariants] \label{theo:transformation_invariant}
    Given a GPM $\mechanism: \indomain \to \outdomain$ satisfying $\varepsilon$-LDP, if domain $\indomain' = c\indomain + d$ and $\outdomain' = c\outdomain + d$ with $c>0$, then transformation
    \begin{equation*}
        \trans(\mechanism): x'=cx+d,\  p'_{i} = p_{i} / c,\  [l'_{i}, r'_{i}) = c[l_{i}, r_{i}) + d
    \end{equation*}
    results in GPM $\mechanism' = \trans(\mechanism): \indomain' \to \outdomain'$ having the same privacy level $\varepsilon$. (privacy invariant)
    
    Meanwhile, if there is another GPM $\mechanism_{\text{bad}}:\indomain \to \outdomain$ with error $Err(x,\mechanism) \leq Err(x, \mechanism_{\text{bad}})$ for a given $x$,
    then 
    \begin{equation*}
        Err(x', \mechanism') \leq Err(x', \mechanism'_{\text{bad}}),
    \end{equation*}
    i.e. $\trans$ maintains the data utility ordering. (utility invariant)
\end{theorem}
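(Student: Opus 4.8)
The plan is to prove the two assertions separately, in each case tracking how the affine map $x \mapsto cx+d$ acts on the probability density function defining the mechanism, and then verifying that the defining inequalities (the LDP ratio bound and the error ordering) are preserved.

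\textbf{Privacy invariant.} First I would write down the density of $\mechanism'$ explicitly. Since $x' = cx+d$ and the piece $[l_i, r_i)$ of $\mechanism(x)$ maps to $[l'_i, r'_i) = c[l_i,r_i)+d$, a change of variables $y' = cy+d$ in the normalization integral shows that the density of $\mechanism'(x')$ on the transformed piece must be $p'_i = p_i/c$ (the factor $1/c$ is exactly the Jacobian of the inverse map, and it also guarantees that $\mechanism'$ still integrates to $1$ over $\outdomain'$). Crucially, $p'_i = p_i/c$ depends only on $\varepsilon$ (and the fixed constant $c$), not on $x'$, so $\mechanism'$ is still a valid GPM in the sense of Definition~\ref{def:general_piecewise}. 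Then for any two inputs $x'_1 = cx_1+d$, $x'_2 = cx_2+d$ and any output $y'$, the ratio of densities is
\[
\frac{pdf[\mechanism'(x'_1) = y']}{pdf[\mechanism'(x'_2) = y']} = \frac{p'_{i}}{p'_{j}} = \frac{p_i/c}{p_j/c} = \frac{p_i}{p_j} \le \exp(\varepsilon),
\]
where $i$ (resp.\ $j$) indexes the piece of $\mechanism(x_1)$ (resp.\ $\mechanism(x_2)$) containing $y = (y'-d)/c$; the last inequality is the LDP guarantee assumed for $\mechanism$. Hence $\mechanism'$ satisfies $\varepsilon$-LDP. I would take a moment to note that the map $x' \mapsto (x'-d)/c$ is a bijection $\indomain' \to \indomain$, so every input/output pair for $\mechanism'$ corresponds to a unique pair for $\mechanism$, which is what makes this argument airtight.

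\textbf{Utility invariant.} The key computation is that the error transforms by a clean scalar factor. Starting from $Err(x',\mechanism') = \int_{\outdomain'} \loss(y',x') \, \mathcal{P}_{\mechanism'(x')} \, \mathrm{d}y'$ with $\loss(y',x') = |y'-x'|^p$, I substitute $y' = cy+d$, so $y'-x' = c(y-x)$, $|y'-x'|^p = c^p |y-x|^p$, $\mathcal{P}_{\mechanism'(x')}\,\mathrm{d}y' = \mathcal{P}_{\mechanism(x)}\,\mathrm{d}y$ (the density factor $1/c$ cancels the $\mathrm{d}y'=c\,\mathrm{d}y$), giving $Err(x',\mechanism') = c^p \, Err(x,\mechanism)$. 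The same identity holds with $\mechanism$ replaced by $\mechanism_{\text{bad}}$, since the transformation rule of Theorem~\ref{theo:transformation_invariant} is applied identically to any GPM. Therefore $Err(x,\mechanism) \le Err(x,\mechanism_{\text{bad}})$ multiplied through by the positive constant $c^p$ yields $Err(x',\mechanism') \le Err(x',\mechanism'_{\text{bad}})$, and the ordering is preserved. (If one wants the statement to cover general $\loss$ that is merely monotone in $|y-x|$ rather than a pure power, the cleanest route is still the substitution $y'=cy+d$, which reduces the two errors to integrals of $\loss(c(y-x))$ against the \emph{same} densities $\mathcal{P}_{\mechanism(x)}$ and $\mathcal{P}_{\mechanism_{\text{bad}}(x)}$, so any inequality between the two error integrals at scale $1$ transfers verbatim.)

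\textbf{Main obstacle.} The calculations themselves are routine; the subtle point I would be most careful about is the bookkeeping of the change of variables — specifically making sure the Jacobian factor $1/c$ is accounted for consistently in (a) the normalization of $\mechanism'$, (b) the LDP ratio, and (c) the error integral, and checking that $c>0$ is genuinely needed (it ensures the map is orientation-preserving so that $[l_i,r_i)$ maps to a properly-oriented interval $[cl_i+d, cr_i+d)$, and that $p'_i = p_i/c \ge 0$). A secondary point worth stating carefully is that $d$ can be arbitrary (the shift does not interact with any of the ratios or differences), and that the output-domain condition $\outdomain' = c\outdomain+d$ is exactly what is required for $\mechanism': \indomain' \to \outdomain'$ to be well-defined with the transformed pieces tiling $\outdomain'$.
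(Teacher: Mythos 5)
Your proof is correct and uses essentially the same approach as the paper: in both cases the privacy invariant follows from observing that the $1/c$ Jacobian factor cancels in the density ratio, and the utility invariant follows from the substitution $y' = cy+d$ together with the homogeneity $\loss(cy_t+d, cx+d) = c^p\loss(y_t,x)$ of the $L_p$-similar metric. The only stylistic difference is that you establish the scaling identity $Err(x',\mechanism') = c^p\,Err(x,\mechanism)$ once and apply it to both mechanisms, whereas the paper works directly with the difference $Err(x',\mechanism') - Err(x',\mechanism'_{\text{bad}})$; these are equivalent.
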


\begin{proof}
    (Sketch) The privacy invariant is the same as applying a linear \emph{post-processing} of DP to GPM.
    The utility invariant is due to $\trans$ as a linear function on $\indomain$.
    Appendix~\ref{appendix:transformation_invariant} provides the full proof of these two invariants.
\end{proof}

Theorem~\ref{theo:transformation_invariant} allows us to discuss the optimality on a fixed input domain.
If a mechanism's input domain differs from $\indomain$, 
we can transform it to $\indomain$, and this transformation maintains the optimality.

\begin{hypothesis} \label{theo:optimal_form}
    For any domain $\indomain \to \indomain$, under absolute error and square error metrics,
    the optimal piecewise-based mechanism falls into $3$-GPM.
\end{hypothesis}

\begin{support}
    We validated this hypothesis for $\indomain = [0,1)$ by performing Monte Carlo sampling on $10^4$ random $(\varepsilon, x)$ pairs (detailed in Section~\ref{subsec:optimal_piece_number}). 
    Theorem~\ref{theo:transformation_invariant} then extends this optimality to any $\indomain$.
    Given the continuity of the objective functions w.r.t. $\varepsilon$ and $x$, 
    we posit that $m=3$ is indeed the exact optimal.
    To further support this hypothesis, Appendix~\ref{appendix:one_direction} outlines two directions for analytical proof and highlights the associated challenges.
\end{support}

\begin{theorem} \label{theo:optimal_concretization}
    If hypothesis~\ref{theo:optimal_form} holds, then
    GPM $\mathcal{M}: [0,1) \to [0,1)$ with the following closed-form instantiation
    \begin{equation*}
        pdf[\mathcal{M}(x) = y] = 
        \begin{dcases}
            p_{\varepsilon} & \text{if} \ y \in [l_{x,\varepsilon}, r_{x,\varepsilon}), \cr
            p_{\varepsilon} / \exp{(\varepsilon)} & y \in [0,1) \, \backslash \, [l_{x,\varepsilon}, r_{x,\varepsilon}),
        \end{dcases}    
    \end{equation*}
    where $p_{\varepsilon} = \exp(\varepsilon / 2)$,
    \begin{align*}
        [l_{x,\varepsilon},r_{x,\varepsilon}) &= 
        \begin{dcases}
            [0, 2C) &\text{if} \ x \in [0, C), \cr
            x + [-C, C) &\text{if } x \in [C, 1-C), \cr
            [1-2C, 1) &\text{otherwise}, \cr  
        \end{dcases}
    \end{align*}
    with $C=(\exp(\varepsilon / 2) - 1) / (2\exp(\varepsilon) - 2)$,
    is optimal for $[0,1)\to[0,1)$ under the absolute error and square error metric.
\end{theorem}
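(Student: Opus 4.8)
The plan is to instantiate the framework of Section~\ref{subsec:framework} for the case $\indomain=\outdomain=[0,1)$ and the $L_p$-similar loss $\loss$, using Hypothesis~\ref{theo:optimal_form}, which tells us the optimal piecewise-based mechanism is some $3$-GPM. The first task is to pin down the \emph{shape} of the optimal $3$-GPM, namely that it is a TPM in the sense of Definition~\ref{def:piecewise}: a single central interval carrying the maximal probability $p_\varepsilon$, flanked by two intervals carrying the minimal probability $p_\varepsilon/\exp(\varepsilon)$. I would establish this by an exchange/rearrangement argument. Fix an input $x$; since $\loss(y,x)$ is nonincreasing in $y$ on $[0,x]$ and nondecreasing on $[x,1)$, any feasible $3$-piece density can be rearranged to be unimodal with peak at $x$ without increasing $Err(x)$. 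Among unimodal $3$-level densities obeying the $\varepsilon$-LDP ratio bound, lowering either outer level toward $p_\varepsilon/\exp(\varepsilon)$ while reallocating the freed probability mass onto the central piece (where $\loss$ is smaller) can only decrease $Err(x)$, so at the optimum both outer levels equal $p_\varepsilon/\exp(\varepsilon)$. This collapses the unknowns to the three scalars $p,l,r$, with the normalization constraint tying the central width $r-l$ to $p$.

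Next I would carry out the two-stage optimization of Formulations~(\ref{equ:min-max}) and~(\ref{equ:lr_i}). In stage one we pin down $p$: by Lemma~\ref{lemma:endpoint} the worst-case input is an endpoint of $[0,1)$, and by the symmetry of the domain and of $\loss$ about $1/2$ it suffices to treat $x=0$. Since $\loss(y,0)$ is increasing in $y$, the central interval at $x=0$ is optimally placed as $[0,w)$ with $w=r-l$, and normalization gives $p=\exp(\varepsilon)/(1+w(\exp(\varepsilon)-1))$, so that $Err(0)$ becomes a univariate function of $w$. For the absolute error one obtains $Err(0)=\tfrac{1}{2}\cdot\frac{1+w^{2}(\exp(\varepsilon)-1)}{1+w(\exp(\varepsilon)-1)}$; setting its derivative to zero gives the quadratic $w^{2}(\exp(\varepsilon)-1)+2w-1=0$, whose admissible root is $w=(\exp(\varepsilon/2)-1)/(\exp(\varepsilon)-1)$, which yields $p_\varepsilon=\exp(\varepsilon/2)$ and half-width $C=w/2=(\exp(\varepsilon/2)-1)/(2\exp(\varepsilon)-2)$, exactly as claimed. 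I would then verify that this critical point is the minimizer ($Err(0)=\tfrac{1}{2}$ at $w\in\{0,1\}$ and strictly smaller at the root) and that the density is feasible ($0\le w\le 1$). The square-error case is handled by the same procedure with $(y-x)^{2}$ in place of $|y-x|$, the first-order condition now being a cubic in $w$.

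Stage two fixes $p_\varepsilon=\exp(\varepsilon/2)$ and solves Formulation~(\ref{equ:lr_i}) for each input $x$. With $p$ fixed, normalization pins the central width to $2C$, so only the location of $[l_x,r_x)$ remains free; minimizing $Err(x)$ over that location reduces to minimizing $\int_{l_x}^{r_x}\loss(y,x)\,\mathrm{d}y$, since the rest of the probability mass contributes a term independent of the interval's location. Because $\loss(\cdot,x)$ is convex with minimizer $y=x$, the integral of $\loss(\cdot,x)$ over an interval of fixed length $2C$ is minimized by centering that interval at $x$; clamping to $[0,1)$ forces the interval to be $[0,2C)$ when $x<C$ and $[1-2C,1)$ when $x>1-C$, which is precisely the three-case formula in the statement. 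A consistency check closes the proof: applying Lemma~\ref{lemma:endpoint} to the mechanism just built shows its worst-case error is attained at $x=0$ and equals the value minimized in stage one, while stage two only lowered $Err(x)$ at every other $x$; hence the mechanism attains the min-max optimum among all $3$-GPMs, and by Hypothesis~\ref{theo:optimal_form} among all piecewise-based mechanisms. Theorem~\ref{theo:transformation_invariant} then transports the result to an arbitrary domain $\indomain\to\indomain$.

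I expect the main obstacle to be the very first step: proving analytically that the optimal $3$-GPM is a TPM, i.e. ruling out a genuinely three-level optimum in favour of two equal minimal outer probabilities and one maximal central probability. The excerpt itself supports this chiefly through Monte Carlo sampling, and turning the rearrangement heuristic above into a watertight argument is delicate---especially for inputs near the endpoints of $[0,1)$, where the optimal structure degenerates to two pieces and the would-be central piece sits flush against the boundary. Once that structural fact is in hand, the remaining steps are routine univariate calculus and convexity arguments.
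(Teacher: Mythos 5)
Your proof takes essentially the same route as the paper's: appeal to Lemma~\ref{lemma:endpoint} to reduce to $x=0$; use the normalization constraint to eliminate one degree of freedom so that $Err(0)$ is a univariate function; solve the first-order condition to get $p_\varepsilon=\exp(\varepsilon/2)$ and the central width; then, for general $x$, solve Formulation~(\ref{equ:lr_i}) for the location of the fixed-width interval, clamping at the boundary. Your algebra matches the paper's (your $w$ is the paper's $s=r-l$; the quadratic $w^2(e^{\varepsilon}-1)+2w-1=0$ is exactly what the paper's derivative condition in $p$ produces after substitution), and your observation that stage two reduces to centering a fixed-length interval at $x$ by convexity of $\loss(\cdot,x)$ is a slightly more conceptual way of getting the paper's $l=x-s/2$.

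The one place where you diverge is the preliminary ``rearrangement'' step that tries to establish analytically that the optimal $3$-GPM must be a TPM, i.e. that the two outer levels should both sit at the minimal $p_\varepsilon/\exp(\varepsilon)$. The paper does not attempt this analytically: Hypothesis~\ref{theo:optimal_form} only asserts $m=3$, and the ``alignment with TPM'' is reported as an empirical finding from the Monte Carlo sampling (Section~\ref{subsec:optimal_piece_number}); the analytical deduction in Appendix~\ref{appendix:deduction} explicitly begins ``if the optimal GPM\ldots is proved to be TPM.'' Your exchange argument is a sensible attempt to close that gap, and you are right that it is the delicate part. As written it has a subtle issue worth flagging: your argument fixes one input $x$ and rearranges, but the optimal $p_\varepsilon$ must be a single $x$-independent choice serving the min-max objective in Formulation~(\ref{equ:min-max}), so a pointwise rearrangement at each $x$ does not immediately yield a single TPM. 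You would need to carry out the rearrangement at the worst-case $x$ (an endpoint, where the unimodal structure degenerates to two pieces, exactly the case you flag as ``near the boundary'') and then separately argue that the resulting $p_\varepsilon$ together with the location optimization of stage two preserves the $\varepsilon$-ratio constraint for all other $x$. You identify this as the obstacle, which is correct; apart from that, the remainder of the proposal reproduces the paper's derivation.
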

\begin{proof}
    Provided by analytical deduction
    with $[a,b) = [\tilde{a},\tilde{b}) = [0,1)$, $\loss=|y-x|$ and $\loss=|y-x|^2$ respectively.
    Appendix~\ref{appendix:optimal_concretization} provides the full proof.
\end{proof}

This optimality on $[0,1)$ can be transformed to $\mechanism':[a,b) \to [a,b)$ by applying 
$\trans: x'=(b-a)x+a,\, p'_{\varepsilon} = p_{\varepsilon} / (b-a),\, [l', r') = (b-a)\cdot[l, r) + a$,
while maintaining the optimality.

\begin{example}
    Figure~\ref{fig:optimal_classical} shows two examples of Theorem~\ref{theo:optimal_concretization}.
    When $x = 0$ in the left figure, the optimal GPM has $p \approx 1.64$ and $[l,r) \approx [0,0.38)$.
    When $x = 0.5$ in the right figure, the optimal GPM has $p \approx 1.64$ and $[l,r) \approx [0.31,0.69)$.
\end{example}

\begin{figure}[t]
    \centering
    \begin{subfigure}[b]{0.48\linewidth}
        \centering
        \includegraphics[width=0.98\linewidth]{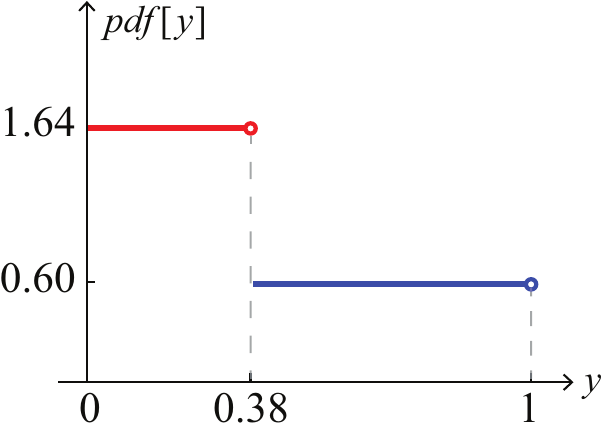}
        \caption{$x = 0$.}
    \end{subfigure}
    \hfill
    \begin{subfigure}[b]{0.48\linewidth}
        \centering
        \includegraphics[width=0.98\linewidth]{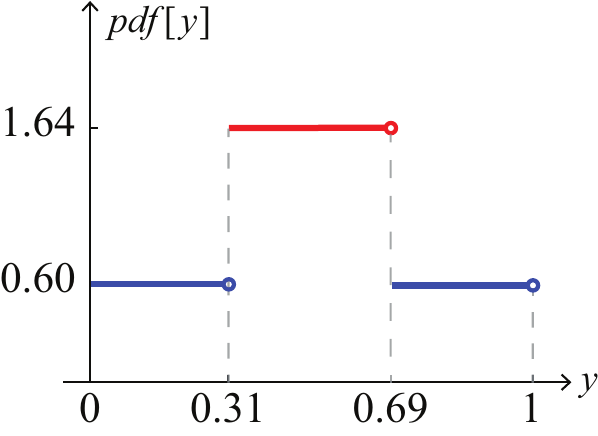}
        \caption{$x = 0.5$.}
    \end{subfigure}
    \caption{Optimal GPM (Theorem~\ref{theo:optimal_concretization}) when $\varepsilon = 1$, $x = 0$ and $x=0.5$.}
    \label{fig:optimal_classical}
\end{figure}

\label{anchor:classical_mse}
\textbf{MSE analysis.} 
We can calculate the mean squared error (MSE) of the optimal GPM in Theorem~\ref{theo:optimal_concretization} as 
\begin{equation*} 
    \begin{split}
        \mathrm{MSE}[\mechanism(x)] =& \int_{\outdomain} (y-x)^2 \cdot pdf[\mechanism(x) = y] \mathrm{d}y \\
        =& \int_{0}^{l_{x,\varepsilon}} (y-x)^2 \frac{p_{\varepsilon}}{\exp(\varepsilon)} \mathrm{d}y + \int_{l_{x,\varepsilon}}^{r_{x,\varepsilon}} (y-x)^2 p_{\varepsilon} \mathrm{d}y \\
        +& \int_{r_{x,\varepsilon}}^{1} (y-x)^2 \frac{p_{\varepsilon}}{\exp(\varepsilon)} \mathrm{d}y \;.
    \end{split}
\end{equation*}
which leads to the results in Appendix~\ref{appendix:classical_mse}.

The theoretical MSE allows us to analytically compare the optimal GPM with existing mechanisms.
For mechanisms defined on $\indomain = [0,1)$, e.g. SW~\cite{DBLP:conf/sigmod/Li0LLS20},
we can compare their MSE with the above result.
For mechanisms defined on other domains, e.g. PM~\cite{DBLP:conf/icde/WangXYZHSS019} on $\indomain = [-1,1)$,
we can transform the optimal GPM to $[-1,1)$ and compare their MSE.
Detailed comparisons with them are presented in the evaluation section.

\section{Optimal GPM for Circular Domain} \label{sec:cyclic_data}

This section presents the optimal GPM for the circular domain, another type of bounded domain.
Circular domains are widely used in cyclic data such as time, angle, and compass direction.
However, none of the existing piecewise-based mechanisms consider this type of domain,
limiting their applicability.

\textbf{Different meanings of distance.}
In the circular domain $[0,2\pi)$, the distance between two elements differs from that in the classical domain $[0,2\pi)$.
For example, if we denote the distance between $x$ and $y$ in the circular domain as $\lossmod(y, x) = |y-x|$,
then it implies $\lossmod(2\pi, 0) = 0$ and $\lossmod(3\pi/2, 0) = \pi/2$, which are different from $\loss(y, x) = |y-x|$ in the classical domain.
\ The biggest difference is that there are no endpoints in the circular domain.
This unique property makes the mechanisms designed for the classical domain not directly suitable for the circular domain.
Although we can ``flatten'' the circular domain to the classical domain, 
this conversion changes the distance between elements, leading to data utility loss.


Formally, in the circular domain $[0,2\pi)$, the distance metric $\lossmod(y, x)$ has the following relationship with $\loss(y, x)$ in the classical domain:
\begin{equation*}
    \lossmod(y, x) = \min \big(\loss(y,x), \loss(y, 2\pi - x)\big),
\end{equation*}
i.e. the distance between $y$ and $x$ is the smaller one between two arcs from $y$ to $x$.
Under this distance metric, finding the optimal $m$-GPM for the circular domain 
is to solve $\mechanism:[0,2\pi)\to [0,2\pi)$ such that
\begin{equation} \label{equ:circular_min-max}
    \begin{gathered}
        \min_{p_i, l_{i,x}, r_{i,x}} \max_{x} \int_{0}^{2\pi} \lossmod(y, x)\mathcal{P}_{\mechanism(x)}\mathrm{d}y, \\
        \text{s.t.} \ \mechanism \ \text{satisfies Definition~\ref{def:general_piecewise}},
    \end{gathered}
\end{equation}
and use the solved optimal $x$-independent $p_i$ to determine $l_{i,x}, r_{i,x}$ for any given $x$:
\begin{equation} \label{equ:circular_lr}
    \begin{gathered}
        \min_{l_{i,x}, r_{i,x}} \int_{0}^{2\pi} \lossmod(y, x)\mathcal{P}_{\mechanism(x)}\mathrm{d}y \\
        \text{s.t.} \ \mechanism \ \text{satisfies Definition~\ref{def:general_piecewise} with } p_i. 
    \end{gathered}
\end{equation}
We show that these two problems can be reduced to those in the classical domain, 
thereby enabling the usage of existing results.

\subsection{Reduced Forms}

Similar to the classical domain, the min-max objective of Formulation~(\ref{equ:circular_min-max}) also has a closed-form solution,
which reduces the problem to the classical domain.

\begin{lemma} \label{lemma:lossmod}
    The objective of Formulation~(\ref{equ:circular_min-max}) can be reduced to
    \begin{equation*}
        \min_{p_i, l_{i,x}, r_{i,x}} \int_{0}^{2\pi} \loss(y, \pi)\mathcal{P}_{\mechanism(\pi)}\mathrm{d}y. \\
    \end{equation*}
\end{lemma}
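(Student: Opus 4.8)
The plan is to mirror the structure of Lemma~\ref{lemma:endpoint}, but to exploit the rotational symmetry of the circular domain instead of the Bauer maximum principle directly. The key realization is that on the circle there are no endpoints, so the ``worst-case'' input cannot be pinned down to a corner — but by symmetry every input is equivalent up to rotation, so the worst-case value of the inner integral is actually the \emph{same} for all $x$ once the mechanism is allowed to rotate with $x$. More precisely: for any $m$-GPM $\mechanism$ and any target input $x$, define the rotated mechanism $\mechanism_x$ whose pieces are those of $\mechanism$ shifted by $x - \pi$ (mod $2\pi$); since $\lossmod$ is rotation-invariant, $\int_0^{2\pi}\lossmod(y,x)\mathcal{P}_{\mechanism_x(x)}\,\mathrm{d}y = \int_0^{2\pi}\lossmod(y,\pi)\mathcal{P}_{\mechanism(\pi)}\,\mathrm{d}y$. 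Hence the minimization over all $m$-GPM of the max-over-$x$ error equals the minimization over all $m$-GPM of the error evaluated at the single fixed input $x=\pi$.

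First I would make the rotation-invariance of $\lossmod$ precise: for any rotation $\rho_\theta(u) = (u+\theta)\bmod 2\pi$ one has $\lossmod(\rho_\theta(y),\rho_\theta(x)) = \lossmod(y,x)$, which follows immediately from the characterization $\lossmod(y,x)=\min(\loss(y,x),\loss(y,2\pi-x))$ together with the fact that arc length on the circle is translation-invariant. Second, I would argue that the family of $m$-GPM on $[0,2\pi)\to[0,2\pi)$ is closed under rotation: rotating all piece boundaries $l_{i,x},r_{i,x}$ by a fixed $\theta$ (wrapping around $2\pi$, splitting a piece into two at the wrap point if necessary, which does not change $m$ after merging equal-probability neighbors) preserves Definition~\ref{def:general_piecewise} — the probabilities $p_{i,\varepsilon}$ and their ratio bound are untouched, and normalization/continuity on the cycle are preserved. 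Third, combining these two facts: for the optimal $m$-GPM call it $\mechanism^\star$, achieving worst-case error $V^\star$ at some input $x^\star$, the rotation bringing $x^\star$ to $\pi$ yields a mechanism whose error at $\pi$ is $\ge V^\star$ is not what we want — rather, I would run the argument in the convenient direction: the min-max over $m$-GPM is $\le$ (error at $\pi$ of any particular $m$-GPM), and conversely, given any $m$-GPM minimizing the error at the single point $\pi$, rotating its pieces to track an arbitrary $x$ shows its worst-case error over all $x$ equals its error at $\pi$; therefore the two optimization values coincide, and the min-max objective reduces to $\min_{p_i,l_{i,x},r_{i,x}}\int_0^{2\pi}\loss(y,\pi)\mathcal{P}_{\mechanism(\pi)}\,\mathrm{d}y$, where $\lossmod$ has collapsed to $\loss$ because, as noted in the text, for a mechanism concentrated near $x=\pi$ the minimizing arc from $y$ to $\pi$ is the ``direct'' one, i.e. $\lossmod(y,\pi)=\loss(y,\pi)$ for all $y\in[0,2\pi)$.

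The main obstacle I anticipate is the last substitution $\lossmod(y,\pi)=\loss(y,\pi)$: this identity does \emph{not} hold for every $y\in[0,2\pi)$ in general (e.g.\ it fails at $y$ near $0$, where the short arc to $\pi$ wraps). The honest version of the claim must be that at the specific input $x=\pi$, the two arc distances $\loss(y,\pi)=|y-\pi|$ and $\loss(y,2\pi-\pi)=\loss(y,\pi)$ coincide — i.e.\ the midpoint $\pi$ is exactly the fixed point of the reflection $x\mapsto 2\pi-x$, so $\lossmod(y,\pi)=\min(|y-\pi|,|y-\pi|)=|y-\pi|=\loss(y,\pi)$ identically. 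This is precisely why $\pi$ is the right point to reduce to: it is the unique input at which $\lossmod(\cdot,x)$ agrees with the ordinary metric $\loss(\cdot,x)$ everywhere on the domain, making the reduced problem a genuine classical-domain problem on $[0,2\pi)\to[0,2\pi)$ to which the results of Section~\ref{sec:optimal} apply. I would therefore structure the proof as (i) rotation-invariance of $\lossmod$ and of the $m$-GPM family, (ii) the resulting equality of the min-max value with the error at the single fixed input $\pi$, (iii) the observation that $\pi$ is the reflection-fixed point so $\lossmod(y,\pi)=\loss(y,\pi)$ for all $y$, completing the reduction; the full details I would defer to an appendix as the earlier lemmas do.
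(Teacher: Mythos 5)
Your proof is correct and takes a genuinely different route from the paper's, and I believe yours is the sounder one. The paper's Appendix~\ref{appendix:lossmod} proof pushes $\max_x$ through the integral, resting on the pointwise claim that $\max_x \lossmod(y,x)=\lossmod(y,\pi)$ for every $y$. That claim is false in general: for $y=3\pi/4$ one has $\lossmod(3\pi/4,0)=\min(3\pi/4,5\pi/4)=3\pi/4$ while $\lossmod(3\pi/4,\pi)=\pi/4$, so the maximizer over $x$ depends on $y$ and is not $\pi$ for $y\in(\pi/2,3\pi/2)$. The interchange of $\max_x$ with the integral is also delicate since both $\mathcal{P}_{\mechanism(x)}$ and the piece endpoints $l_{i,x},r_{i,x}$ carry $x$-dependence. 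Your rotational-symmetry argument sidesteps both issues: the direction $\min_\mechanism\max_x \geq \min_\mechanism(\cdot)|_{x=\pi}$ is trivial, and for the converse you extend a minimizer of the $x=\pi$ error rotation-covariantly (piece endpoints rotate by $x-\pi$), producing a GPM whose error is constant in $x$ and equal to the $\pi$-error, by rotation-invariance of the arc metric and closure of the circular GPM family under rotation. Your step (iii) — that $\pi$ is the fixed point of $x\mapsto 2\pi-x$, so $\lossmod(y,\pi)=\min(\loss(y,\pi),\loss(y,\pi))=\loss(y,\pi)$ identically in $y$ — is precisely why $\pi$ is the canonical reduction point and completes the collapse of $\lossmod$ to $\loss$.

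One caveat worth making explicit: rotation-invariance $\lossmod(y+\theta,x+\theta)=\lossmod(y,x)$ holds for the genuine circular arc metric $\min\bigl(|y-x|^p,(2\pi-|y-x|)^p\bigr)$, but it does not literally hold for the formula $\min\bigl(\loss(y,x),\loss(y,2\pi-x)\bigr)$ as written in Section~\ref{sec:cyclic_data} — that formula gives $0$ at $(y,x)=(\pi/4,7\pi/4)$, whereas the arc distance is $\pi/2$. Your argument should be read against the intended arc metric; with that reading it is complete. The wrap-around piece-splitting you flag is harmless bookkeeping: the two split halves carry the same density and rejoin on the circle, so the effective piece count does not increase, and in any case it is absorbed once the optimal $m$ is fixed by Lemma~\ref{lemma:plus_piece}.
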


\begin{proof}
    (Sketch) We prove it by showing that, for any fixed $y$, 
    $\max_x\lossmod(y, x) = \lossmod(y, \pi) = \loss(y, \pi)$,
    i.e. the maximum distance between $y$ and $x$ is achieved at a unique $x = \pi$ for any $y$.
    Appendix~\ref{appendix:lossmod} provides the full proof.
\end{proof}
Following this reduction, the optimal results in the classical domain $[0,2\pi)$ can be applied to determine the optimal $p_i$.

For Formulation~(\ref{equ:circular_lr}) to solve $l_{i,x}$ and $r_{i,x}$, we shift the domain $[0,2\pi)$ by $\pi-x$.
This is a trick operation as $\lossmod(y, x)$ is transformed to the following form:
\begin{equation*}
    \begin{split}
        &\lossmod(y+\pi-x,x + \pi - x)  = \lossmod(y+\pi-x,\pi) \\
        =& \min \left(\loss(y+\pi-x,\pi), \loss(y+\pi-x, 2\pi - \pi)\right) \\
        =& \loss(y+\pi-x,\pi),
    \end{split}
\end{equation*}
which transforms $\lossmod$ to $\loss$ at $x = \pi$.
Then, the optimization problem in the shifted domain becomes
\begin{equation*} \label{equ:circular_reduced_prob}
    \min_{l_{i,x}, r_{i,x}} \int_{\pi-x}^{3\pi-x} \loss(y+\pi-x, \pi)\mathcal{P}_{\mechanism(\pi)}\mathrm{d}y.
\end{equation*}
It is a problem in the classical domain $[\pi-x,3\pi-x)$.
We can obtain the closed-form optimal $l_{i,x}$ and $r_{i,x}$ 
by applying the results of the classical domain.
Since the obtained $l_{i,x}$ and $r_{i,x}$ depends on $x$ in the shifted domain,
we shift them back to the circular domain using
\begin{align*}
    l_{i,x}^{\text{mod}} &= l_{i,x} - (\pi - x) \mod 2\pi, \\
    r_{i,x}^{\text{mod}} &= r_{i,x} - (\pi - x) \mod 2\pi.   
\end{align*}
Transformation invariants ensure their optimality in the circular domain.
Figure~\ref{fig:circular-reduced} summarizes the above two reductions
to solve the optimal $p_i$, $l_{i,x}^{\text{mod}}$ and $r_{i,x}^{\text{mod}}$ in the circular domain.

\begin{figure}[t]
    \centering
    \includegraphics[width=\linewidth]{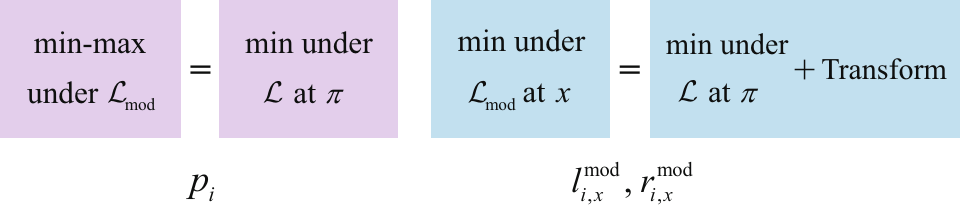}
    \caption{Reduced forms of solving the optimal $p_i$, $l_{i,x}^{\text{mod}}$ and $r_{i,x}^{\text{mod}}$.
    Optimizations under circular distance $\lossmod$ can be reduced to those under linear distance $\loss$.}
    \label{fig:circular-reduced}
\end{figure}

\subsection{Closed Form for Circular Domain}

By applying the above reductions and following the same steps as in the classical domain,
we can obtain the optimal GPM for the circular domain.

\begin{theorem} \label{theo:circular_concretization}
    If Hypothesis~\ref{theo:optimal_form} holds, then
    GPM $\mathcal{M}: [0,2\pi) \to [0,2\pi)$ with the following 
    closed-form instantiation
    \begin{equation*}
        pdf[\mathcal{M}(x) = y] = 
        \begin{dcases}
            p_{\varepsilon} & \text{if} \ y \in [l_{x,\varepsilon}^{\textup{mod}}, r_{x,\varepsilon}^{\textup{mod}}), \cr
            p_{\varepsilon} / \exp{(\varepsilon)} & y \in [0,2\pi) \, \backslash \, [l_{x,\varepsilon}^{\textup{mod}}, r_{x,\varepsilon}^{\textup{mod}}),
        \end{dcases}    
    \end{equation*}
    where $p_{\varepsilon} = \frac{1}{2\pi} \exp(\varepsilon/2)$,
    \begin{align*}
        l_{x,\varepsilon}^{\textup{mod}} &= \left( x - \pi\frac{\exp(\varepsilon/2) - 1}{\exp(\varepsilon) - 1} \right) \mod 2\pi, \\
        r_{x,\varepsilon}^{\textup{mod}} &= \left( x + \pi\frac{\exp(\varepsilon/2) - 1}{\exp(\varepsilon) - 1} \right) \mod 2\pi,
    \end{align*}
    is optimal for the circular domain under the absolute error and square error metric.
\end{theorem}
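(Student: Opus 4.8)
The plan is to obtain Theorem~\ref{theo:circular_concretization} as a direct consequence of the two reductions set up in Section~\ref{sec:cyclic_data}, together with the closed-form classical result of Theorem~\ref{theo:optimal_concretization} and the transformation invariants of Theorem~\ref{theo:transformation_invariant}. I would follow the same three-stage pattern used for the classical domain: (i) use Lemma~\ref{lemma:lossmod} to collapse the min-max over inputs in Formulation~(\ref{equ:circular_min-max}) to the single point $x=\pi$ and thereby fix the $x$-independent probability $p_\varepsilon$; (ii) use the shift-by-$(\pi-x)$ change of variables to recast the per-input problem~(\ref{equ:circular_lr}) as an instance already solved in the classical domain, read off $l^{\textup{mod}}_{x,\varepsilon}$ and $r^{\textup{mod}}_{x,\varepsilon}$, and shift back modulo $2\pi$; (iii) invoke Hypothesis~\ref{theo:optimal_form} to conclude the resulting $3$-GPM is optimal among all $m$-GPM on the circle.

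For stage (i), Lemma~\ref{lemma:lossmod} shows the worst-case error equals $\int_0^{2\pi}\loss(y,\pi)\mathcal{P}_{\mechanism(\pi)}\mathrm{d}y$, and since $\lossmod(y,\pi)=\loss(y,\pi)$ on $[0,2\pi)$ this is exactly the classical objective on $[0,2\pi)\to[0,2\pi)$ evaluated at the midpoint. Applying $\trans$ with $c=2\pi,\ d=0$ to Theorem~\ref{theo:optimal_concretization}, the optimal probability rescales to $p_\varepsilon=\tfrac{1}{2\pi}\exp(\varepsilon/2)$, and because the midpoint lies in the central branch of that theorem (valid since $C<1/2$ for every $\varepsilon>0$) the optimal central interval is the symmetric arc $\pi+[-2\pi C,\,2\pi C)$, where $2\pi C=\pi(\exp(\varepsilon/2)-1)/(\exp(\varepsilon)-1)$. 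The privacy invariant of Theorem~\ref{theo:transformation_invariant} certifies that this rescaling keeps the mechanism $\varepsilon$-LDP.

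For stage (ii), with $p_\varepsilon$ now fixed, the shift-by-$(\pi-x)$ computation already carried out in the text turns Formulation~(\ref{equ:circular_lr}) into the same ``$\loss$ at $\pi$'' problem on the shifted domain, whose optimal central interval is again $\pi+[-2\pi C,\,2\pi C)$. Shifting back by $-(\pi-x)$ and reducing modulo $2\pi$ gives $l^{\textup{mod}}_{x,\varepsilon}=(x-2\pi C)\bmod 2\pi$ and $r^{\textup{mod}}_{x,\varepsilon}=(x+2\pi C)\bmod 2\pi$, i.e. the stated formulas; $p_\varepsilon$ is unchanged since it does not depend on $x$.

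I expect the crux to be making the modular ``shift back'' rigorous, the one place the circle genuinely differs from a line segment. Three points need care. First, the optimal circular mechanism for $x=\pi$ must be shown to have a non-wrapping central arc, so that it is a bona fide classical $3$-GPM and Theorem~\ref{theo:optimal_concretization} applies verbatim; this holds because the point of the circle farthest from $\pi$ is $0\equiv 2\pi$, so the low-probability region is pushed toward $0$ while the high-probability arc stays centered on $\pi$. Second, for general $x$ the rotated central arc $x+[-2\pi C,2\pi C)$ may straddle $0/2\pi$ and split into two arcs; one must verify this still defines a valid GPM under Definition~\ref{def:general_piecewise}, which it does because its pdf still takes only the two values $p_\varepsilon$ and $p_\varepsilon/\exp(\varepsilon)$, so normalization and the ratio bound $\exp(\varepsilon)$ persist---this is the shift case ($c=1$) of the transformation invariant of Theorem~\ref{theo:transformation_invariant} applied modulo $2\pi$, since a rotation by $x-\pi$ is an isometry of the circular domain. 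Third, the achieved error must be exactly preserved by that rotation, which follows from rotation-invariance of $\lossmod$ together with the utility invariant of Theorem~\ref{theo:transformation_invariant}, so the minimum error value transfers from the classical midpoint instance to every $x$. Granting these, any circular $m$-GPM with strictly smaller worst-case error would, under the error-preserving reductions, pull back to a classical GPM beating Theorem~\ref{theo:optimal_concretization} (whose optimality itself rests on Hypothesis~\ref{theo:optimal_form}), a contradiction; the remaining bookkeeping---pinning down $2\pi C$ from normalization and checking $C<1/2$---is mechanical.
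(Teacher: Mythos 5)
Your proposal is correct and follows essentially the same route as the paper's (very terse) proof: reduce via Lemma~\ref{lemma:lossmod} to the classical problem at $x=\pi$, rescale Theorem~\ref{theo:optimal_concretization} from $[0,1)$ to $[0,2\pi)$ using Theorem~\ref{theo:transformation_invariant} with $c=2\pi,\,d=0$ to get $p_\varepsilon$ and $2\pi C$, then use the shift-by-$(\pi-x)$ reduction of Formulation~(\ref{equ:circular_lr}) and reduce modulo $2\pi$, with Hypothesis~\ref{theo:optimal_form} supplying the $m=3$ cap. Your extra care over the wrap-around case (the central arc splitting into two pieces at $0/2\pi$ but still being a valid two-level GPM) and the rotation-invariance of $\lossmod$ fills in details the paper leaves implicit, but it does not change the argument's structure.
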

\begin{proof}
    Combination of the reduced forms, the results of the classical domain, Lemma~\ref{lemma:lossmod}, and transformation invariants
    leads to the conclusion.
\end{proof}

Compared to the optimal GPM for the classical domain in Theorem~\ref{theo:optimal_concretization},
the key difference is that the mechanism in the circular domain allows
$[l_{x,\varepsilon}^{\textup{mod}}, r_{x,\varepsilon}^{\textup{mod}})$ to span the $0$ or $2\pi$ boundary,
significantly reducing the error.
We also observe that their instantiations of $p_{\varepsilon}, l_{x,\varepsilon}, r_{x,\varepsilon}$
are connected through transformation invariants.
For instance, moving from the classical domain $[0,1)$ to the circular domain $[0,2\pi)$,
$p_{\varepsilon} = \exp(\varepsilon/2)$ transforms to
$p_{\varepsilon} = \frac{1}{2\pi} \exp(\varepsilon/2)$, reflecting the ratio of $[0,1)$ to $[0,2\pi)$.

\begin{example}
    Figure~\ref{fig:angular_mechanism} shows two examples of Theorem~\ref{theo:circular_concretization}
    when $\varepsilon = 1$.
    For $x = 0$ in the left figure, it samples the output $y$ from $[1.62\pi, 2\pi) \cup [0, 0.38\pi)$ with probability density $0.26$
    and from $[0.38\pi, 1.62\pi)$ with probability density $0.09$.
\end{example}

\begin{figure}[t]
    \centering
    \begin{subfigure}[b]{0.35\linewidth}
        \centering
        \includegraphics[width=0.98\linewidth]{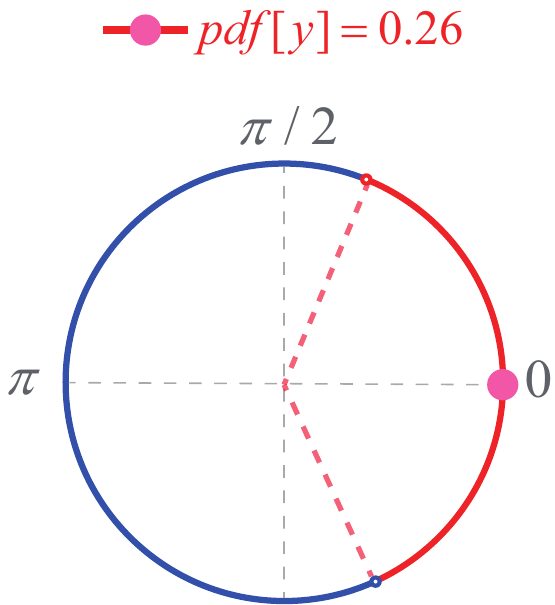}
        \caption{$x = 0$.}
    \end{subfigure}
    \hspace{0.1\linewidth}
    \begin{subfigure}[b]{0.35\linewidth}
        \centering
        \includegraphics[width=0.98\linewidth]{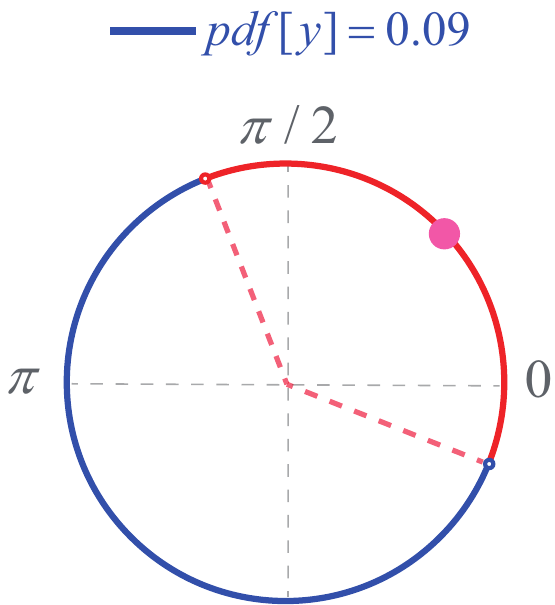}
        \caption{$x = \pi/4$.}
    \end{subfigure}
    \caption{Optimal GPM (Theorem~\ref{theo:circular_concretization}) for the circular domain with $\varepsilon = 1$.
    Here $[0,2\pi)$ is wrapped into a circle. Angle values in the red arc (centered at $x$) have a higher sampling pdf.}
    \label{fig:angular_mechanism}
\end{figure}

\textbf{MSE analysis.}
The MSE of the optimal GPM in Theorem~\ref{theo:circular_concretization} needs a separate analysis due to the circular domain.
The biggest difference from the classical domain is that there exist no endpoints, i.e. fixed farthest points,
in the circular domain. 
Without loss of generality, assume $\loss \coloneq |y-x|^2$ and $x > \pi$, the farthest distance from $x$ is always $\pi$,
i.e. from $x$ to $x - \pi$.
If we shift the data domain by $\pi - x$, then point $x - \pi$ is mapped to $0$.
This domain shift does not change the value of $\loss$, but $x$ now locates at $\pi$.
\ Therefore, the MSE of the optimal GPM in Theorem~\ref{theo:circular_concretization} has
an identical value for all $x$ in the circular domain, which is
\begin{equation*}
    \begin{split}
        & \mathrm{MSE}[\mechanism(x)] = \mathrm{MSE}[\mechanism(\pi)] \\
        = & 2\left(\int_{0}^{l_{\pi,\varepsilon}^{\textup{mod}}} (y-\pi)^2 \frac{p_\varepsilon}{\exp(\varepsilon)} \mathrm{d}y + \int_{l_{\pi,\varepsilon}^{\textup{mod}}}^{\pi} (y-\pi)^2 p_\varepsilon \mathrm{d}y\right).
    \end{split}
\end{equation*}
Calculating the above integral, we can obtain the MSE of the optimal GPM in the circular domain.

\begin{theorem} \label{theo:circular_mse}
    The optimal GPM in Theorem~\ref{theo:circular_concretization} has an identical MSE for all $x$ in the circular domain, which is
    \begin{equation*}
        \mathrm{MSE}[\mechanism(x)] = \frac{2}{3} \left((\pi^3 - C^3)\frac{p_\varepsilon}{\exp(\varepsilon)}  + C^3 p_\varepsilon \right),
    \end{equation*}
    where $C = \pi(\exp(\varepsilon/2) - 1)/ (\exp(\varepsilon) - 1)$.
\end{theorem}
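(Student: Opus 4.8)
The plan is to reduce the computation to the single input $x=\pi$ by exploiting the rotational symmetry of the circular metric, and then to evaluate two elementary integrals. First I would record the key property behind Lemma~\ref{lemma:lossmod}: for every $y\in[0,2\pi)$ we have $\lossmod(y,\pi)=\min\bigl(\loss(y,\pi),\loss(y,\pi)\bigr)=|y-\pi|$, so at the input $\pi$ the circular distance coincides with the ordinary Euclidean one on the whole domain, and hence $\lossmod(y,\pi)^2=(y-\pi)^2$. Next, for an arbitrary input $x$, I would apply the distance-preserving rotation $y\mapsto (y+\pi-x)\bmod 2\pi$ of $\bigl([0,2\pi),\lossmod\bigr)$ — the same shift used to reduce Formulation~(\ref{equ:circular_lr}). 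This rotation sends $x$ to $\pi$ and, because the closed form of Theorem~\ref{theo:circular_concretization} has its high-probability arc centred at the input with a fixed half-width $C=\pi(\exp(\varepsilon/2)-1)/(\exp(\varepsilon)-1)$ and symmetric low-probability arcs, it carries the output distribution of $\mechanism(x)$ onto that of $\mechanism(\pi)$. Since the rotation preserves $\lossmod$, the expected loss is unchanged, giving $MSE(\mechanism(x),x)=MSE(\mechanism(\pi),\pi)$ for all $x$; this already yields the ``identical for all $x$'' part of the statement.

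It then remains to compute $MSE(\mechanism(\pi),\pi)$. At $x=\pi$, Theorem~\ref{theo:circular_concretization} gives $l_{\pi,\varepsilon}^{\textup{mod}}=\pi-C$ and $r_{\pi,\varepsilon}^{\textup{mod}}=\pi+C$. Because $(\exp(\varepsilon/2)-1)/(\exp(\varepsilon)-1)=1/(\exp(\varepsilon/2)+1)\in(0,1)$, we have $0<C<\pi$, so the interval $[\pi-C,\pi+C)$ lies entirely inside $[0,2\pi)$ and does not wrap around the $0/2\pi$ boundary. Using $\lossmod(y,\pi)^2=(y-\pi)^2$ together with the symmetry of both the integrand $(y-\pi)^2$ and the piecewise density about $y=\pi$, the integral over $[0,2\pi)$ splits into two equal halves:
\begin{equation*}
    MSE(\mechanism(\pi),\pi) = 2\left(\int_{0}^{\pi-C}(y-\pi)^2\,\frac{p_\varepsilon}{\exp(\varepsilon)}\,\mathrm{d}y + \int_{\pi-C}^{\pi}(y-\pi)^2\,p_\varepsilon\,\mathrm{d}y\right),
\end{equation*}
which is exactly the expression in the sketch preceding the theorem.

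Finally I would evaluate the two integrals with the antiderivative $(y-\pi)^3/3$: the first equals $(\pi^3-C^3)/3$ and the second equals $C^3/3$, so
\begin{equation*}
    MSE(\mechanism(\pi),\pi) = \frac{2}{3}\left((\pi^3-C^3)\frac{p_\varepsilon}{\exp(\varepsilon)} + C^3 p_\varepsilon\right),
\end{equation*}
which is the claimed formula (with $p_\varepsilon=\frac{1}{2\pi}\exp(\varepsilon/2)$ from Theorem~\ref{theo:circular_concretization}). The only genuinely delicate step is the rotational-equivariance argument: one must check that applying the $\bmod\,2\pi$ wrap to $[l_{x,\varepsilon}^{\textup{mod}},r_{x,\varepsilon}^{\textup{mod}})$ and then rotating by $\pi-x$ indeed produces the unwrapped interval $[\pi-C,\pi+C)$, i.e. that the mechanism is genuinely rotation-equivariant on the circle and that the $x=\pi$ representative has a non-wrapping central interval. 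Once that is in place, the rest is a routine integral; normalization, the $\varepsilon$-LDP ratio bound, and optimality are all inherited from Theorem~\ref{theo:circular_concretization} and need not be re-proved.
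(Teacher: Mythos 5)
Your proposal is correct and follows essentially the same route as the paper: reduce to $x=\pi$ by the rotation $y\mapsto(y+\pi-x)\bmod 2\pi$, observe that $\lossmod(\cdot,\pi)=\loss(\cdot,\pi)$, split the integral symmetrically about $\pi$, and evaluate with the antiderivative $(y-\pi)^3/3$. The extra checks you include (that $0<C<\pi$ so the central arc does not wrap at $x=\pi$, and that the mechanism is rotation-equivariant) are exactly the details the paper leaves implicit when it says ``the calculation is straightforward''.
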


\begin{proof}
    We have shown that the MSE at $x$ is the same as that at $\pi$.
    Then the calculation is straightforward by plugging in the values of $p_\varepsilon$ and $l_{\pi,\varepsilon}^{\textup{mod}}$ in Theorem~\ref{theo:circular_concretization}.
\end{proof}

If we do not consider $\lossmod$ and directly apply the optimal GPM in the classical domain (or SW and PM mechanisms)
to the circular domain, i.e. flatten the circular domain to the classical domain $[0,2\pi)$ and consider $\loss$,
the MSE will varies for different $x$.
In the flattened domain, the worst-case MSE is at $x = 0$ or $x = 2\pi$ and the best-case MSE is at $x = \pi$.
Therefore, the optimal GPM in Theorem~\ref{theo:circular_concretization} always has a lower MSE than ``flattened'' mechanisms.
The evaluation section will further demonstrate this.

\section{Distribution and Mean Estimation} \label{sec:estimation}

This section applies our mechanisms to support the commonly used distribution and mean estimation.
We will show that our mechanisms also provide optimality for these estimation tasks.

Assume a set of users with sensitive data $\mathcal{X} = \{x_1, x_2, \dots, x_n\}$.
They apply $\mechanism$ to produce randomized outputs $\mathcal{Y} = \{y_1, y_2, \dots, y_n\}$. 
The data collector then estimates the distribution and mean of $\mathcal{X}$ using $\mathcal{Y}$.
Specifically, the collector uses the values in $\mathcal{Y}$ as it is, i.e.
without knowing and applying any post-processing based on prior knowledge of $\mathcal{X}$.

\subsection{Distribution Estimation}
To estimate the distribution of $\mathcal{X}$ from $\mathcal{Y}$, the collector must discretize the continuous domain 
$\indomain$ into $k$ bins $B_1, B_2, \dots, B_k$.
Each bin $B_j$'s probability is then estimated by the proportion of $y_i$ that falls within $B_j$. 
Probabilities of all bins together form the estimated distribution $\hat{\mathcal{F}}_B$ using $\mathcal{Y}$.
This estimation's accuracy is measured by the distance between the estimated distribution $\hat{\mathcal{F}}_B$ and the 
true distribution $\mathcal{F}_B$ from $\mathcal{X}$~\cite{DBLP:conf/sigmod/Li0LLS20}.

Note that the bin size can impact the estimation accuracy due to the rounding of the bins.
However, it is actually a hyperparameter that does not inherently affect the estimation accuracy
if we have a sufficiently large number of bins, as the support of $\hat{\mathcal{F}}_B$ (i.e. non-zero bins)
converges to $\mathcal{Y}$
and the support of $\mathcal{F}_B$ converges to $\mathcal{X}$, i.e.
\begin{equation*}
    \lim_{k \to \infty} \mathrm{supp}(\hat{\mathcal{F}}_B) =  \mathcal{Y}, \quad \lim_{k \to \infty} \mathrm{supp}(\mathcal{F}_B) =  \mathcal{X}.
\end{equation*}
Our mechanisms guarantee the optimal error between $\mathcal{X}$ and $\mathcal{Y}$ for each $x_i$.
This ensures their optimality among GPM when applied to distribution estimation.

Some other statistical estimations are special applications of distribution estimation,
such as range and quantiles~\cite{DBLP:conf/sigmod/Li0LLS20} to estimate a specific part of the distribution.
Our mechanisms have optimality for these estimations as well.

\subsection{Mean Estimation}

To estimate the mean of $\mathcal{X}$ from $\mathcal{Y}$,
the collector uses the estimator $\hat{\mu} = \sum_{i=1}^{n}y_i/n$.
The accuracy of this estimator is measured by $|\hat{\mu}-\mu|$,
where $\mu$ is the true mean of $\mathcal{X}$.
Our mechanisms guarantee the optimal error between each $x_i$ and $y_i$,
which in turn leads to the smallest $|\hat{\mu}-\mu|$ among all GPMs under this metric.

Typically, a mean estimator may also need to be unbiased, i.e. $\mathrm{E}[\hat{\mu}] = \mu$.
This constraint translates to $\mathrm{E}[y_i] = \mathrm{E}[\mechanism(x_i)] = x_i$.\footnote{
    Actually, unbiasedness for numerical data is not as important as for categorical data,
    as it is for a single data point $x_i$.
    When the dataset $\mathcal{X}$ is not concentrated around a single point,
    an unbiased mechanism may not necessarily provide better performance.
}
This is unachievable for same-domain mapping $\mechanism:\indomain \to \indomain$ on classical domains,
as the endpoints of $\indomain$ cannot be the mean value (or center) of any distribution over $\indomain$.
For example, for any LDP mechanism $\mechanism: [0,1) \to [0,1)$, distribution of $\mechanism(0)$ can not be unbiased.
So the mechanism in Theorem~\ref{theo:optimal_concretization} is biased.

\textbf{Unbiased mean estimation.} 
Note that an unbiased mean estimator can be achieved by enlarging the output domain $\indomain\to \outdomain_\varepsilon$.
Mathematically, this involves incorporating the unbiasedness constraint $\mathrm{E}[\mechanism(x)] = x$ into optimization problems for solving $\mechanism$.
Following the same optimization process as in the classical domain, 
we hypothesize that the $3$-GPM remains optimal for domain $\outdomain_\varepsilon$.

\begin{hypothesis} \label{hypo:optimal_form}
    For any domain $\indomain \to \outdomain_\varepsilon$, 
    where $\outdomain_\varepsilon$ is a variable w.r.t $\varepsilon$,
    and under absolute error and square error metrics,
    the optimal piecewise-based mechanism falls into $3$-GPM.
\end{hypothesis}

Hypothesis~\ref{hypo:optimal_form} is a natural extension of Hypothesis~\ref{theo:optimal_concretization}, as the output domain $\outdomain_\varepsilon$ becomes explicit once $\varepsilon$ is specified.
Under the $3$-GPM, an unbiased mechanism $\mechanism$ with a variable output domain $\outdomain_\varepsilon$ can be analytically derived by incorporating the unbiasedness constraint.
As a complement to Theorem~\ref{theo:optimal_concretization},
we provide Theorem~\ref{theo:unbiased_optimal} for mean estimation in the classical domain.

\begin{theorem} \label{theo:unbiased_optimal}
    Denote $\outdomain_\varepsilon = [-C, C + 1)$ with 
    $C = (\exp(\varepsilon/2) + 1)/(\exp(\varepsilon/2) - 1)$.
    If Hypothesis~\ref{hypo:optimal_form} holds, then
    among the unbiased GPM $\mechanism: \indomain \to \outdomain_\varepsilon$ (i.e. $\mathrm{E}[\mechanism(x)] = x$),
    closed form
    \begin{equation*}
        pdf[\mathcal{M}(x) = y] =
        \begin{dcases}
            p_{\varepsilon}                       & \text{if} \ y \in [l_{x,\varepsilon}, r_{x,\varepsilon}), \cr
            p_{\varepsilon} / \exp{(\varepsilon)} & y \in \outdomain_\varepsilon \, \backslash \, [l_{x,\varepsilon}, r_{x,\varepsilon}),
        \end{dcases}
    \end{equation*}
    where $p = \exp(\varepsilon/2)/ (2C + 1)$,
    \begin{align*}
        l_{x,\varepsilon} &= \frac{C+1}{2} \cdot x - \frac{(3C+1)(C-1)}{4C}, \\
        r_{x,\varepsilon} &= \frac{C+1}{2} \cdot x + \frac{(C+1)(C-1)}{4C}.
    \end{align*}
    is optimal for $[0,1)\to \outdomain_\varepsilon$ and the square error metric.
\end{theorem}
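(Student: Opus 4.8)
The plan is to follow the same analytical-deduction template used for Theorem~\ref{theo:optimal_concretization}, but now with the extra unbiasedness constraint $\mathrm{E}[\mechanism(x)] = x$ folded into the optimization. First I would set up the $3$-GPM on $\outdomain_\varepsilon = [-C, C+1)$ with the central interval $[l_{x},r_{x})$ at probability density $p$ and the two flanking pieces at $p/\exp(\varepsilon)$. The normalization constraint gives one equation relating $p$, the interval length $r_x - l_x$, and the total length $2C+1$ of $\outdomain_\varepsilon$; since $p$ must be $x$-independent (Definition~\ref{def:general_piecewise}), this forces $r_x - l_x$ to be a constant, call it $\ell$, and yields $p = \exp(\varepsilon/2)/(2C+1)$ once we also demand the ratio constraint is tight (the high piece exactly $\exp(\varepsilon)$ times the low piece). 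So far this is identical to the classical-domain deduction; the new ingredient is the unbiasedness equation $\int_{\outdomain_\varepsilon} y \cdot pdf[\mechanism(x)=y]\,\mathrm{d}y = x$, which, after integrating the piecewise density, becomes a linear relation of the form $A \cdot \tfrac{l_x + r_x}{2} + (\text{const depending on }\outdomain_\varepsilon) = x$; this pins down the midpoint $\tfrac{l_x+r_x}{2}$ as an affine function of $x$ with slope $\tfrac{C+1}{2}$, matching the claimed formulas for $l_{x,\varepsilon}$ and $r_{x,\varepsilon}$.

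Next I would determine the free constants. We now have two degrees of freedom left: the interval half-length $\ell/2$ (equivalently the constant offset in $l_x, r_x$ beyond the affine $x$-term — note the stated $l_{x,\varepsilon}$ and $r_{x,\varepsilon}$ have the same $x$-coefficient but different constants, so $\ell = r_x - l_x = \tfrac{(C+1)(C-1)}{4C} + \tfrac{(3C+1)(C-1)}{4C} = \tfrac{(C-1)(4C+4)}{4C} = \tfrac{(C-1)(C+1)}{C}$ is forced) and the domain parameter $C$ itself. The value of $C$ is what we optimize: plugging everything into $Err(x) = \int \loss(y,x)\mathcal{P}_{\mechanism(x)}\mathrm{d}y$ with $\loss = (y-x)^2$, using Lemma~\ref{lemma:endpoint} to reduce $\max_x$ to $x \in \{0,1\}$, and then differentiating the resulting worst-case MSE with respect to $C$ and setting the derivative to zero. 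I expect this first-order condition to produce exactly $C = (\exp(\varepsilon/2)+1)/(\exp(\varepsilon/2)-1)$. A sanity check: this $C$ must be $\geq 1$ for all $\varepsilon > 0$ (so that $\outdomain_\varepsilon \supseteq [0,1)$ genuinely), and indeed $\exp(\varepsilon/2)+1 \geq \exp(\varepsilon/2)-1$; as $\varepsilon \to \infty$, $C \to 1$, shrinking $\outdomain_\varepsilon$ toward $[-1,2)$, which is consistent with the PM-style behavior.

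The remaining piece is to verify the mechanism is well-defined on the entire input domain — i.e., the central interval $[l_x, r_x)$ stays inside $\outdomain_\varepsilon = [-C, C+1)$ for every $x \in [0,1)$. I would check the two extreme cases: at $x = 0$, $l_0 = -\tfrac{(3C+1)(C-1)}{4C}$ must be $\geq -C$, and at $x = 1$, $r_1 = \tfrac{C+1}{2} + \tfrac{(C+1)(C-1)}{4C}$ must be $\leq C+1$; both reduce to polynomial inequalities in $C$ that hold for $C \geq 1$. I would also confirm $p/\exp(\varepsilon) \geq 0$ (automatic) and that the two low-probability pieces are non-degenerate, i.e. $\ell < 2C+1$, which again is a routine inequality. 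Finally, invoking Hypothesis~\ref{hypo:optimal_form} to rule out $m \neq 3$ closes the optimality argument, exactly as Theorem~\ref{theo:optimal_concretization} invokes Hypothesis~\ref{theo:optimal_form}.

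The main obstacle I anticipate is the last optimization step: after substituting the unbiasedness-constrained $l_x, r_x$ and the normalization-fixed $p$ back into the worst-case MSE, the objective becomes a genuinely messy rational function of $C$ (and $\exp(\varepsilon/2)$), so cleanly extracting the critical point and showing it is a global minimum (not just a stationary point) over the admissible range $C \geq 1$ requires care — one must check the boundary behavior as $C \to 1^+$ and $C \to \infty$ and confirm the interior critical point dominates. A secondary subtlety is that unbiasedness is imposed for all $x$ simultaneously while the affine form of the midpoint is derived from a single integral identity; I would need to confirm that the single constant offset consistent with that identity is the same for all $x$ (it is, because the $x$-dependence enters only through the already-pinned slope), so that one mechanism serves the whole domain. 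Everything else is bookkeeping analogous to Appendix~\ref{appendix:optimal_concretization}.
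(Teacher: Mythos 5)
Your overall plan---run the Appendix~\ref{appendix:deduction} analytical-deduction template with the extra unbiasedness constraint folded in, then invoke Hypothesis~\ref{hypo:optimal_form} for the $m=3$ step, then verify unbiasedness---matches the route the paper takes (the paper's stated proof is exactly ``analytical deduction'' plus the unbiasedness check in Appendix~\ref{appendix:unbiased_optimal}). However, there is one substantive gap and one arithmetic slip worth flagging.

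The substantive gap is your derivation of $p$. You write that the normalization constraint together with the tight ratio constraint ``yields $p = \exp(\varepsilon/2)/(2C+1)$,'' but those two conditions do not determine $p$ at all: the tight ratio $p/(p/\exp(\varepsilon)) = \exp(\varepsilon)$ is already built into the TPM form, so the only equation you get is the normalization relation $s\,p + (2C+1-s)\,p/\exp(\varepsilon) = 1$, which fixes the piece length $s = r_x - l_x$ as a function of $p$ (and $C$) but leaves $p$ free. The value $p = \exp(\varepsilon/2)/(2C+1)$ has to come out of a first-order optimality condition, exactly as $p = \exp(\varepsilon/2)$ does in Appendix~\ref{appendix:optimal_concretization}. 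Moreover, your claim that ``so far this is identical to the classical-domain deduction'' is not quite right once unbiasedness is imposed: in the biased case the worst-case optimizer at $x=0$ has $l_0 = 0$, whereas here unbiasedness forces the central piece to straddle $0$ (indeed $l_0 = -\tfrac{(3C+1)(C-1)}{4C} < 0$), so the integrand whose derivative you set to zero is a different function of $p$. You would need to actually redo that first-order condition with the unbiasedness-constrained $l_0, r_0$ substituted in, and then run the joint minimization over $(p,C)$; the two stationarity conditions together should produce both $p = \exp(\varepsilon/2)/(2C+1)$ and $C = (\exp(\varepsilon/2)+1)/(\exp(\varepsilon/2)-1)$, but as written you only differentiate in $C$, having asserted $p$ rather than derived it.

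The arithmetic slip: in computing $\ell = r_x - l_x$ you wrote $(C+1) + (3C+1) = 4C+4$ and obtained $\ell = (C-1)(C+1)/C$. In fact $(C+1)+(3C+1) = 4C+2$, so $\ell = \tfrac{(C-1)(2C+1)}{2C}$, which is exactly the $s$ the paper introduces in Appendix~\ref{appendix:unbiased_optimal}. This does not propagate through the rest of your argument, but it is worth correcting because the quantity $\ell$ is also what you would need to feed into the first-order condition discussed above.
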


\begin{proof}
    Instantiations of $C$, $p$, $l_{x,\varepsilon}$, and $r_{x,\varepsilon}$ are derived by analytical deduction.
    Appendix~\ref{appendix:unbiased_optimal} proves the unbiasedness.
\end{proof}

In the context of the circular domain, the $\mechanism$ in Theorem~\ref{theo:circular_concretization} is unbiased, as 
the distribution of $\mechanism$
is always centered at $x$. 
This property is also illustrated in Figure~\ref{fig:angular_mechanism}.

Table~\ref{tab:estimation} summarizes the optimal GPM for both 
distribution and mean estimation in three domain types.
Theorems use a concrete domain $\indomain$ for the sake of clarity.
We do not give closed-form optimal GPM for distribution estimation on $\indomain\to \outdomain_\varepsilon$ 
because $\outdomain_\varepsilon$ can not be easily concretized by constraints as in mean estimation.

\begin{table}[t]
    \begin{center}
        \caption{Optimal distribution and mean estimation in three domain types under GPM.}
        \label{tab:estimation}
        \begin{tabular}{l l l}
            \toprule
                                                          & \textbf{Distribution}                            & \textbf{Mean}                              \\
            \midrule
            Classical ($\indomain \to \indomain$)              & Theorem~\ref{theo:optimal_concretization}                   & Theorem~\ref{theo:optimal_concretization} (biased)    \\
            \vspace{-0.6em} \\  
            Circular domain                   & Theorem~\ref{theo:circular_concretization}                  & Theorem~\ref{theo:circular_concretization} (unbiased) \\
            \vspace{-0.6em} \\
            Classical ($\indomain \to \outdomain_\varepsilon$) & \makecell[c]{--} & Theorem~\ref{theo:unbiased_optimal} (unbiased) \\
            \bottomrule
        \end{tabular}
    \end{center}
\end{table}

\section{Discussion and Extension} 

\subsubsection*{Minimize Error at a Specific $x$} \label{sec:optimize_x}

Our proposed optimal GPMs are designed to minimize the worst-case error over the whole domain.
However, the framework can be used to minimize the error at any specific $x$.
This is useful when the data distribution is concentrated around a specific data point.

Formally, assume the data distribution is concentrated around $x_0$, and we want to minimize the error at $x_0$.
The optimization problem in Lemma~\ref{lemma:endpoint} for solving $p_{i}$ can be modified to
\begin{equation*}
    \min_{p_i, l_{i,x}, r_{i,x}} \int_{\outdomain} \loss(y, x_0)\mathcal{P}_{\mechanism(x_0)}\mathrm{d}y.
\end{equation*}
This optimization problem generally leads to a different $p_i$ from the optimal GPM for the worst-case error.

\begin{example}
    Assume $\indomain \to \outdomain = [0,1) \to [0,1)$, $\loss \coloneq |y-x|$, and the data distribution is concentrated around $x_0 = 0.2$.
    Solving the above optimization problem with $\varepsilon = 1$ gives probability (of the second piece) $p_2 = 1.54$ and $Err(x_0) = 0.241$.
    In contrast, the optimal GPM for the worst-case error uses $p_2 = 1.64$, as shown in Figure~\ref{fig:optimal_classical},
    which gives $Err(x_0) = 0.243$.
\end{example}

Importantly, optimizing for a specific $x_0$ does not leak information about $x_0$,
as the mechanism (i.e. $p_i$, $l_{i,x}$, and $r_{i,x}$) still contains no information about $x_0$.
Moreover, observing $Err(x)$ at all $x$ does not reveal $x_0$,
as the error at $x_0$ is not necessarily the smallest.
Therefore, the adversary cannot infer $x_0$ from observing the mechanism.

\subsubsection*{An Extension: 2D Polar Coordinates} \label{sec:polar}

Polar coordinates are widely used in relative location representation,
e.g. navigation systems that have the locations of surrounding objects relative to it.
Our proposed optimal GPM for the classical and circular domain can be combined and 
extended to polar coordinates for collecting such data under LDP.

\textbf{Privacy.} A polar coordinate data is represented by a 2D tuple $(x_1, x_2) \in [0, d) \times [0, 2\pi)$,
where $x_1$ is the distance from the pole and $x_2$ is the angle from the polar axis.
The first dimension is linear, while the second is naturally circular,
thus we can combine the optimal GPM for both domains to provide LDP for such data.

\textbf{Utility.} Our mechanisms guarantee the optimal error for each dimension.
Therefore, if we use $\loss_{\text{2D}} \coloneq \loss(y_1, x_1) + \lossmod(y_2, x_2)$ as the error metric for the 
polar coordinate data and optimally assign the privacy parameter $\varepsilon$ to each dimension, 
the optimal GPM preserves the optimal error.

\textbf{Optimal assignment of $\varepsilon$.}
Formally, we want to assign $\varepsilon = \varepsilon_1 + \varepsilon_2$ to $x_1$ and $x_2$ respectively to 
minimize the worst-case error in 2D polar coordinates.
This error is the sum of the worst-case error for $x_1$ and $x_2$ under $\loss$ and $\lossmod$ respectively.
Therefore, the optimal assignment of $\varepsilon$ can be derived by solving
\begin{equation*}
    \min_{\varepsilon_1, \varepsilon_2} Err_{\text{1,wor}}(\varepsilon_1) + Err_{\text{2,wor}}(\varepsilon_2),
\end{equation*}
where $Err_{\text{1,wor}}(\varepsilon_1)$ and $Err_{\text{2,wor}}(\varepsilon_2)$ are the worst-case error
for the classical domain $[0, d)$ and the circular domain $[0, 2\pi)$ respectively.
Closed-form $Err_{\text{1,wor}}(\varepsilon_1)$ and $Err_{\text{2,wor}}(\varepsilon_2)$ can be derived from 
instantiations of the mechanism and the error metric.
Then the above optimization problem gives the optimal $\varepsilon_1$ and $\varepsilon_2$.
Appendix~\ref{appendix:polar_assignment_epsilon} provides details for solving this optimization problem.

\begin{example}
    Figure~\ref{fig:polar_mechanism} shows two examples of the optimal GPM for 2D polar coordinates in $[0, 1) \times [0, 2\pi)$ with $\varepsilon = 1 + 2\pi$.
    The green point represents the sensitive data, the optimal GPM samples the output from the pink area with a higher probability.
\end{example}

\begin{figure}[t]
    \centering
    \begin{subfigure}[b]{0.35\linewidth}
        \centering
        \includegraphics[width=0.98\linewidth]{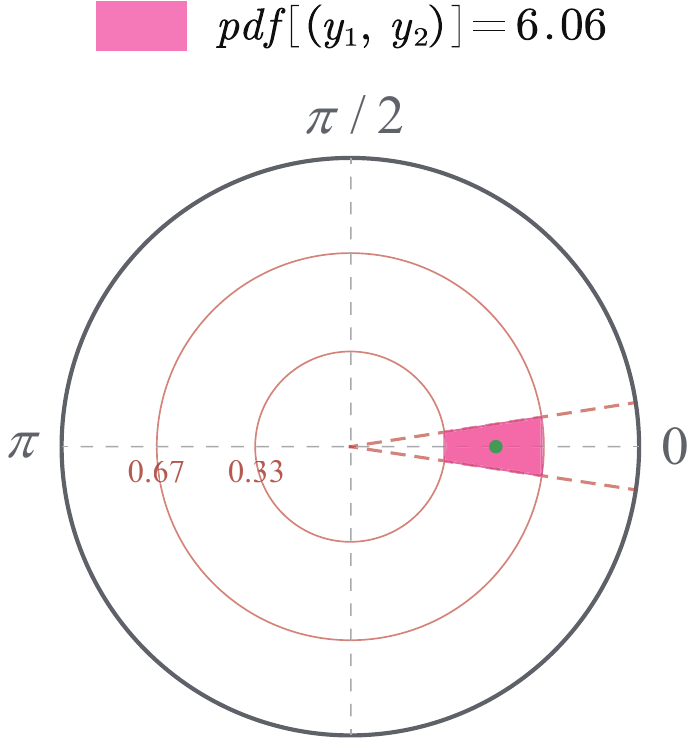}
        \caption{$(x_1, x_2) = (0.5, 2\pi)$.}
    \end{subfigure}
    \hspace{0.1\linewidth}
    \begin{subfigure}[b]{0.35\linewidth}
        \centering
        \includegraphics[width=0.98\linewidth]{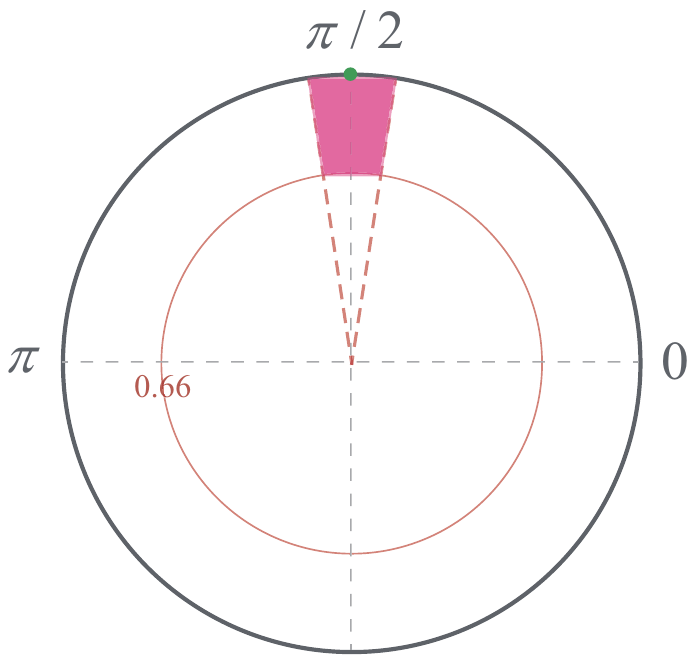}
        \caption{$(x_1, x_2) = (1, \pi/2)$.}
    \end{subfigure}
    \caption{Optimal GPM for 2D polar coordinates when $\varepsilon = 1 + 2\pi$ and $\loss = |y-x|^2$
    under $\loss_{\text{2D}} \coloneq \loss(y_1, x_1) + \lossmod(y_2, x_2)$.}
    \label{fig:polar_mechanism}
\end{figure}

\section{Evaluations} \label{sec:evaluations}

\begin{figure*}[t]
    \centering
    \begin{minipage}{0.47\textwidth}
        \centering
        \begin{subfigure}[b]{0.48\linewidth}
            \includegraphics[width=0.98\linewidth]{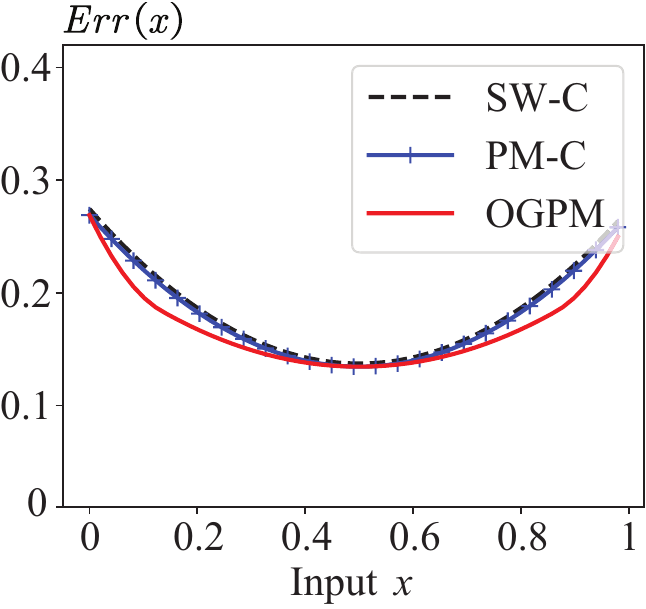}
            \caption{Privacy parameter $\varepsilon = 2$.}
        \end{subfigure}
        \hfill
        \begin{subfigure}[b]{0.48\linewidth}
            \includegraphics[width=0.98\linewidth]{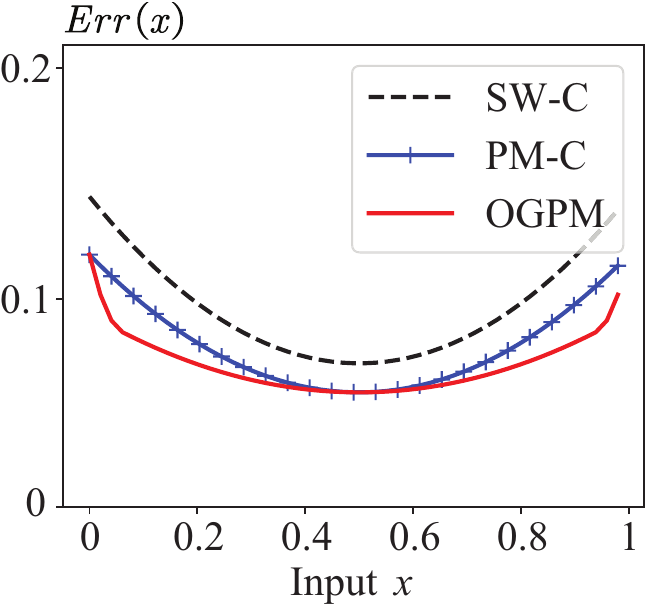}
            \caption{Privacy parameter $\varepsilon = 4$.}
        \end{subfigure}
        \caption{Whole-domain error comparison in the classical domain with error metric $\loss = |y-x|$.}
        \label{fig:exp:classical_whole_domain}
    \end{minipage}
    \hfill
    \begin{minipage}{0.47\textwidth}
        \centering
        \begin{subfigure}[b]{0.48\linewidth}
            \includegraphics[width=0.98\linewidth]{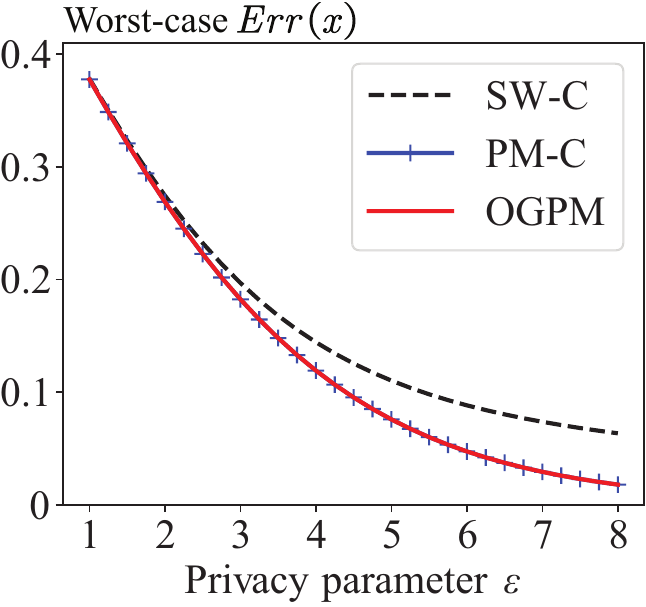}
            \caption{Distance metric $\loss = |y-x|$.}
        \end{subfigure}
        \hfill
        \begin{subfigure}[b]{0.48\linewidth}
            \includegraphics[width=0.98\linewidth]{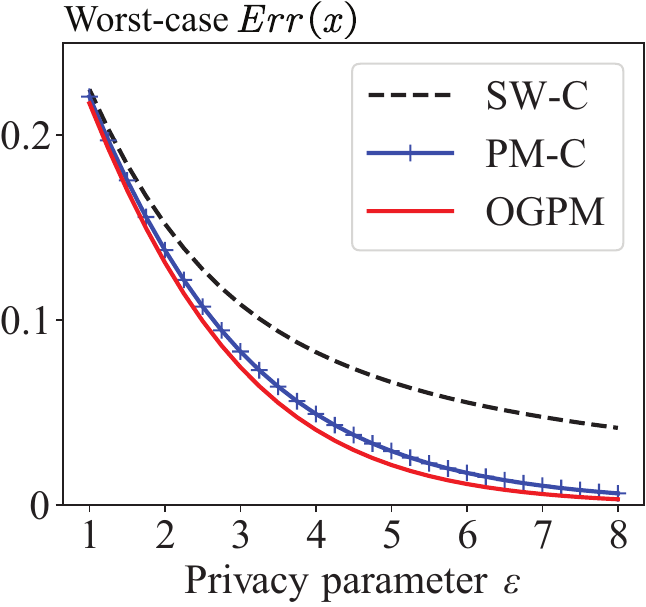}
            \caption{Distance metric $\loss = |y-x|^2$.}
        \end{subfigure}
        \caption{Worst-case error comparison in the classical domain.}
        \label{fig:exp:classical_worst_case}
    \end{minipage}
\end{figure*}

\begin{figure*}[t]
    \centering
    \begin{minipage}{0.47\textwidth}
        \centering
        \begin{subfigure}[b]{0.48\linewidth}
            \includegraphics[width=0.98\linewidth]{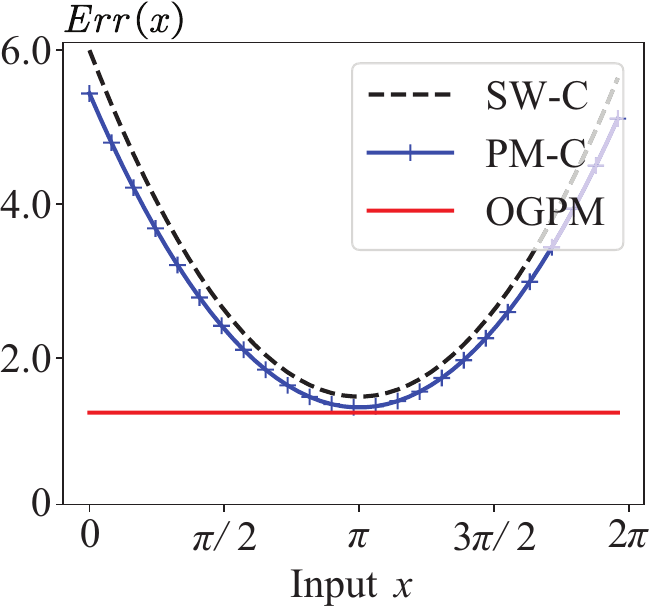}
            \caption{Privacy parameter $\varepsilon = 2$.}
        \end{subfigure}
        \hfill
        \begin{subfigure}[b]{0.48\linewidth}
            \includegraphics[width=0.98\linewidth]{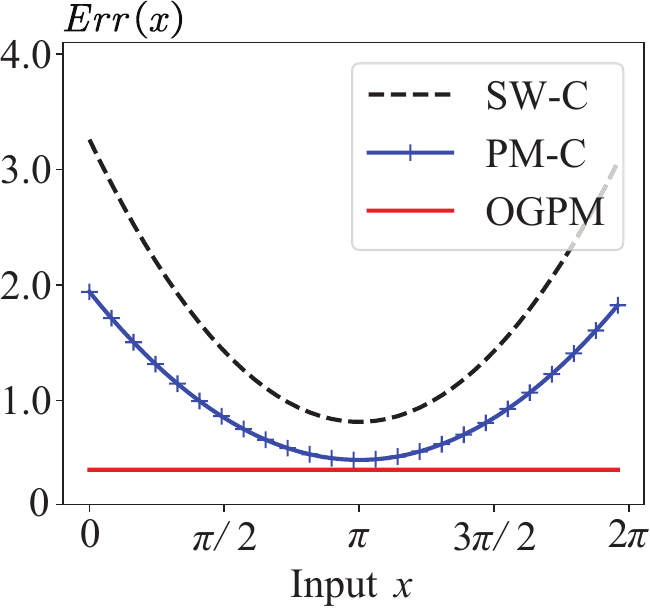}
            \caption{Privacy parameter $\varepsilon = 4$.}
        \end{subfigure}
        \caption{Whole-domain error comparison in the circular domain with error metric $\loss = |y-x|^2$.}
        \label{fig:exp:circular_whole_domain}
    \end{minipage}
    \hfill
    \begin{minipage}{0.47\textwidth}
        \centering
        \begin{subfigure}[b]{0.48\linewidth}
            \includegraphics[width=0.98\linewidth]{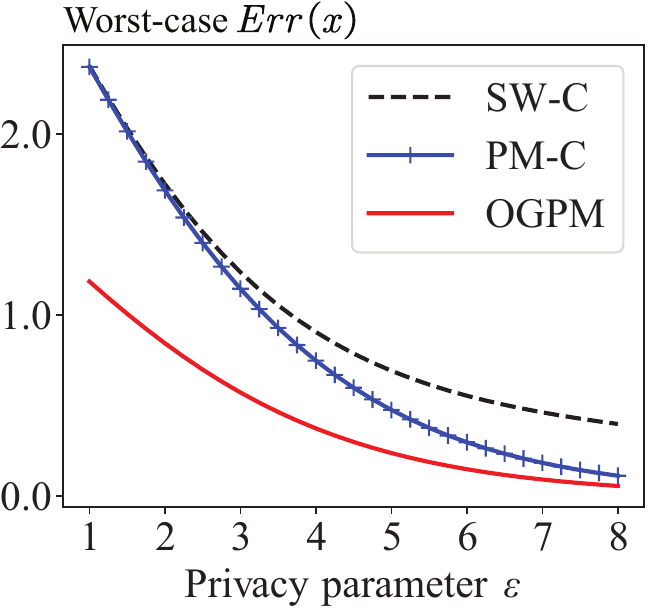}
            \caption{Distance metric $\loss = |y-x|$.}
        \end{subfigure}
        \hfill
        \begin{subfigure}[b]{0.48\linewidth}
            \includegraphics[width=0.98\linewidth]{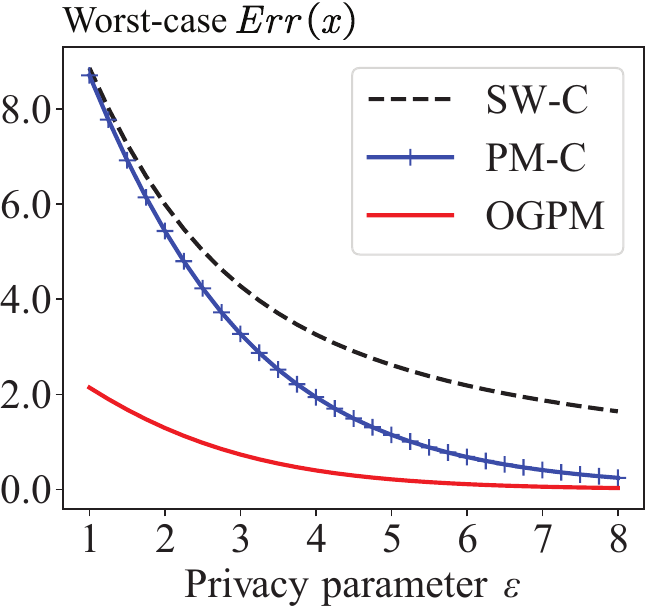}
            \caption{Distance metric $\loss = |y-x|^2$.}
        \end{subfigure}
        \caption{Worst-case error comparison in the circular domain.}
        \label{fig:exp:circular_worst_case}
    \end{minipage}
\end{figure*}

This section evaluates the theoretical and experimental data utility of our methods by comparing them with existing 
instantiations and their variants:
\begin{itemize}
    \item OGPM: closed-form optimal GPM (Theorem~\ref{theo:optimal_concretization} and~\ref{theo:circular_concretization}
    for the classical and circular domain, respectively).
    \item OGPM-U: unbiased closed-form optimal GPM (Theorem~\ref{theo:unbiased_optimal}) for mean estimation in the classical domain.
    \item PM~\cite{DBLP:conf/icde/WangXYZHSS019}, SW~\cite{DBLP:conf/sigmod/Li0LLS20}, and their post-processed versions:
    PM is the first TPM designed for mean estimation, while SW is designed for distribution estimation. 
    Both mechanisms output enlarged domains but can be post-processed by truncating outputs to the input domain. 
    These post-processed versions are referred to as T-PM and T-SW for convenience.
    \item PM-C and SW-C: the compressed versions of PM and SW for $\indomain \to \indomain$.
    For the best potential of PM and SW, we adapt them to $\indomain \to \indomain$ as PM-C and SW-C by linearly compressing their output domain $\outdomain_\varepsilon$ to $\indomain$,
    i.e. by transformation invariants, which maintains the privacy level.
\end{itemize}
We also compare OGPM's expected error with non-piecewise-based mechanisms that can be applied to bounded numerical domains:
\begin{itemize}
    \item Variants of the Laplace mechanism: including the staircase mechanism~\cite{DBLP:conf/isit/GengV14}, Laplace mechanism with post-processing by truncation (T-Laplace),
    and the bounded Laplace mechanism (B-Laplace)~\cite{DBLP:journals/jpc/HolohanABA20}, which redesigns a bounded Laplace-shape distribution.
    \item Purkayastha mechanism~\cite{DBLP:conf/ccs/WeggenmannK21}: a mechanism for directional data on spheres $\mathbb{S}^{n-1}$. When $n=2$, it is a counterpart of OGPM in the circular domain.
\end{itemize}

We omit comparison with PTT~\cite{ma10430379} because it does not provide a concrete method to find 
a closed-form mechanism.
We use $\indomain=[0,1)$ as the classical domain; this does not change
their data utility ordering by the transformation invariant.
\ PM and SW can only be applied to the classical domain, 
so when evaluating them in the circular domain,
we ``flatten'' the circular domain to the classical domain $[0,2\pi)$ and apply them to the flattened domain.

The first subsection presents the comparison of expected errors, followed by the distribution and mean estimations on real-world datasets in the second subsection.

\subsection{Expected Errors} \label{subsec:theoretical_evaluation}

GPM's data utility under distance metric $\loss$ 
is measured by the expected error $Err(x)$ in Formula~(\ref{equ:utility_1}):
\begin{equation*}
    Err(x) = \int_{\outdomain} \mathcal{L}(y, x)\mathcal{P}_{\mathcal{M}(x)} \mathrm{d}y,
\end{equation*}
where $x$ is the sensitive input and $y$ is the output of $\mechanism$.
Based on $Err(x)$, two types of error need to be considered:
\begin{itemize}
    \item \textbf{whole-domain error:} $Err(x)$ values for the whole domain, i.e. for all $x\in \indomain$.
    \item \textbf{worst-case error:} the largest $Err(x)$ value among the whole domain. 
    PM~\cite{DBLP:conf/icde/WangXYZHSS019} and PTT~\cite{ma10430379} also use this error to evaluate data utility.
\end{itemize}
We have proved that the worst-case error of the classical domain is from the endpoints, 
so this error is actually the error at $x=0$ and $x=1$ for the classical domain.
For the circular domain, the worst-case error is at $x = \pi$.

\subsubsection{Comparison with PM-C and SW-C} \label{subsec:comparison_with_PM_SW}
PM-C and SW-C exhibit the best potential of PM and SW, so we compare OGPM with them in the first place.
The comparisons are conducted under the classical domain and the circular domain, respectively.

\textbf{Classical Domain.}
Figure~\ref{fig:exp:classical_whole_domain} shows the comparison of the whole-domain error in the classical domain.
We use distance metric $\loss = |y-x|$ and set $\varepsilon=2$ and $\varepsilon=4$ for the comparison.
It can be seen that OGPM consistently has the lowest error across all $x$ values.
\ For all mechanisms, the expected error achieves the maximal value at the endpoints
and the minimal value at the midpoint.
At small $\varepsilon$ values, their errors are similar due to the strong randomness (privacy constraint).
At larger $\varepsilon$ values, OGPM's error has a significant advantage over PM-C and SW-C,
especially between the endpoints and the midpoint. 
\ Statistically, under $\loss = |y-x|$ with $\varepsilon = 2$, OGPM's average error is $94.2\%$ of PM-C and $92.3\%$ of SW-C across the whole domain.
When $\varepsilon = 4$, OGPM's average error is $90.5\%$ of PM-C and $74.7\%$ of SW-C.
More comparisons under smaller $\varepsilon$ values are provided in Appendix~\ref{appendix:smaller_epsilon}.

Figure~\ref{fig:exp:classical_worst_case} shows the comparison of the worst-case error w.r.t $\varepsilon$ in the classical domain.
The error is measured by two distance metrics: $\loss = |y-x|$ and $\loss = |y-x|^2$.
It can be seen that the error of all mechanisms decreases with $\varepsilon$, but OGPM and PM-C decreases faster than SW-C.
\ For $\loss = |y-x|$, OGPM has almost the same worst-case error as PM-C; for $\loss = |y-x|^2$, OGPM's error is slightly smaller than PM-C.
For both metrics, OGPM's worst-case error is the lowest across all $\varepsilon$ values.
\ Statistically, 
under $\loss = |y-x|^2$, OGPM's average error is $89.9\%$ of PM-C and $61.7\%$ of SW-C.

\textbf{Circular Domain.}
Figure~\ref{fig:exp:circular_whole_domain} shows the comparison of the whole-domain error in the circular domain.
We use distance metric $\loss = |y-x|^2$ and set $\varepsilon=2$ and $\varepsilon=4$ for the comparison.
It can be seen that OGPM consistently has the lowest error across all $x$ values, and the error is stable across $x$,
which is consistent with the theoretical analysis in Theorem~\ref{theo:circular_mse}.
\ For PM-C and SW-C, which treat the circular domain as the classical domain, their errors vary with $x$ and are higher than OGPM's error, especially near the endpoints.
\ Statistically, under $\loss = |y-x|^2$ with $\varepsilon = 2$, OGPM's average error is $47.5\%$ of PM-C and $43.0\%$ of SW-C across the whole domain.
When $\varepsilon = 4$, OGPM's average error is $41.3\%$ of PM-C and $24.6\%$ of SW-C.

Figure~\ref{fig:exp:circular_worst_case} shows the comparison of the worst-case error w.r.t $\varepsilon$ in the circular domain.
The error is measured by two distance metrics: $\loss = |y-x|$ and $\loss = |y-x|^2$.
Similar to the classical domain, OGPM has the lowest worst-case error across all $\varepsilon$ values,
and the advantage is more significant, especially for small $\varepsilon$ values.
\ Statistically, under $\loss = |y-x|$, OGPM's average error is $50.0\%$ of PM-C and $41.6\%$ of SW-C across the range of $\varepsilon$.
Under $\loss = |y-x|^2$, OGPM's average error is $22.4\%$ of PM-C and $15.4\%$ of SW-C.

The above comparisons show that OGPM has the lowest expected error in both the classical and circular domains.
The advantage of OGPM is more significant in the circular domain, where the error is stable across $x$.

\subsubsection{Comparison with PM and SW} \label{subsec:comparison_with_PM_SW_original}
Figure~\ref{fig:exp:ablation} presents the comparison of the whole-domain error in the classical domain for the original PM and SW mechanisms, along with their post-processed versions, T-PM and T-SW. 
For a fair comparison, OGPM is adapted to the domain $\indomain = [-1,1)$ to match PM's design, 
while SW and OGPM remain consistent with $\indomain = [0,1)$. 
The post-processing of PM and SW involves truncating their outputs in the enlarged domain to the input domain, i.e. applying $\mathcal{I}\circ \mechanism(x)$, where $\mathcal{I}: \outdomain \to \indomain$ is the truncation operator.
We use the distance metric $\loss = |y-x|^2$ and set $\varepsilon=2$ for the comparison among these five mechanisms. 
It can be observed that OGPM consistently achieves the lowest error across all $x$ values, with a more significant advantage compared to the comparison with PM-C and SW-C. 
This is because the original PM and SW output larger domains, resulting in higher errors. 
Meanwhile, T-PM reduces the error of PM more effectively than T-SW reduces the error of SW, 
as the original PM has a more enlarged output domain than SW, making truncation more impactful. 
This comparison highlights OGPM's error advantage over the original PM, SW, and their post-processed versions when applied to their respective data domains.

\begin{figure}
    \centering
    \begin{subfigure}[b]{0.465\linewidth}
        \includegraphics[width=0.98\linewidth]{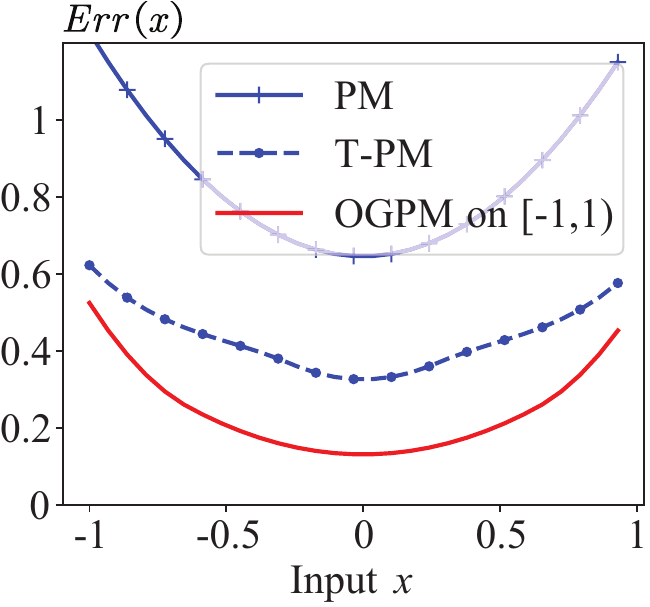}
        \caption{Comparison with PM.}
    \end{subfigure}
    \hfill
    \begin{subfigure}[b]{0.465\linewidth}
        \includegraphics[width=0.98\linewidth]{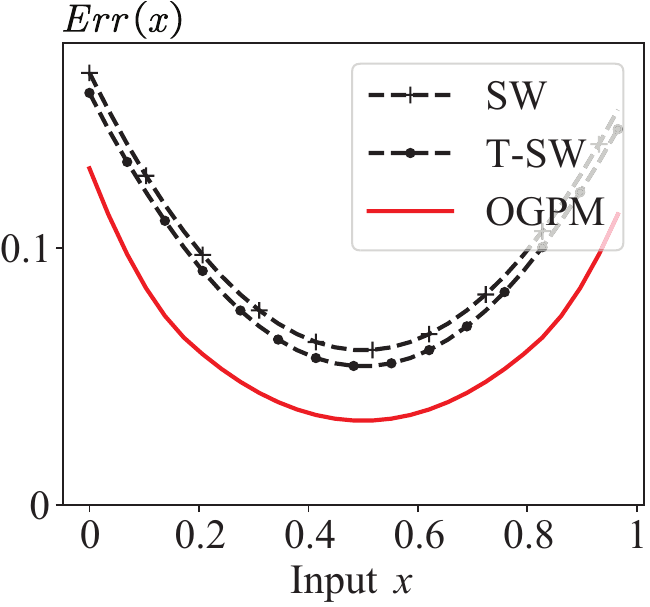}
        \caption{Comparison with SW.}
    \end{subfigure}
    \caption{Whole-domain error comparison with PM and SW on their data domains (i.e. $\indomain = [-1,1)$ and $\indomain = [0,1)$, respectively) when $\varepsilon = 2$.}
    \label{fig:exp:ablation}
\end{figure}

\begin{figure}
    \centering
    \begin{subfigure}[b]{0.465\linewidth}
        \includegraphics[width=0.98\linewidth]{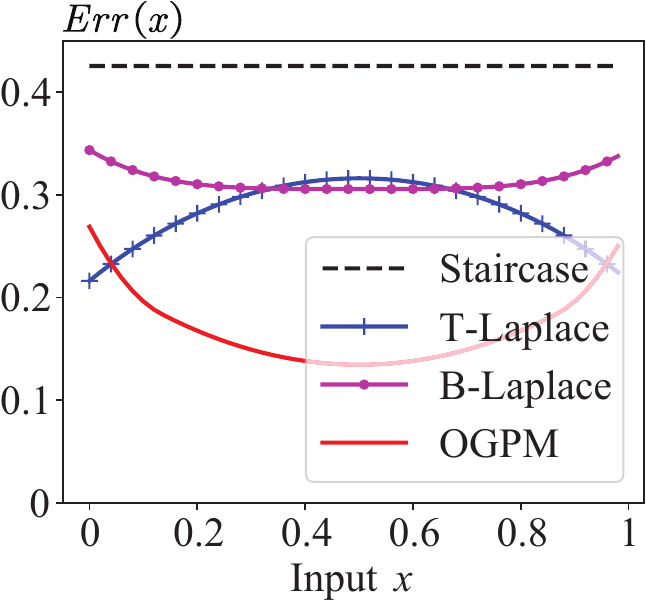}
        \caption{Privacy parameter $\varepsilon = 2$.}
    \end{subfigure}
    \hfill
    \begin{subfigure}[b]{0.465\linewidth}
        \includegraphics[width=0.98\linewidth]{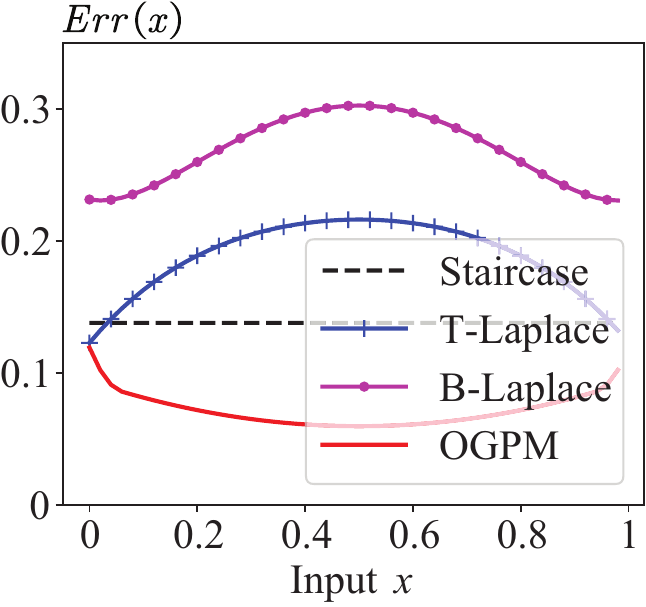}
        \caption{Privacy parameter $\varepsilon = 4$.}
    \end{subfigure}
    \caption{Whole-domain error comparison with the staircase mechanism~\cite{DBLP:conf/isit/GengV14},
    T-Laplace and B-Laplace mechanisms~\cite{DBLP:journals/jpc/HolohanABA20} in the classical domain with error metric $\loss = |y-x|$.}    
    \label{fig:exp:staircase_truncated}
\end{figure}

\subsubsection{Comparison with the Staircase Mechanism, T-Laplace, and B-Laplace} \label{subsec:staircase_truncated}
In addition to piecewise-based mechanisms, the Laplace mechanism and its variants can also be applied to the classical domain to achieve LDP. 
Among these, the staircase mechanism~\cite{DBLP:conf/isit/GengV14} claims to be optimal under certain assumptions. 
For the input domain $\indomain = [0,1)$ (i.e. sensitivity $\Delta = 1$) and error metric $\loss = |y-x|$, its expected error is given by Theorem 3 in~\cite{DBLP:conf/isit/GengV14}:
$\exp(\varepsilon / 2) / (\exp(\varepsilon) - 1)$.
\ Another approach involves using the Laplace mechanism with truncation~\cite{DBLP:journals/jpc/HolohanABA20}, referred to here as T-Laplace for convenience. 
T-Laplace preserves the privacy guarantees of the Laplace mechanism while reducing the expected error, particularly for data points near the endpoints or for small $\varepsilon$ values.
Additionally, the bounded Laplace mechanism (B-Laplace)~\cite{DBLP:journals/jpc/HolohanABA20} introduces a redesigned bounded Laplace-shaped distribution tailored for bounded domains.\footnote{
    Appendix~\ref{appendix:b_laplace} provides details on the expected error of B-Laplace.
}

\begin{figure}
    \begin{minipage}{0.465\linewidth}
        \centering
        \includegraphics[width=0.98\linewidth]{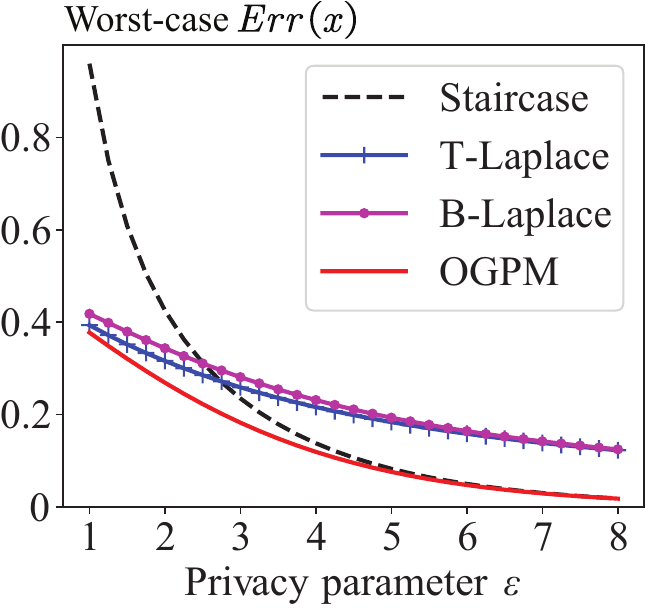}
        \caption{Worst-case error comparison (continued fr\-om Figure~\ref{fig:exp:staircase_truncated}).}   
        \label{fig:exp:staircase_truncated_worst_case}
    \end{minipage}
    \hfill
    \begin{minipage}{0.465\linewidth}
        \centering
        \includegraphics[width=0.98\textwidth]{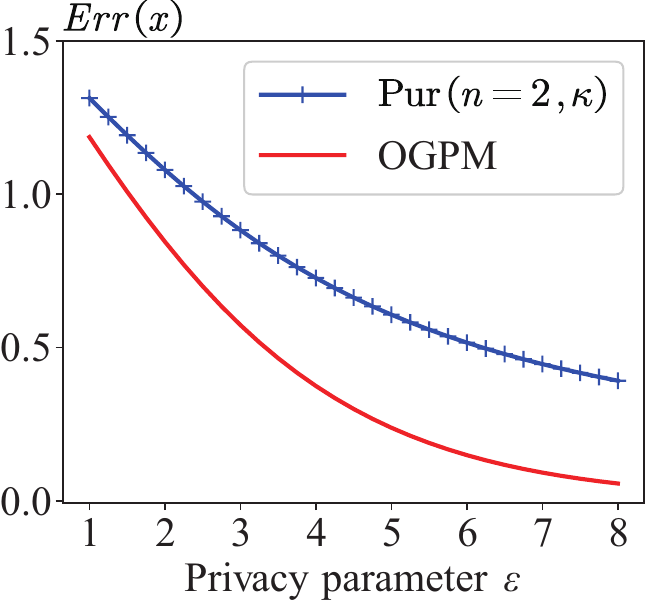}
        \caption{Comparison with the Purkayastha mechani\-sm~\cite{DBLP:conf/ccs/WeggenmannK21} for sphere $\mathbb{S}^{n-1}$.}
        \label{fig:exp:comparison_purkayastha}
    \end{minipage}
\end{figure}

Figure~\ref{fig:exp:staircase_truncated} compares the whole-domain error in the classical domain $\indomain = [0,1)$ 
for the staircase mechanism, T-Laplace, and B-Laplace. 
These mechanisms exhibit distinct error patterns across the domain.
\ For the staircase mechanism, the error remains constant, as it is determined by a fixed staircase distribution and is independent of $x$.
For T-Laplace, the error reaches its maximum at the midpoint and its minimum at the endpoints, 
as truncation favors the endpoints. 
For instance, when $x=0$, it is error-free with a probability of $1/2$, 
due to the symmetry of the Laplace distribution around $0$.
For B-Laplace, the error trend varies with $\varepsilon$. 
When $\varepsilon = 2$, the error decreases with $x$ and reaches its minimum at the midpoint, 
whereas for $\varepsilon = 4$, the error increases with $x$ and peaks at the midpoint.
Despite these differing error patterns, OGPM generally achieves lower errors than the staircase mechanism, T-Laplace,
and B-Laplace across the whole domain.

Figure~\ref{fig:exp:staircase_truncated_worst_case} compares the worst-case error w.r.t. $\varepsilon$ in the classical domain.
OGPM consistently achieves the lowest worst-case error across all $\varepsilon$ values.
For small $\varepsilon$, T-Laplace and B-Laplace exhibit a significant advantage over the staircase mechanism; however, this advantage diminishes as $\varepsilon$ increases.
At larger $\varepsilon$ values, the error of the staircase mechanism approaches that of OGPM.

\subsubsection{Comparison with the Purkayastha Mechanism}
The paper ``Differential Privacy for Directional Data''~\cite{DBLP:conf/ccs/WeggenmannK21} introduces two mechanisms for data on spheres $\mathbb{S}^{n-1}$: 
the VMF mechanism (ensuring indistinguishability of any two points with distance \emph{through} the sphere) 
and the Purkayastha mechanism (ensuring indistinguishability of any two points with distance \emph{along} the sphere). 
When $n = 2$, the sphere $\mathbb{S}^{1}$ corresponds to a circle, making the Purkayastha mechanism a counterpart of OGPM in the circular domain. 
Therefore, we compare them in the circular domain.\footnote{
    We omit the comparison with the VMF mechanism also because (i) it has been shown that the Purkayastha mechanism outperforms the VMF mechanism (with the same sensitivity $\Delta_{\measuredangle} = \pi$ for sphere $\mathbb{S}^{1}$,
    e.g. Figure 5 and 10 in~\cite{DBLP:conf/ccs/WeggenmannK21}), and (ii) the expected error of the VMF mechanism lacks a closed-form expression (Theorem 17 in~\cite{DBLP:conf/ccs/WeggenmannK21}), 
    making it complex to compute.}

Figure~\ref{fig:exp:comparison_purkayastha} presents the comparison of the expected error in the circular domain between OGPM and the Purkayastha mechanism. 
The expected error of the Purkayastha mechanism is derived using the closed-form expressions in Theorem 19 and 22 of~\cite{DBLP:conf/ccs/WeggenmannK21}, 
with $\kappa = \varepsilon / \Delta_{\measuredangle}$. 
Since the errors of both mechanisms are $x$-independent in the circular domain, it suffices to compare their worst-case errors. 
The results demonstrate that OGPM consistently outperforms the Purkayastha mechanism, achieving significantly lower errors.

\subsection{Distribution and Mean Estimations} \label{subsec:experimental_evaluation}

This section compares the experimental data utility of the mechanisms in distribution and mean estimations.

\subsubsection{Setup}
We choose the MotionSense dataset~\cite{noauthor_motionsense_nodate, Malekzadeh:2019:MSD:3302505.3310068} for the evaluation.
It contains smartphone sensor data recorded during various human activities.
Specifically, we use the data from the first three files, encompassing a total of $6\,159$ data entries.
We focus on two types of sensors:
\begin{itemize}
    \item Accelerometer (linear data): We normalize the dataset to $[0,1)$ for the classical domain.
    \item Attitude sensor (angular data):  We use this dataset for the circular domain.
\end{itemize}
Upon applying each LDP mechanism to these datasets, we evaluate the accuracy of distribution and mean estimations on them.
For distribution estimation, we divide the domain into $k = 50$ bins and
compute the distance between the estimated distribution ($\hat{\mathcal{F}}_B$) and the true distribution ($\mathcal{F}_B$) 
by summing the absolute difference of each bin's value. Formally,
\begin{equation*}
    \mid \hat{\mathcal{F}}_B - \mathcal{F}_B \mid \coloneq \sum_{i=1}^{50} \mid \hat{\mathcal{F}}_{B_i} - \mathcal{F}_{B_i} \mid,
\end{equation*}
where $\hat{\mathcal{F}}_{B_i}$ and $\mathcal{F}_{B_i}$ are the $i$-th bin's value of the estimated and true distributions, respectively.
\ Although this approach cannot capture the property of the circular data, it remains the most viable metric for comparing circular distributions~\cite{fisher_statistical_2000,mardia2009directional}. 
Under a more relevant approach, the performance of OGPM for the circular domain could be even better. 

For mean estimation, we compute the absolute difference between the estimated and true mean, i.e. $|\hat{\mu} - \mu|$,
where $\hat{\mu}$ is the estimated mean and $\mu$ is the true mean in the classical domain or the circular domain.
In the classical domain, the true mean is $\mu = \frac{1}{n}\sum_{i=1}^n x_i$.
In the circular domain, the mean is computed by the circular mean formula~\cite{fisher_statistical_2000,mardia2009directional}:
\begin{equation*}
    \mu = \text{atan2}\left(\frac{1}{n}\sum_{i=1}^n \sin x_i, \frac{1}{n}\sum_{i=1}^n \cos x_i\right).
\end{equation*}
We repeat the experiments $500$ times for stable results and report the average error.

\subsubsection{Distribution Estimation}

Figure~\ref{fig:exp:distribution_estimation} shows the comparison of the errors of distribution estimation in the classical and circular domains.
We can see that OGPM outperforms SW and PM with smaller errors in both types of domains.
\ In the classical domain, OGPM's error decreases faster when $\varepsilon$ increases above $3$,
and SW-C decreases slower than PM-C, consistent with the expected error comparison.
\ In the circular domain, OGPM's error is significantly lower than SW-C and PM-C,
despite the limitation of the distance metric used for circular distributions.
We also observe that SW-C performs better than PM-C in this domain.
This is because SW has higher sampling probabilities for both the central piece and other pieces, 
making it sample the true value more frequently when $\varepsilon$ is large in practice in a large-size domain,
despite the large expected error theoretically.
\ Statistically, OGPM's distribution error is $93.5\%$ of PM-C and $86.7\%$ of SW-C in the classical domain,
and $72.2\%$ of PM-C and $84.0\%$ of SW-C in the circular domain.

\subsubsection{Mean Estimation}
Figure~\ref{fig:exp:mean_estimation} shows the comparison of the errors of mean estimation in the classical and circular domains.
OGPM consistently outperforms other mechanisms in both types of domains, with significantly lower errors.
\ In the classical domain, we also compare with OGPM-U, which is specifically designed for unbiased mean estimation. 
Since the Accelerometer dataset is concentrated around zero, 
it particularly favors unbiased mechanisms, as their outputs tend to average closely to the true mean.
We can see that OGPM-U achieves significantly lower errors than OGPM when $\varepsilon$ is small.
\ In the circular domain, OGPM is unbiased, having a negligible mean estimation error.
We also observe that SW-C outperforms PM-C in this large-scale domain.
\ Statistically, OGPM's mean estimation error is $66.2\%$ of PM-C and $55.4\%$ of SW-C in the classical domain.
In the circular domain, OGPM's mean estimation error is merely $2.3\%$ of PM-C and $3.6\%$ of SW-C.


\begin{figure}[t]
    \centering
    \begin{subfigure}[b]{0.48\linewidth}
        \includegraphics[width=0.98\linewidth]{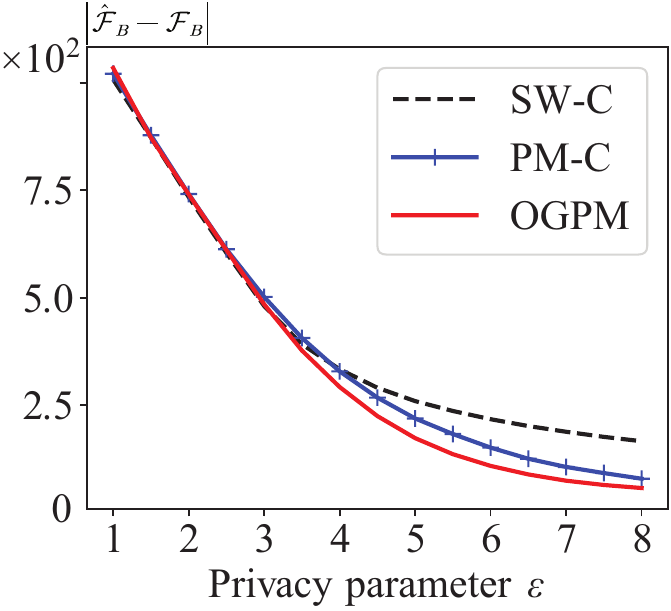}
        \caption{Classical domain.}
    \end{subfigure}
    \begin{subfigure}[b]{0.48\linewidth}
        \includegraphics[width=0.98\linewidth]{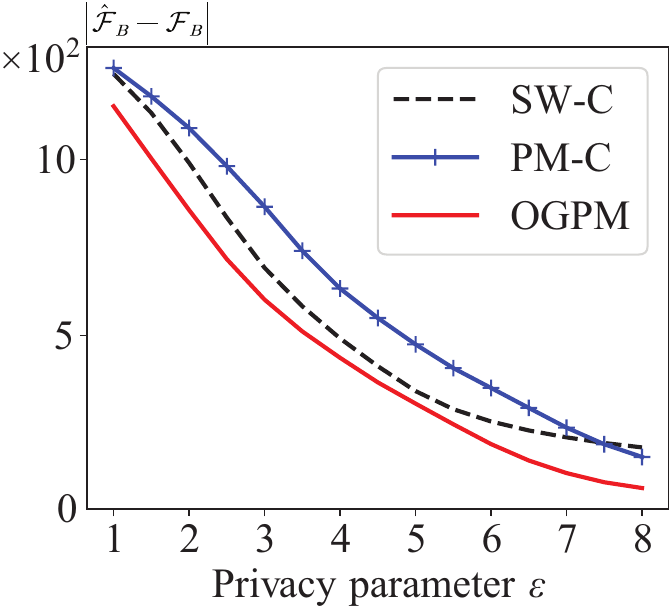}
        \caption{Circular domain.}
    \end{subfigure}
    \caption{Comparison of distribution estimation error.}
    \label{fig:exp:distribution_estimation}
\end{figure}

\begin{figure}[t]
    \centering
    \begin{subfigure}[b]{0.48\linewidth}
        \includegraphics[width=0.98\linewidth]{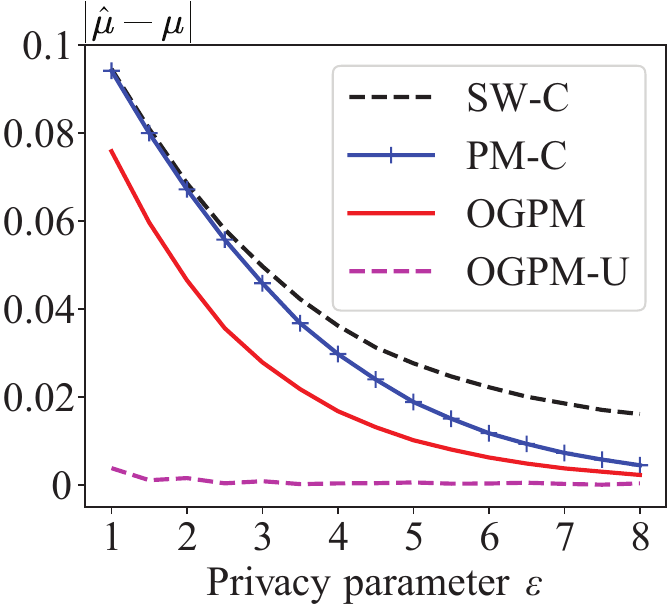}
        \caption{Classical domain.}
    \end{subfigure}
    \begin{subfigure}[b]{0.48\linewidth}
        \includegraphics[width=0.98\linewidth]{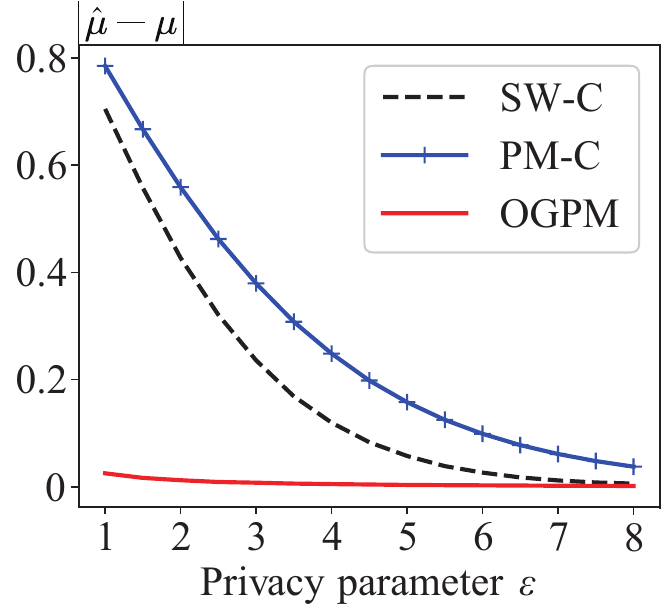}
        \caption{Circular domain.}
    \end{subfigure}
    \caption{Comparison of mean estimation error.}
    \label{fig:exp:mean_estimation}
\end{figure}

\section{Related Work}

This paper focuses on the optimal mechanism for collecting numerical data with bounded domains under LDP,
related to numerical data collection and the optimality of DP mechanisms.

\subsubsection*{Numerical Data Collection under LDP}

Classical noise-adding mechanisms such as the Laplace and Gauss mechanisms~\cite{DBLP:conf/tcc/DworkMNS06,DBLP:journals/fttcs/DworkR14}
add randomly sampled noise from a distribution to the data to achieve LDP.
However, they generate outputs in an unbounded domain due to the unbounded noise distributions,
rendering them unsuitable for applications requiring bounded domains~\cite{DBLP:conf/icde/WangXYZHSS019}.

For bounded data domains, the basic idea is to sample the output from a carefully designed distribution on the bounded domain.
Duchi et al. randomize any data in $[-1,1]$ to two discrete values $y \in \{-C_\varepsilon, C_\varepsilon\}$~\cite{DBLP:journals/corr/DuchiWJ16},
where $C_\varepsilon$ is a constant depending on the privacy level $\varepsilon$.
This binary-output mechanism exhibits a large randomization error as the output space is too coarse.
\ PM~\cite{DBLP:conf/icde/WangXYZHSS019} extends the binary output to a continuous output in $[-C_\varepsilon, C_\varepsilon]$.
It designs a piecewise-based mechanism to sample the output, which uses different sampling intervals for different $x$, achieving a lower randomization error.
Both PM and later SW~\cite{DBLP:conf/sigmod/Li0LLS20} have shown the data utility advantage of TPM in numerical data collection with bounded domains under LDP.
PTT~\cite{ma10430379} discusses the optimality of TPM.
It shows that there exist TPM instantiations that yield optimal data utility, but it does not provide the closed-form mechanism.
\ TPM is a special case of GPM using $3$-piece distributions and a specific form for each piece. 
Meanwhile, existing TPM instantiations are designed for specific error metrics and only classical domains.

\textbf{Applications of TPM.}
A natural application of TPM is in high-dimensional numerical data.
This includes scenarios where the sensitive data may be a high-dimensional vector~\cite{DBLP:conf/icde/WangXYZHSS019,DBLP:journals/compsec/ZhangNFHZ23},
an infinite data stream~\cite{DBLP:conf/sigmod/RenSYYZX22}, or matrixes~\cite{DBLP:journals/eswa/WangGRMZ23}.
\ Another application area is federated learning.
TPM ensures LDP in the $[0,1]$ domain,
which is commonly used as the normalized domain in model training.
TPM avoids the data clipping required by noise-adding mechanisms~\cite{DBLP:conf/ccs/AbadiCGMMT016,DBLP:journals/eswa/LiuTTDC23,DBLP:journals/corr/abs-2405-08299}.
\ Another typical application is in sensors, where the data is bounded by the sensor's physical nature.
This paper can replace the existing TPM to achieve better utility.

A recent work aims to design other types of bounded distributions besides TPM to achieve (L)DP~\cite{DBLP:conf/sp/ZhangZSTHYXC24}.
They tailor sin functions and quartic functions on the bounded domain to satisfy the DP constraint.
From their experimental results, the piecewise-based design is the best choice among their six instantiations for bounded domains under DP.
This also indicates the advantages of piecewise-based mechanisms in the bounded domain.

\subsubsection*{Optimality of DP Mechanisms}

Achieving optimal data utility is a common concern across all DP mechanisms.
The staircase mechanism showed that the Laplace mechanism does not generate optimal noise~\cite{DBLP:journals/isci/Soria-ComasD13,DBLP:conf/isit/GengV14}.
It samples noise from a staircase distribution, which has been shown to achieve lower error compared to the Laplace mechanism.
\ Besides optimality in concrete error values (which is the focus of this paper), another widely focused 
concept is asymptotic optimality~\cite{DBLP:journals/corr/abs-1902-04495,ma10430379},
which studies optimal asymptotic bounds of a mechanism or a statistical estimation.
Appendix~\ref{appendix:optimality_discussion} discusses more detailed optimalities.

Beyond the optimality of a single DP procedure, 
the optimality of multifold compositions such as iterative training under DP, is studied by advanced compositions~\cite{DBLP:journals/toc/MurtaghV18,DBLP:journals/tit/KairouzOV17,DBLP:journals/corr/abs-1905-02383,DBLP:conf/icml/DongD020}.
In high-dimensional settings, the sensitivity set across dimensions also influences the optimality~\cite{DBLP:conf/ccs/Xiao0D23}, 
because it affects the choice of the privacy parameter.
These works either focus on different privacy constraints or discuss optimality beyond a single DP procedure, making them orthogonal to ours. 



\section{Conclusions}

This paper presents the optimal piecewise-based mechanism for collecting numerical data with bounded domains under LDP.
To find the optimal mechanism among all possible piecewise mechanisms,
we generalize the existing $3$-piece mechanism to an $m$-piece mechanism with the most general form.
We proposed a framework that combines analytical proofs and off-the-shelf optimization solvers to find the optimal mechanism.
Our results include the closed-form optimal piecewise mechanisms for both the classical and circular domain.
Theoretical and experimental evaluations confirm the advantages of our mechanisms over existing mechanisms.


    
\clearpage

\begin{acks}
    We thank the anonymous reviewers and the revision editor for their valuable feedback and guidance, which significantly improved this paper.
    We also acknowledge the use of GPT-4 for language refinement in this paper.
    \ This research is supported in part by the U.S. National Science Foundation under grants CNS-2245689 (CRII) and DGE-2336252,
    as well as by the 2022 Meta Research Award for Privacy-Enhancing Technologies.
\end{acks}

\bibliographystyle{ACM-Reference-Format}
\bibliography{reference}
\appendix

\section{Proofs} \label{appendix:proofs}

\textbf{Notations.} Table~\ref{tab:notations} provides the notations used throughout this paper.

\subsection{Proof of Lemma~\ref{lemma:endpoint}} \label{appendix:endpoint}

\begin{proof}
    We prove the inner $\max_x$ problem has a closed form. According to the definition of $\mathcal{P}_{\mechanism(x)}$,
    it is
    \begin{equation*}
        \max_x\int_{\outdomain} \loss(y, x)\mathcal{P}_{\mechanism(x)}\mathrm{d}y = \max_x \sum_{i = 1}^{m} p_i\int_{l_i}^{r_i} \loss(y, x)\mathrm{d}y.
    \end{equation*}
    Denote $f_i(x)\coloneq \int_{l_i}^{r_i} \loss(y, x)\mathrm{d}y$, where $\loss(y, x) = |y-x|^p$.
    First, we prove that $f_i(x)$ is a convex function w.r.t. $x$.
    \ Specifically, based on the relationship between $x$ and $[l_i, r_i)$, the value of $x$ is split into three cases:
    (i) $x\in [a, l_i)$, (ii) $x\in [l_i, r_i)$, and (iii) $x\in [r_i, b)$.
    We prove the second derivative of $f_i(x)$ w.r.t. $x$ is non-negative in each case,
    thus $f_i(x)$ is convex.

    \textbf{Case (i):} $x\in [a, l_i)$.
    The integral is:
    \begin{equation*}
        \begin{split}
            f_i(x)&=\int_{l_i}^{r_i} |y-x|^p \mathrm{d}y=\int_{l_i}^{r_i} (y-x)^p \mathrm{d}y\\
            &=\frac{(r_i-x)^{p+1}-(l_i-x)^{p+1}}{p+1}.
        \end{split}
    \end{equation*}
    The second derivative w.r.t. $x$ is
    \begin{equation*}
            \frac{\partial^2}{\partial x^2} f_i(x) = p(r_i-x)^{p-1} - p(l_i-x)^{p-1} \geq 0.
    \end{equation*}
    The inequality holds because $(r_i-x)^{p-1} \geq (l_i-x)^{p-1}$ for $x\in [a, l_i)$.

    \textbf{Case (ii):} $x\in [l_i, r_i)$. The integral is
    \begin{equation*}
        \begin{split}
            f_i(x)&=\int_{l_i}^{r_i} |y-x|^p \mathrm{d}y\\
            &=\int_{l_i}^{x} (x-y)^p \mathrm{d}y+\int_{x}^{r_i} (y-x)^p \mathrm{d}y\\
            &=\frac{(x-l_i)^{p+1}}{p+1}+\frac{(r_i-x)^{p+1}}{p+1}.
        \end{split}
    \end{equation*}
    The second derivative w.r.t. $x$ is
    \begin{equation*}
        \begin{split}
            \frac{\partial^2}{\partial x^2} f_i(x) &= p(x-l_i)^{p-1} + p(r_i-x)^{p-1} \geq 0.
        \end{split}
    \end{equation*}
    The inequality holds because both $(x-l_i)^{p-1}$ and $(r_i-x)^{p-1}$ are non-negative for $x\in [l_i, r_i)$.

    \textbf{Case (iii):} $x\in [r_i, b)$.
    The integral is
    \begin{equation*}
        \begin{split}
            f_i(x)&=\int_{l_i}^{r_i} |y-x|^p \mathrm{d}y=\int_{l_i}^{r_i} (x-y)^p \mathrm{d}y\\
            &=\frac{(x-l_i)^{p+1}-(x-r_i)^{p+1}}{p+1}.
        \end{split}
    \end{equation*}
    The second derivative w.r.t. $x$ is
    \begin{equation*}
        \begin{split}
            \frac{\partial^2}{\partial x^2} f_i(x) &= p(x-l_i)^{p-1} - p(x-r_i)^{p-1} \geq 0.
        \end{split}
    \end{equation*}
    The inequality holds because $(x-l_i)^{p-1} \geq (x-r_i)^{p-1}$ for $x\in [r_i, b)$.

    The above three cases show that the second derivative of $f_i(x)$ w.r.t. $x\in [a,b)$ is non-negative.
    Thus, the non-negative weighted sum $\sum_{i = 1}^{m} p_i f_i(x)$ is also convex w.r.t. $x$ \cite{boyd2004convex}.
    \ According to the Bauer maximum principle~\cite{noauthor_bauer_2024}:
    any function that is convex attains its maximum at some extreme points of set.
    This means that the optimal $x$ is achieved at the endpoints of $x \in \indomain$, i.e. $x = a$ or $x = b$.
    Therefore, we have
    \begin{equation*}
        \max_x\int_{\outdomain} \loss(y, x)\mathcal{P}_{\mechanism(x)}\mathrm{d}y = \max_{\{a,b\}}\int_{\outdomain} \loss(y, x)\mathcal{P}_{\mechanism(x)}\mathrm{d}y,
    \end{equation*}
    which completes the proof.
\end{proof}

\emph{Remark:} 
This lemma can be empirically validated by the whole-domain error plots in Figure~\ref{fig:exp:classical_whole_domain} and Figure~\ref{fig:exp:circular_whole_domain},
where the maximum of the whole-domain error is achieved at the endpoints.

\begin{table}[t]
    \begin{center}
        \caption{Notations}\label{tab:notations}
        \begin{tabular}{l l}
            \toprule
            \textbf{Symbol}\hspace{3.5em} & \textbf{Description} \\
            \midrule
            $x$ & Sensitive input (from raw data) \\
            $y$ & Randomized output \\

            $\mathcal{D}$ & Input domain \\ 
            $\tilde{\mathcal{D}}$ & Output domain \\

            $pdf[\cdot]$ & Probability density function \\
            $\mathcal{P}_{\mechanism(x)}$ & Probability density function of $\mechanism(x)$ \\
            \midrule
            $p_{\varepsilon}$ & Sampling probability w.r.t. $\varepsilon$ \\
            $[l_{x, \varepsilon}, r_{x, \varepsilon})$ & Sampling interval w.r.t. $x$ and $\varepsilon$ \\
            \bottomrule
        \end{tabular}
    \end{center}
\end{table}

\subsection{Proof of Lemma~\ref{lemma:plus_piece}} \label{appendix:plus_piece}
\begin{proof}
    The optimal $(m+1)$-piecewise mechanism and the optimal $m$-piecewise mechanism may superficially differ due to the extra piece.
    Thus, we define a piece-merging operation to merge the redundant pieces.
    We will show that if the optimal $(m+1)$-piecewise mechanism is the same as the optimal $m$-piecewise mechanism after merging redundant pieces,
    then increasing $m$ does not decrease the optimal error, i.e. the optimal piece number is $m$.

    Assume the optimal $m$-piecewise mechanism is determined by the tuple set
    \begin{equation*}
        S_m = \{ (p_i, l_i, r_i) : i \in [m] \}.
    \end{equation*}
    To merge redundant pieces, we define a piece-merging operation:
    \begin{equation*}
        (p_i, l_i, r_i) \uplus (p_j, l_j, r_j) 
        \coloneq 
        \begin{dcases}
            (p_i, l_i, r_j) \quad \text{if} \ p_i = p_j \ \text{and} \ i+1 = j, \\ 
            \{(p_i, l_i, r_i), (p_j, l_j, r_j)\} \quad \text{otherwise}.
            \end{dcases}
    \end{equation*}
    Because the optimal ($m+1$)-piecewise mechanism is the same as the $m$-piecewise mechanism, it follows that
    \begin{equation*} 
        \uplus_{i,j = 1}^{m+1}S_{m+1} = \uplus_{i,j = 1}^{m}S_{m},
    \end{equation*} 
    where $\uplus_{i,j = 1}^{m}S_{m}$ merges all consecutive pieces with the same $p$.
    Denote the merged optimal $m$-piecewise mechanism as $\uplus_{i,j = 1}^{m}S_{m} \coloneq S_m^{\uplus}$ and the piece number as $|S_m^{\uplus}| = m^*$.
    Because both sides of the above equation are optimal, this means that 
    if $(p_k, l_k, r_k)$ is an arbitrary piece in the optimal ($m+1$)-piecewise mechanism, then merging it with $S_m^{\uplus}$ remains $S_m^{\uplus}$, i.e.
    \begin{equation*}
        (p_k, l_k, r_k) \uplus_{i = 1}^{m^*} S_m^{\uplus} = S_m^{\uplus}.
    \end{equation*}
    This premise indicates that there does not exist a piece $(p_k, l_k, r_k)$ besides $S_m^{\uplus}$ lowers the error.

    Without loss of generality, we can consider the optimal $(m+2)$-piecewise mechanism,
    which allows an extra piece besides the optimal $(m+1)$-piecewise mechanism.
    We claim that the extra piece is still captured by $S_m^{\uplus}$.
    \ The key insight is: adding an extra optimal piece to the optimal $(m+1)$-piecewise mechanism
    is the \emph{same} as adding it to the optimal $m$-piecewise mechanism,
    because the optimal $(m+1)$-piecewise mechanism is the same as the optimal $m$-piecewise mechanism.

    Since adding an extra optimal piece to the optimal $m$-piecewise mechanism remains $S_m^{\uplus}$,
    then for any $k \in [m+2]$, merging piece $(p_k, l_k, r_k)$ in the optimal ($m+2$)-piecewise mechanism
    remains an $m$-piecewise mechanism $S_m^{\uplus}$.
    
    For $m + 3$ or more, it follows the same logic. Adding an arbitrary piece is equivalent to adding it to the optimal $(m+2)$-piecewise mechanism,
    which is the same as adding it to the optimal $m$-piecewise mechanism.
    Thus, the optimal $m$-piecewise mechanism is the same as the optimal $(m+3)$-piecewise mechanism after merging redundant pieces,
    and so on.
\end{proof}

\emph{Remark:} Intuitively, the extra pieces (of the optimal $(m+1)$-piecewise mechanism and beyond) 
is similar to the redundant variables in optimization theory:
adding more non-negative variables to a minimization objective does not decrease the \emph{optimal} value.
Here the error from each piece is a variable, and it is non-negative, which leads to the same conclusion:
adding more pieces (one or more) to the optimal $m$-piecewise mechanism does not decrease the optimal error.

This lemma means that we can determine the optimal $m$-piecewise mechanism for $m = 1, 2, \ldots$, 
until the optimal $(m+1)$-piecewise mechanism is identical to the optimal $m$-piecewise mechanism for all $x$ and $\varepsilon$. 
This statement can be empirically validated by attempting to find counterexamples using larger $m$ than the optimal $m$. 
The source code of our framework provides scripts and results to empirically validate this lemma.

\subsection{Proof of Theorem~\ref{theo:transformation_invariant}} \label{appendix:transformation_invariant}

\begin{proof}
\textbf{Privacy invariant:} For any input $v, v' \in \indomain'$ and any output $y \in \outdomain'$:
\begin{equation*}
    \frac{pdf\left[\mechanism'(v) = y\right]}{pdf\left[\mechanism'(v') = y\right]} \leq \frac{p}{c} \div \frac{p}{\exp(\varepsilon)c}= \exp(\varepsilon).
\end{equation*}

\textbf{Utility invariant:} For any $x' = cx + d \in \indomain'$, we can calculate the error difference between $\mechanism'(x')$ and $\mechanism'_{\text{bad}}(x')$ as follows:
\begin{equation*}
    \begin{split}
        &Err(x', \mechanism') - Err(x', \mechanism'_{\text{bad}}) \\
        =&Err(cx+d, \mechanism') - Err(cx+d, \mechanism'_{\text{bad}}) \\
        =& \int_{\outdomain'} \loss(y,cx+d)\left(\mathcal{P}_{\mechanism'(cx+d)} - \mathcal{P}_{\mechanism_{\text{bad}}'(cx+d)}\right)\mathrm{d}y. \\
    \end{split}
\end{equation*}
Let $y_t = (y-d)/c$, then $\mathrm{d}y = c\mathrm{d}y_t$ and $y = cy_t + d$, where $y_t \in \outdomain$.
The above equation is equivalent to
\begin{equation*}
    \begin{split}
        & Err(x', \mechanism') - Err(x', \mechanism'_{\text{bad}}) \\
        =& \int_{\outdomain} \loss(cy_t + d, cx+d)\left(\mathcal{P}_{\mechanism'(cx+d)} - \mathcal{P}_{\mechanism_{\text{bad}}'(cx+d)}\right)c\mathrm{d}y_t \\
        =& \int_{\outdomain} \loss(cy_t + d, cx+d)\frac{1}{c}\left(\mathcal{P}_{\mechanism(x)} - \mathcal{P}_{\mechanism_{\text{bad}}(x)}\right)c\mathrm{d}y_t.
    \end{split}
\end{equation*}
The last equality holds due to the definition of $\trans: \mechanism \to \mechanism'$.
For $L_p$-similar error metric $\loss$ (i.e. $\loss(y,x) \coloneq |y-x|^p$), it follows that
\begin{equation*}
    \loss(cy_t + d, cx+d) = \loss(cy_t, cx) = c^p\loss(y_t, x).
\end{equation*}
Thus, the above difference of $Err$ is equivalent to
\begin{equation*}
    \begin{split}
        & Err(x', \mechanism') - Err(x', \mechanism'_{\text{bad}}) \\
        =& c^p\int_{\outdomain} \loss(y_t, x)\left(\mechanism(x) - \mechanism_{\text{bad}}(x)\right)\mathrm{d}y_t \\
        =& c^p\left(Err(x, \mechanism) - Err(x, \mechanism_{\text{bad}})\right) \leq 0,
    \end{split}
\end{equation*}
due to the known fact $c>0$ and $Err(x, \mechanism) - Err(x, \mechanism_{\text{bad}}) \leq 0$.
\end{proof}

\emph{Remark:} Intuitively, this theorem is to prove: if $\mechanism$ is a better mechanism than $\mechanism_{\text{bad}}$ on $\indomain$,
then it is still a better mechanism than $\mechanism_{\text{bad}}$ after linearly mapping their outputs to $\indomain'$.

\subsection{Proof of Theorem~\ref{theo:optimal_concretization}} \label{appendix:optimal_concretization}

\begin{proof}
    Appendix~\ref{appendix:deduction} provides the formalized procedure. 
    Following this procedure, we show the optimal GPM under $\indomain \to \outdomain = [0,1) \to [0,1)$ and $\loss(y,x) = |y-x|$.
    
    The variables in TPM are $p$, $l$, and $r$. Since it is a family of probability distributions, the normalization constraint is
    \begin{equation*}
        (r - l) \cdot p + (1 - (r-l)) \cdot p/\exp(\varepsilon) = 1,
    \end{equation*}
    which means the length of the central piece is
    \begin{equation*}
        s \coloneq r - l = \frac{\exp(\varepsilon) - p}{p(\exp(\varepsilon) - 1)}.
    \end{equation*}
    Without loss of generality, assume $x = 0$ is the optimal point ($x = 1$ is symmetric).
    The optimization problem for solving the optimal $p$ is
    \begin{equation*}
        \begin{split}
            &\argmin_{p} \left( \int_{0}^{s} y\cdot p\thinspace\mathrm{d}y + \int_{s}^{1} y\cdot \frac{p}{\exp(\varepsilon)}\thinspace\mathrm{d}y \right) \\
            = &\argmin_{p} \left( \frac{s^2}{2}\left(p-\frac{p}{\exp(\varepsilon)}\right) + \frac{1}{2}\frac{p}{\exp(\varepsilon)} \right)\\
            = &\argmin_{p} \frac{1}{2}\left( \frac{(\exp(\varepsilon) - p)^2}{p(\exp(\varepsilon) - 1)\exp(\varepsilon)} + \frac{p}{\exp(\varepsilon)} \right).
        \end{split}
    \end{equation*}
    To solve the optimal $p$, we take the first-order derivative w.r.t. $p$ and set it to $0$, i.e.
    \begin{equation*}
        \frac{\partial}{\partial p} \left(\frac{(\exp(\varepsilon) - p)^2}{p(\exp(\varepsilon) - 1)\exp(\varepsilon)} + \frac{p}{\exp(\varepsilon)} \right) = 0,
    \end{equation*}
    This leads to $p = \exp(\varepsilon/2)$. Then
    \begin{equation*}
        s = \frac{\exp(\varepsilon/2) - 1}{\exp(\varepsilon) - 1}.
    \end{equation*}
    Having solved $p$ and $s$, the optimal $r$ is $r = l + s$. Then the optimal $l$ is determined by 
    \begin{equation*}
        \begin{split}
            \argmin_{l} & \int_{0}^{l} (x-y)\cdot \frac{p}{\exp(\varepsilon)}\thinspace\mathrm{d}y + \int_{l}^{x} (x-y)\cdot p\thinspace\mathrm{d}y + \\
            & \int_{x}^{l+s} (y-x)\cdot p\thinspace\mathrm{d}y + \int_{l+s}^{1} (y-x)\cdot \frac{p}{\exp(\varepsilon)}\thinspace\mathrm{d}y
        \end{split}
    \end{equation*}
    This is a univariate optimization problem w.r.t. $l$. 
    Moreover, it is a two-order polynomial w.r.t. $l$ and can be solved by analyzing the first-order and second-order derivatives.
    The solution is 
    \begin{equation*}
        l = \frac{2x(p - pe^{-\varepsilon}) - s (p - pe^{-\varepsilon})}{2(p - pe^{-\varepsilon})} = x - \frac{s}{2}.
    \end{equation*}
    Note that when $x - s/2 < 0$, the above $l$ is outside the domain $[0,1)$. In this case, the optimal $l$ is $l = 0$.

    Relating the above deduction to Theorem~\ref{theo:optimal_concretization}, the term $s/2$ corresponds to $C$.
    Then the optimal $p$ is $\exp(\varepsilon/2)$, $l = x - C$, and $r = x + C$ when $x \in [C, 1-C)$, which completes the proof
    for $\loss(y,x) = |y-x|$.
    The proof for $\loss(y,x) = |y-x|^2$ is similar.
\end{proof}

\emph{Remark:} This proof is the same as finding the optimal $3$-piecewise distribution in domain $[0,1)$.
The source code of our framework provides the validation.

\begin{figure}[t]
    \centering
    \begin{subfigure}[b]{0.35\linewidth}
        \centering
        \includegraphics[width=0.98\linewidth]{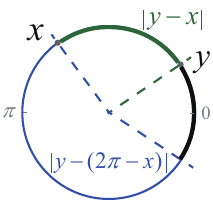}
        \caption{$x = 3\pi / 4$.}
    \end{subfigure}
    \hspace{0.1\linewidth}
    \begin{subfigure}[b]{0.35\linewidth}
        \centering
        \includegraphics[width=0.98\linewidth]{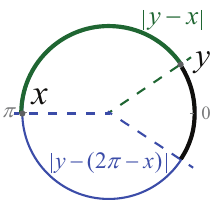}
        \caption{$x = \pi$.}
    \end{subfigure}
    \caption{Examples of $\lossmod(y, x)$ w.r.t. $x$. Given a specific $y$, the shorter between the blue and green arcs is $\lossmod(y, x)$.
    $\max_x\lossmod(y, x)$ is achieved at $x = \pi$.}
    \label{fig:appendix:circular_lemma}
\end{figure}

\subsection{Proof of Lemma~\ref{lemma:lossmod}} \label{appendix:lossmod}

\begin{proof}
    Note that $\lossmod(y, x) = \min \left(\loss(y,x), \loss(y, 2\pi - x)\right)$.
    The key observation is that for any $y$ and $L_p$-similar
    distance metric $\loss$, we have
    \begin{equation*}
        \max_x\lossmod(y, x) = \lossmod(y, \pi) = \loss(y, \pi).
    \end{equation*}

    Figure~\ref{fig:appendix:circular_lemma} illustrates the intuition.
    For any fixed $y$, it compares the length of $|y - x|$ and $|y - (2\pi - x)|$ w.r.t $x\in[0,2\pi)$.
    $\lossmod(y,x)$ is determined by the minimum of the two, and $\max_x\lossmod(y, x)$ will achieved at $x = \pi$.

    The remained proof is more intuitive than the proof of Lemma~\ref{lemma:endpoint},
    as the inner integrand $\lossmod(y, x)$ has a \emph{unique} maximum at $x = \pi$
    for any $y\in [l_i, r_i)$,
    making the swap of $\max_x$ and integration valid.
    Specifically, we have
    \begin{equation*}
        \begin{split}
            \max_x \sum_{i}^{m}p_i\int_{l_i}^{r_i}\lossmod(y,x)\mathrm{d}y &= \sum_{i}^{m}p_i \int_{l_i}^{r_i}\max_x \lossmod(y,x)\mathrm{d}y \\
            &= \sum_{i}^{m}p_i \int_{l_i}^{r_i} \loss(y,\pi)\mathrm{d}y
        \end{split}
    \end{equation*}
    holds trivially, because all other $x$ values always result in smaller $\lossmod(y,x)$.
    Therefore, for any values of $p_i\geq 0$, $l_i\leq r_i$,
    the integration of $\lossmod(y,x)$ is bounded by the integration of $\loss(y, \pi)$.
\end{proof}

\subsection{Proof of Theorem~\ref{theo:unbiased_optimal}} \label{appendix:unbiased_optimal}
\begin{proof}
    We prove the unbiasedness of $\mechanism$. The expectation of the given $\mechanism$ is
    \begin{equation*}
        \begin{split}
              & \mathrm{E}[\mechanism(x)] = \int_{-C}^{C+1} y \cdot \mathcal{P}_{\mechanism(x)} \mathrm{d}y                                                             \\
            = & \int_{-C}^{l}y\frac{p}{\exp(\varepsilon)}\mathrm{d}y + \int_{l}^{r}yp\mathrm{d}y + \int_{r}^{C+1}y\frac{p}{\exp(\varepsilon)}\mathrm{d}y \\
            = & \frac{r^2-l^2}{2}\left(p-\frac{p}{\exp(\varepsilon)}\right) + \left(C+\frac{1}{2}\right)\frac{p}{\exp(\varepsilon)}.
        \end{split}
    \end{equation*}
    Denote $s \coloneq (2C+1)(C-1)/(2C)$, which rewrites $l$ and $r$ as
    \begin{equation*}
        l = \frac{C+1}{2}x - \frac{C-1}{4} -\frac{s}{2},\  r = \frac{C+1}{2}x - \frac{C-1}{4} +\frac{s}{2}.
    \end{equation*}
    Then the above $\mathrm{E}[\mechanism(x)]$ is equivalent to
    \begin{equation*}
        \begin{split}
            &\frac{(C+1)sx - (C-1)s/2}{2}\cdot \frac{4C}{(C^2-1)(2C+1)} + \frac{\exp(-\varepsilon/2)}{2} \\
            =& x - \frac{C-1}{2(C+1)} + \frac{\exp(-\varepsilon/2)}{2} = x,
        \end{split}
    \end{equation*}
    leading to $\mathrm{E}[\mechanism] = x$, i.e. $\mechanism$ is unbiased.
\end{proof}

\section{Complementary Materials}

\subsection{Detailed Comparison with PTT (Section~\ref{subsec:piecewise})} \label{appendix:ptt}

Piecewise transformation technique (PTT)~\cite{ma10430379} is a framework for $3$-piecewise mechanisms.
It shows that (i) many PTT mechanisms are asymptotically optimal when used to obtain an unbiased estimator for mean of numerical data,
and (ii) there is a PTT that reaches the theoretical lower bound on variance.

Under the viewpoint of this paper, type-I PTT focuses on TPM that constrains the probabilities of the central interval ($p$) and the two side intervals ($q$) as
\begin{equation*}
        p = \frac{1}{2ak}\frac{\exp(\varepsilon)}{\exp(\varepsilon) - 1}, \quad q = \frac{\exp(\varepsilon)}{k(\exp(\varepsilon) - 1)},
\end{equation*}
where $a$ and $k$ are parameters to be determined. Type-II PTT focuses on TPM that constrains $p$ and $q$ as
\begin{equation*}
    p = \frac{1}{ak}\frac{\exp(\varepsilon)}{\exp(\varepsilon) - 1}, \quad q = \frac{\exp(\varepsilon)}{k(\exp(\varepsilon) - 1)}.
\end{equation*}
Therefore, PTT is still a specific case of the TPM framework.
Additionally, when discussing optimality of PTT, it gives a value of $a$ for type-I PTT but do not provide the optimal $k$.

\subsection{Related Optimality (Section~\ref{anchor:optimality_conditions})} \label{appendix:optimality_discussion}

In this paper, the optimality of GPM is defined with respect to:
(i) the worst-case $L_p$-similar error metric, 
(ii) bounded numerical domains $\indomain \to \outdomain$ and mechanisms based on piecewise distributions, 
(iii) minimization of error value (not asymptotic or order-of-magnitude optimality), and 
(iv) without post-processing.
$L_p$-similar error metrics are natural choices for evaluating data utility~\cite{DBLP:conf/stoc/HardtT10,DBLP:conf/isit/GengV14,DBLP:conf/icde/WangXYZHSS019}.
Bounded numerical domains are common in real-world applications.
Focusing on error values allows for more precise comparisons between different mechanisms.
By excluding post-processing, we can analyze the optimality of the mechanism itself, 
which provides a more fundamental understanding than considering the mechanism combined with a specific post-processing.

Other types of optimality have been explored in the literature, particularly for variants of Laplace mechanisms. 
The staircase mechanism~\cite{DBLP:conf/isit/GengV14} adopts the same utility model without prior knowledge or post-processing as this paper. 
It claims optimality under specific assumptions, one of which is that a staircase (piecewise) distribution \emph{can} achieve the optimal error. 
The mechanism demonstrates better $L_1$-error performance than the Laplace mechanism on $\outdomain = (-\infty, \infty)$, 
and its asymptotic optimality has been formally proven.
\ \emph{Universal optimality} is another type of optimality, defined from the perspective of a user's prior knowledge and post-processing ability~\cite{DBLP:conf/stoc/GhoshRS09}.
In this utility model, the user observes the output of the mechanism and selects another value based on the output and their prior knowledge,
i.e. under a Bayesian utility framework.
Formally, if the user's prior is denoted as $p_i$ on the data domain $i \in N$ (i.e. a discrete domain) and the user's post-processing is represented as a remap $z_{i,j}$ that reinterprets 
the output of the mechanism (on the sensitive value $i$) to $j$, then the utility model is defined as
\begin{equation*}
    Err(i) = \sum_{i\in N} p_i \sum_{j\in N} z_{i,j} \cdot \loss(i, j).
\end{equation*}
This utility model incorporates the user's prior knowledge and post-processing ability.
A mechanism is called universally optimal if, for any prior $p_i$, there exists an optimal remap $z_{i,j}$.
Under this utility model, it was proven that the truncated geometric mechanism (a discretized version of the Laplace mechanism)
can achieve universal optimality for count queries\footnote{
    This is in the centralized DP setting, where the data curator holds the dataset and uses \emph{one} mechanism.
}
and a legal error metric $\loss(i,j)$.
Such universal optimality was shown to be unachievable for more complex queries~\cite{DBLP:journals/siamcomp/BrennerN14}.
Under the same utility model, the universal optimality was extended
to the truncated Laplace mechanism for a bounded numerical domain $\indomain = [0,1]$ 
by approximating the geometric mechanism with the Laplace mechanism and post-processing~\cite{DBLP:conf/lics/FernandesMM21}.

These results do not hold in our utility model, i.e. utility model without prior and post-processing.
Figure~\ref{fig:exp:staircase_truncated} has shown that OGPM generally has a smaller error than the truncated Laplace mechanism,
especially when the privacy parameter $\varepsilon$ is not small,
indicating the sub-optimality of the truncated Laplace mechanism in the absence of using prior and post-processing.

\subsection{Directions for Analytically Proving Optimal $m = 3$ (Section~\ref{subsec:closed_form})} \label{appendix:one_direction}

This appendix outlines two potential directions for analytically proving that the optimal $m$ is $3$, along with the challenges associated with each approach.

Mathematically, finding the optimal $m$-piecewise mechanism is equivalent to identifying the optimal $m$-piecewise distribution under an $L_p$-similar error metric. 
It is seemingly true that the optimal $m$ is $3$: 
if the optimal $m$-piecewise distribution is not $3$ but $4$ or more, 
we can always shift the probability mass from the two side intervals (i.e. other pieces) to the central interval, thereby reducing the error. 
At the very least, the following fact holds:

\begin{fact}
    The optimal $m$-piecewise distribution has a strict staircase shape, i.e. the probability density of the central interval is greater than that of the two side intervals.
    \label{fact:optimal_m}
\end{fact}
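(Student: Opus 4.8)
The plan is to turn Fact~\ref{fact:optimal_m} into a statement about a single linear functional on piecewise densities. By Lemma~\ref{lemma:endpoint} the worst-case error of any $m$-GPM on a classical domain is attained at an endpoint of $\indomain$ (respectively at $x=\pi$ in the circular case, by Lemma~\ref{lemma:lossmod}); fix that target point $x_0$. Then the optimal $m$-piecewise distribution at $x_0$ is an optimal $m$-piecewise probability density $P$ on $\outdomain$ minimizing $E(P)=\int_{\outdomain}\loss(y,x_0)\,P(y)\,\mathrm{d}y$ subject to $P\ge 0$, $\int_{\outdomain}P=1$, $P$ piecewise constant with at most $m$ pieces, and $\sup P/\inf P\le\exp(\varepsilon)$. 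The only property of $\loss$ used is that $\loss(y,x_0)=|y-x_0|^p$ is strictly increasing in $|y-x_0|$, so its sublevel sets are intervals around $x_0$. (If one drops the piece-count cap this is a plain infinite-dimensional LP whose optimum is the ``water-filling'' density — equal to $\bar p$ on an interval around $x_0$ and to $\bar p/\exp(\varepsilon)$ elsewhere, already a $2$- or $3$-piece strict staircase — attainable with $m\ge 3$ pieces; that route would even recover the $3$-piece form of Theorem~\ref{theo:optimal_concretization}, but making water-filling rigorous on a bounded domain and transporting it to the genuine two-stage min-max with $x$-independent $p_i$ is precisely what Hypothesis~\ref{theo:optimal_form} leaves open, which is why here I aim only at the weaker Fact.)

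\textbf{The exchange argument.} First I would show any optimal $P$ is unimodal around $x_0$: if two pieces $I_i,I_j$ of $P$ have $\mathrm{dist}(I_i,x_0)<\mathrm{dist}(I_j,x_0)$ yet $P|_{I_i}\le P|_{I_j}$, then, keeping all density \emph{values} fixed and only moving interval boundaries, one transports a positive amount of probability mass from the far piece $I_j$ toward the near piece $I_i$; since the moved mass lands on points with strictly smaller $\loss(\cdot,x_0)$, $E(P)$ strictly decreases, while the ratio bound is untouched and the piece count does not increase. Hence at an optimum densities are non-increasing in $|y-x_0|$ — the staircase shape. For strictness, suppose the maximal density $\bar p$ is realized both on the central piece and on an extreme piece; by unimodality $P$ is then constant on the convex hull of those two pieces, hence (in the endpoint case) uniform on all of $\outdomain$. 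Using $\varepsilon>0$, perturb: raise the density by a small $\delta$ on a short interval around $x_0$ and lower it by $\delta'$ on a complementary far interval, with $\delta,\delta'$ chosen so normalization is preserved; for $\delta,\delta'$ small the new density has $\sup/\inf<\exp(\varepsilon)$, uses at most three pieces (so it is admissible for every $m\ge 3$), and has strictly smaller $E(P)$ because mass moved toward $x_0$ — contradicting optimality. (When the central piece instead ties with a non-extreme piece, apply the same transfer between that piece and the one nearest $x_0$.) Thus the central density strictly exceeds the side densities, which, with Lemma~\ref{lemma:endpoint}/Lemma~\ref{lemma:lossmod}, is the claimed strict staircase shape of the optimal $m$-GPM.

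\textbf{Main obstacle.} The delicate part is not the idea but the feasibility bookkeeping of the exchange step: every perturbation must simultaneously preserve $\int P=1$, keep the whole output domain covered ($\sum$ piece lengths $=|\outdomain|$), respect $\sup P/\inf P\le\exp(\varepsilon)$, and not exceed $m$ pieces. Conserving mass and total length at once generically forces perturbing a triple of pieces rather than a pair, and pushing toward the ratio boundary interacts with which piece currently realizes $\inf P$; these are the same mass-transport manipulations that underlie the analytical deduction of Theorem~\ref{theo:optimal_concretization}, so I expect them to be routine but tedious. A minor point: if the minimizer is not unique, one runs the argument on an arbitrary minimizer, so the conclusion holds for every optimal distribution. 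Finally, unimodality plus strictness still does not pin the number of \emph{distinct} density values to two, which is exactly why this argument yields Fact~\ref{fact:optimal_m} but not Hypothesis~\ref{theo:optimal_form}.
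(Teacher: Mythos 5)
The paper itself does not give a formal proof of Fact~\ref{fact:optimal_m}; Appendix~\ref{appendix:one_direction} offers only the one-sentence sketch ``Moving pieces while keeping their probabilities unchanged clearly maintains both the $\varepsilon$-LDP constraint and the probability normalization constraint,'' illustrated by Figure~\ref{fig:appendix:direction}. Your proposal is a fleshed-out version of exactly that exchange idea, and your perturbation argument for strictness is a genuine addition the paper only asserts implicitly. So the route is the same; the question is whether your fleshing-out holds together.

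The weak spot is the one you flag yourself as the ``main obstacle,'' and it is avoidable rather than merely tedious. You phrase the exchange as \emph{keeping density values fixed and moving interval boundaries} so as to ``transport a positive amount of probability mass'' from the far piece $I_j$ to the near piece $I_i$. With $p_i < p_j$, moving mass $\delta$ this way means shrinking $I_j$ by $\delta/p_j$ and growing $I_i$ by $\delta/p_i$; these length changes differ, so either $\int P = 1$ or the total output-domain coverage $\sum_k (r_k-l_k)=|\outdomain|$ is violated, and you are forced into three-piece adjustments whose effect on the error is no longer a one-line monotonicity argument. The paper's sketch sidesteps this entirely: \emph{transpose whole pieces}, i.e.\ permute their positions while keeping \emph{both} each piece's length and its density fixed (sliding the intervening pieces rigidly to make room). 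A transposition preserves each piece's mass, hence normalization; preserves the multiset of lengths, hence total coverage; preserves the multiset of densities, hence $\sup P/\inf P$; and does not increase the piece count. Sorting pieces in decreasing density order outward from $x_0$ is then a finite Hardy--Littlewood rearrangement and strictly decreases $\int \loss(y,x_0)P(y)\,\mathrm{d}y$ whenever the original order is unsorted, because $\loss(\cdot,x_0)$ is strictly increasing in $|y-x_0|$. This gives the monotone staircase with no feasibility bookkeeping at all. Your second paragraph (small $\delta$-bump near $x_0$ balanced by a $\delta'$-dip far away, using $\varepsilon>0$ to keep the ratio strict) is the right way to exclude the flat case, and it complements the transposition step cleanly; I would keep it. In short: replace ``move boundaries, fix densities'' with ``permute pieces, fix lengths and densities,'' and the obstacle you worried about disappears, leaving a short and fully rigorous proof of the Fact --- though, as you correctly note, still not of Hypothesis~\ref{theo:optimal_form}.
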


\begin{figure}
    \centering
    \begin{subfigure}[b]{0.465\linewidth}
        \includegraphics[width=0.98\linewidth]{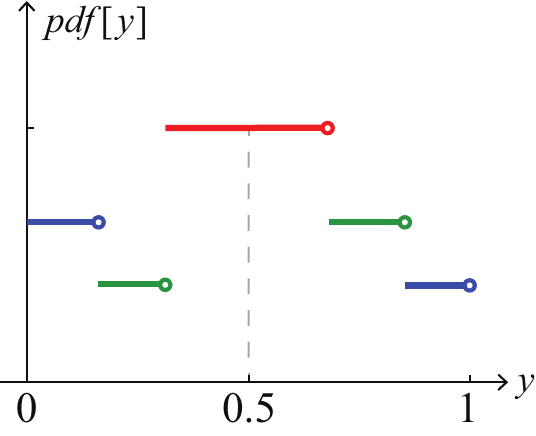}
        \caption{Non-staircase distribution.}
    \end{subfigure}
    \hfill
    \begin{subfigure}[b]{0.465\linewidth}
        \includegraphics[width=0.98\linewidth]{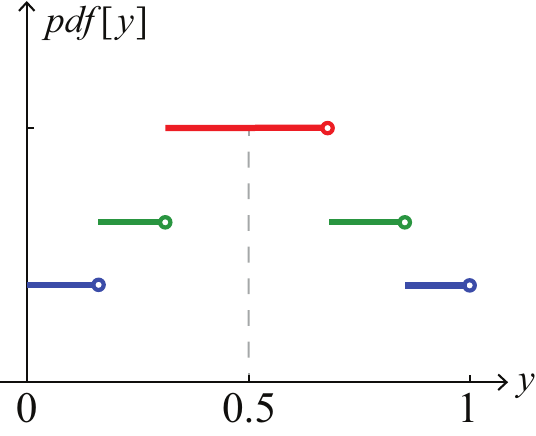}
        \caption{Staircase distribution.}
        \label{fig:appendix:direction_2}
    \end{subfigure}
    \caption{A non-staircase distribution (left) can always be shifted into a staircase distribution (right) by moving 
    some pieces closer to $x$, which reduces the error.}
    \label{fig:appendix:direction}
\end{figure}

Figure~\ref{fig:appendix:direction} illustrates this fact. Moving pieces while keeping their probabilities unchanged clearly maintains 
both the $\varepsilon$-LDP constraint and the probability normalization constraint. 
This observation reduces the problem to proving that a $3$-staircase distribution can achieve 
the same optimal error as a $4$-staircase distribution under the $\varepsilon$-LDP and probability normalization constraints.

\textbf{Direction 1:} If we can further move the green piece in Figure~\ref{fig:appendix:direction_2} ``into'' the red central piece while keeping 
the probabilities of the red and blue pieces unchanged, i.e. transform it into a $3$-staircase distribution while ensuring a decrease in the error,
then we can prove that the optimal $m$ is $3$.
However, this is challenging, as it breaks the probability normalization constraint (i.e. the sum of probabilities is no longer $1$),
requiring adjustments to the probabilities of each piece to satisfy the $\varepsilon$-LDP constraint.
The difficulty lies in ensuring that these adjustments will indeed decrease the error.

\textbf{Direction 2:} Another approach is to formulate the problems for $3$-staircase and $4$-staircase distributions as two constrained optimization problems. 
The goal would be to prove that the optimal error of the $3$-staircase distribution is equivalent to that of the $4$-staircase distribution. 
Ideally, these two multi-variable optimization problems could be solved analytically,
resulting in two closed-form error expressions w.r.t. $x$ and $\varepsilon$, thereby completing the proof for any $x$ and $\varepsilon$ by showing that the two expressions are equal.
This direction aligns with our framework. 
However, the challenge lies in the complexity of solving such multi-variable optimization problems analytically. 
This is why we rely on an off-the-shelf optimization solver, which, while effective, 
only provides numerical solutions for specific $x$ and $\varepsilon$ values.

\subsection{Analytical Deduction for the Optimal GPM (Section~\ref{subsec:closed_form})} \label{appendix:deduction}

If the optimal GPM under $\loss$ is proved to be TPM, then the closed-form optimal can be derived by deduction.

Denote $\indomain=[a,b)$ and $\outdomain = [\tilde{a},\tilde{b})$. 
Notice that the normalization constraint of probability is
\begin{equation*}
    (r - l) \cdot p + [(b - a) - (r - l)] \cdot p/\exp(\varepsilon) = 1.
\end{equation*}
This means the central interval length is
\vspace*{-0.3em}
\begin{equation*}
    s \coloneq r - l = \frac{1}{p(1-\exp(-\varepsilon))} - \frac{\tilde{b}-\tilde{a}}{\exp(\varepsilon)-1}.
\end{equation*}
If the minimal worst-case error is achieved at $x = a$ in Lemma~\ref{lemma:endpoint}, then
solving for the optimal $p$ is reduced to
\begin{equation*}
    \argmin_{p} \left( \int_{\tilde{a}}^{s} \loss(y, a)p\thinspace\mathrm{d}y + \int_{s}^{\tilde{b}} \loss(y, a)\frac{p}{\exp(\varepsilon)}\mathrm{d}y \right),
\end{equation*}
which is a univariate optimization problem w.r.t. $p$ and can be solved analytically.
With the solved $p$, Formulation~(\ref{equ:lr_i}) is also reduced to a univariate optimization
problem w.r.t. $l$:
\begin{equation*}
    \argmin_{l} \left(\int_{\tilde{a}}^{l} P_1 \thinspace \mathrm{d}y + \int_{l}^{l+s} P_2 \thinspace\mathrm{d}y + \int_{l+s}^{\tilde{b}} P_1 \thinspace \mathrm{d}y \right),
\end{equation*}
where $P_1 = \loss(y,x)p$ and $P_2 = \loss(y,x)p/\exp(\varepsilon)$.
This univariate optimization problem solves the optimal $l$, and the optimal $r$ is $r = l + s$.
Note that $l$ and $r$ should be restricted in $[\tilde{a},\tilde{b})$ when analyzing the first-order derivative.

\subsection{MSE of the Optimal GPM (Section~\ref{anchor:classical_mse})} \label{appendix:classical_mse}

Denote $p_\varepsilon$ and $C$ as the same as the instantiations of $\mechanism$ in Theorem~\ref{theo:optimal_concretization}.
The MSE of $\mechanism$ is
\begin{enumerate}[leftmargin=*, labelindent=0.1pt]
    \item If $x \in [0,C)$:
    \begin{equation*}
        \frac{p_\varepsilon}{3} \left((2C-x)^3 + x^3\right) + \frac{p_\varepsilon}{3\exp(\varepsilon)} \left((1-x)^3 - (2C-x)^3\right).
    \end{equation*}
    \item If $x \in [C,1-C)$:
    \begin{equation*}
        \frac{p_\varepsilon}{3\exp(\varepsilon)} \left(-2C^3 + 3x^2 -3x +1\right) +  \frac{p_\varepsilon}{3} \left(2C^3\right).
    \end{equation*}
    \item If $x \in [1-C,1)$:
    \begin{equation*}
        \frac{p_\varepsilon}{3\exp(\varepsilon)} \left((1-2C-x)^3 + x^3\right) + \frac{p_\varepsilon}{3} \left((1-x)^3 - (1-2C-x)^3\right).
    \end{equation*}
\end{enumerate}
For example, when $x = 0$, the MSE of $\mechanism$ is
\begin{equation*}
    \mathrm{MSE}[\mechanism(0)] = \frac{p_\varepsilon}{3} \left(8C^3\right) + \frac{p_\varepsilon}{3\exp(\varepsilon)} \left(1 - 8C^3\right).
\end{equation*}
Setting $\varepsilon = 1$ results in $\mathrm{MSE}[\mechanism(0)] = 0.22$ of OGPM.
As a comparison, SW~\cite{DBLP:conf/sigmod/Li0LLS20}, which also designed for $\indomain=[0,1)$, 
has an MSE of $0.29$ at $x = 0$.

\begin{figure}
    \centering
    \begin{subfigure}[b]{0.465\linewidth}
        \includegraphics[width=0.98\linewidth]{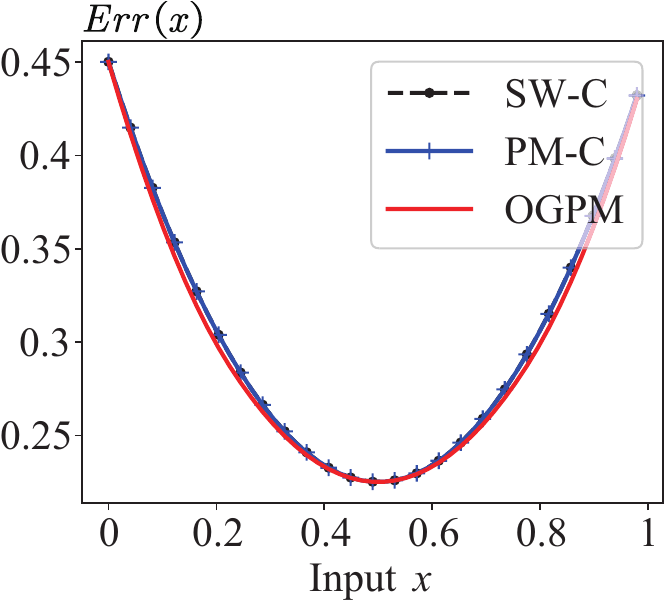}
        \caption{Privacy parameter $\varepsilon = 0.4$.}
    \end{subfigure}
    \hfill
    \begin{subfigure}[b]{0.465\linewidth}
        \includegraphics[width=0.98\linewidth]{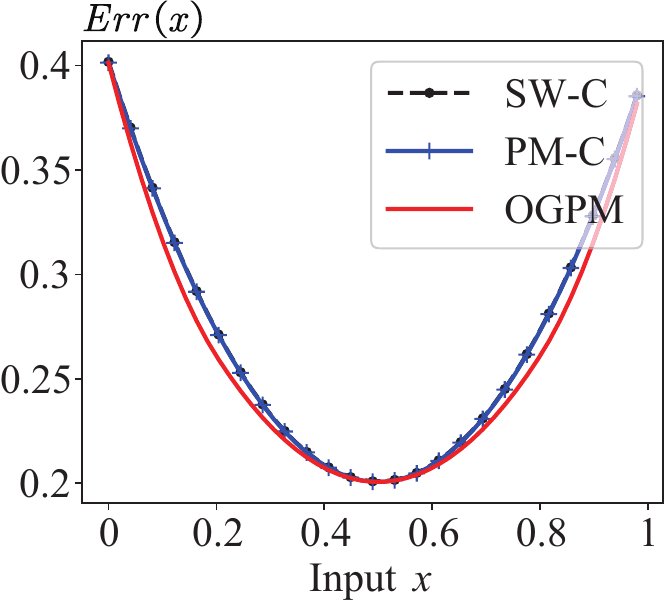}
        \caption{Privacy parameter $\varepsilon = 0.8$.}
    \end{subfigure}
    \caption{Whole-domain error comparison in the classical domain with error metric $\loss = |y-x|$.}
    \label{fig:appendix:small_epsilon}
\end{figure}

\subsection{Optimal Assignment of Privacy Parameter and an Example (Section~\ref{sec:polar})} \label{appendix:polar_assignment_epsilon}

The objective function to minimize the given total error in 2D polar coordinates is
\begin{equation*}
    \min_{\varepsilon_1, \varepsilon_2} Err_{\text{1,wor}}(\varepsilon_1) + Err_{\text{2,wor}}(\varepsilon_2),
\end{equation*}
where $Err_{\text{1,wor}}(\varepsilon_1)$ and $Err_{\text{2,wor}}(\varepsilon_2)$ are the worst-case errors of the classical domain and the circular domain, respectively.

Without loss of generality, we can assume the polar coordinate data is in $[0, 1) \times [0, 2\pi)$ and $\loss = |y_1-x_1|^2$, 
then $Err_{\text{1,wor}}(\varepsilon_1)$ and $Err_{\text{2,wor}}(\varepsilon_2)$ are already given by our MSE analysis.
However, the above optimization problem as a function of $\varepsilon_1$ and $\varepsilon_2$ 
is generally non-linear, thus hard to be analytically solved.
Therefore, a simple and practical way to find the optimal $\varepsilon_1$ and $\varepsilon_2$ is numerical testing.

Under distance metric $\loss(y,x) = |y-x|^2$, the worst-case error of
\begin{wrapfigure}[9]{r}{3.2cm}
    \centering
    \includegraphics[width=0.99\linewidth]{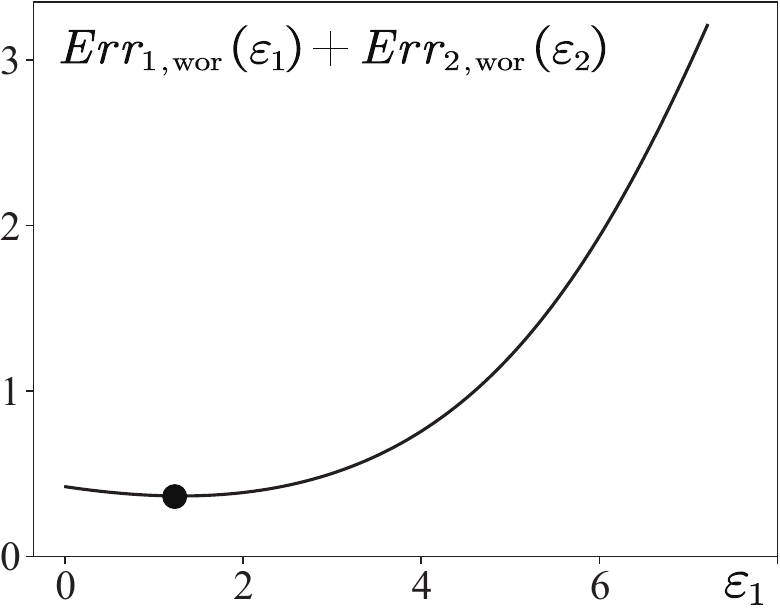}
    \label{fig:circular_lemma}
\end{wrapfigure}
the classical domain $[0,1)$ is achieved at $x = 0$. Therefore, $Err_{\text{1,wor}}$ equals to $\mathrm{MSE}[\mechanism(0)]$ calculated before.
For the circular domain $[0,2\pi)$, the worst-case error $Err_{\text{2,wor}}$ is stated in Theorem~\ref{theo:circular_mse}.
\ If $\varepsilon = 1 + 2\pi$ and we assign $\varepsilon_1$ to the classical domain and $\varepsilon_2 = \varepsilon-\varepsilon_1$ to the circular domain, 
then the total error is plotted in the right figure.
In this figure, the optimal assignment is $\varepsilon_1 = 1.32$ and $\varepsilon_2 = 5.69$.

From the curve of the total error, we can see that $\varepsilon_2$ affects the total error more than $\varepsilon_1$.
Even if $\varepsilon_1$ is set to $0$, the total error is not significantly affected, and it is
still  is significantly smaller than the case of $\varepsilon_2 = 0$.
This is because the circular domain has a larger range than the classical domain, 
thus the error of the circular domain is more sensitive to the privacy parameter.

Note that the optimality for the polar coordinate data is under the specific error metric $\loss_{\text{2D}} \coloneq \loss(y_1, x_1) + \lossmod(y_2, x_2)$.
If the error metric differs, the optimal error might not be preserved.

\subsection{Comparison under Small $\varepsilon$ (Section~\ref{subsec:comparison_with_PM_SW})} \label{appendix:smaller_epsilon}

Figure~\ref{fig:appendix:small_epsilon} presents the whole-domain error comparison of OGPM, PM-C, and SW-C under smaller $\varepsilon$ values, 
specifically $\varepsilon = 0.4$ and $\varepsilon = 0.8$. 
In these scenarios, all three mechanisms approach the uniform distribution more closely compared to cases with larger $\varepsilon$. 
Consequently, their errors are also more similar to each other. 
Statistically, when $\varepsilon = 0.4$, the error of OGPM is at most $0.008$ smaller than that of PM-C and SW-C. 
For $\varepsilon = 0.8$, the error of OGPM is at most $0.015$ smaller than that of PM-C and SW-C.

\subsection{Expected Error of the B-Laplace Mechanism (Section~\ref{subsec:staircase_truncated})} \label{appendix:b_laplace}

The B-Laplace mechanism redefines a Laplace-shaped distribution on a bounded domain as the perturbation mechanism. 
For the data domain $\indomain \to \outdomain = [0,1) \to [0,1)$, the B-Laplace mechanism is defined as follows:

\begin{definition}[Bounded Laplace Mechanism, adapted from~\cite{DBLP:journals/jpc/HolohanABA20}]
    The B-Laplace mechanism $\mechanism(x):[0,1) \to [0,1)$ is given by the probability density function (PDF) as follows:
    \begin{equation*}
        pdf[\mathcal{M}(x) = y] = \frac{1}{C_y}\cdot \frac{1}{2b}\exp\left(-\frac{|y-x|}{b}\right) \quad \forall y \in [0,1),
    \end{equation*}
    where $b$ is the scale parameter, and $C_y = \int_{0}^{1} \frac{1}{2b}\exp\left(-\frac{|y-x|}{b}\right) \mathrm{d}x$ is the normalization constant.
\end{definition}
According to Theorem 3.5 and Corollary 4.5 in~\cite{DBLP:journals/jpc/HolohanABA20}, the B-Laplace mechanism satisfies $\varepsilon$-LDP whenever $b \geq 1/\varepsilon$.
Using the best scale parameter $b = 1/\varepsilon$, the normalization constant becomes $C_y = (1 - \exp(-\varepsilon)) / 2$.
We can compute the expected $L_1$ error of the B-Laplace mechanism as follows (this computation is not included in~\cite{DBLP:journals/jpc/HolohanABA20}):
\begin{equation*}
    \begin{split}
        Err(x, \mechanism) &= \int_{0}^{1} |y-x| \cdot pdf[\mathcal{M}(x) = y] \mathrm{d}y \\
        &= \int_{0}^{1} |y-x| \cdot \frac{1}{C_y}\cdot \frac{1}{2b}\exp\left(-\frac{|y-x|}{b}\right) \mathrm{d}y \\
        &= \frac{\varepsilon}{1 - \exp(-\varepsilon)} \int_{0}^{1} |y-x| \exp\left(-\varepsilon|y-x|\right) \mathrm{d}y.
    \end{split}
\end{equation*}
The above integral can be numerically computed using the Python library function \verb|scipy.stats.laplace.expect()| or analytically solved. 
The final result for the expected error is
\begin{equation*}
    \frac{2 - (1 + \varepsilon x)e^{-\varepsilon x} - (1 + \varepsilon (1 - x)) e^{-\varepsilon(1-x)}}{\varepsilon(1 - e^{-\varepsilon})},
\end{equation*}
which is a closed-form expression w.r.t. $x$ and $\varepsilon$.

\end{document}